\newcounter{module}
\newenvironment{module}[1][htb]
  {
   \let\c@algocf\c@module
   \begin{algorithm}[#1]%
  }{\end{algorithm}}
\renewcommand\footnotetextcopyrightpermission[1]{} 
\newcommand{\pvec}[1]{\vec{#1}\mkern2mu\vphantom{#1}}
\newcommand{\D}{\mathrm{\textbf{D}}}
\newcommand{\tet}{\vec{\theta}}
\newcommand{\A}{\mathrm{\textbf{A}}}
\newcommand{\rank}{\mathrm{rank}}
\newcommand{\nul}{\mathrm{Null}}
\newcommand{\supp}{\mathrm{supp}}
\newcommand{\intt}{\mathrm{int}}
\newcommand{\W}{\mathrm{\textbf{W}}}
\newcommand{\I}{\mathrm{\textbf{I}}}
\newcommand{\diag}{\mathrm{diag}}
\newcommand{\cl}{\mathrm{cl}}
\begin{document}
\title{REACT to Cyber Attacks on Power Grids}
\author{Saleh Soltan}
\affiliation{Electrical Engineering, Princeton University, New Jersey, NJ}
\email{ssoltan@princeton.edu}
\author{Mihalis Yannakakis}
\affiliation{Computer Science, Columbia University, New York, NY}
\email{mihalis@cs.columbia.edu}
\author{Gil Zussman}
\affiliation{Electrical Engineering, Columbia University, New York, NY}
\email{gil@ee.columbia.edu}

\setlength{\textfloatsep}{4 pt}
\begin{abstract}
Motivated by the recent cyber attack on the Ukrainian power grid, we study cyber attacks on power grids that affect both the physical infrastructure and the data at the control center. In particular, we assume that an adversary attacks an area by: (i) remotely disconnecting some lines within the attacked area, and (ii) modifying the information received from the attacked area to mask the line failures and hide the attacked area from the control center. For the latter, we consider two types of attacks: (i) \emph{data distortion:} which distorts the data by adding powerful noise to the actual data, and (ii) \emph{data replay:} which replays a locally consistent old data instead of the actual data. We use the DC power flow model and prove that the problem of finding the set of line failures given the phase angles of the nodes outside of the attacked area is strongly NP-hard, even when the attacked area is known. However, we introduce the polynomial time REcurrent Attack Containment and deTection (REACT) Algorithm to approximately detect the attacked area and line failures after a cyber attack. We numerically show that it performs very well in detecting the attacked area, and detecting single, double, and triple line failures in small and large attacked areas.

\end{abstract}

\keywords{Power Grids; Cyber Attacks; False Data Injection; Line Failures Detection; Graph Theory; Algorithms}
\settopmatter{printfolios=true} 
\maketitle
\section{Introduction}\label{sec:intro}

Due to their complexity and magnitude, modern infrastructure networks need to be monitored and controlled using computer systems. These computer systems are vulnerable to cyber attacks~\cite{radics}.  One of the most important infrastructure networks that is vulnerable to cyber attacks is the power grid which is monitored and controlled by the Supervisory  Control And Data Acquisition (SCADA) system.

In a recent cyber attack on the Ukrainian power grid~\cite{UkraineBlackout}, the attackers stole credentials for accessing the SCADA system and used it to cause a large scale blackout affecting hundred thousands of people. In particular, they simultaneously operated several of the circuit breakers in the grid and jammed the phone lines to keep the system operators unaware of the situation~\cite{UkraineBlackout}.

\begin{figure}[t]
\centering
\vspace*{0.2cm}
\includegraphics[scale=0.3]{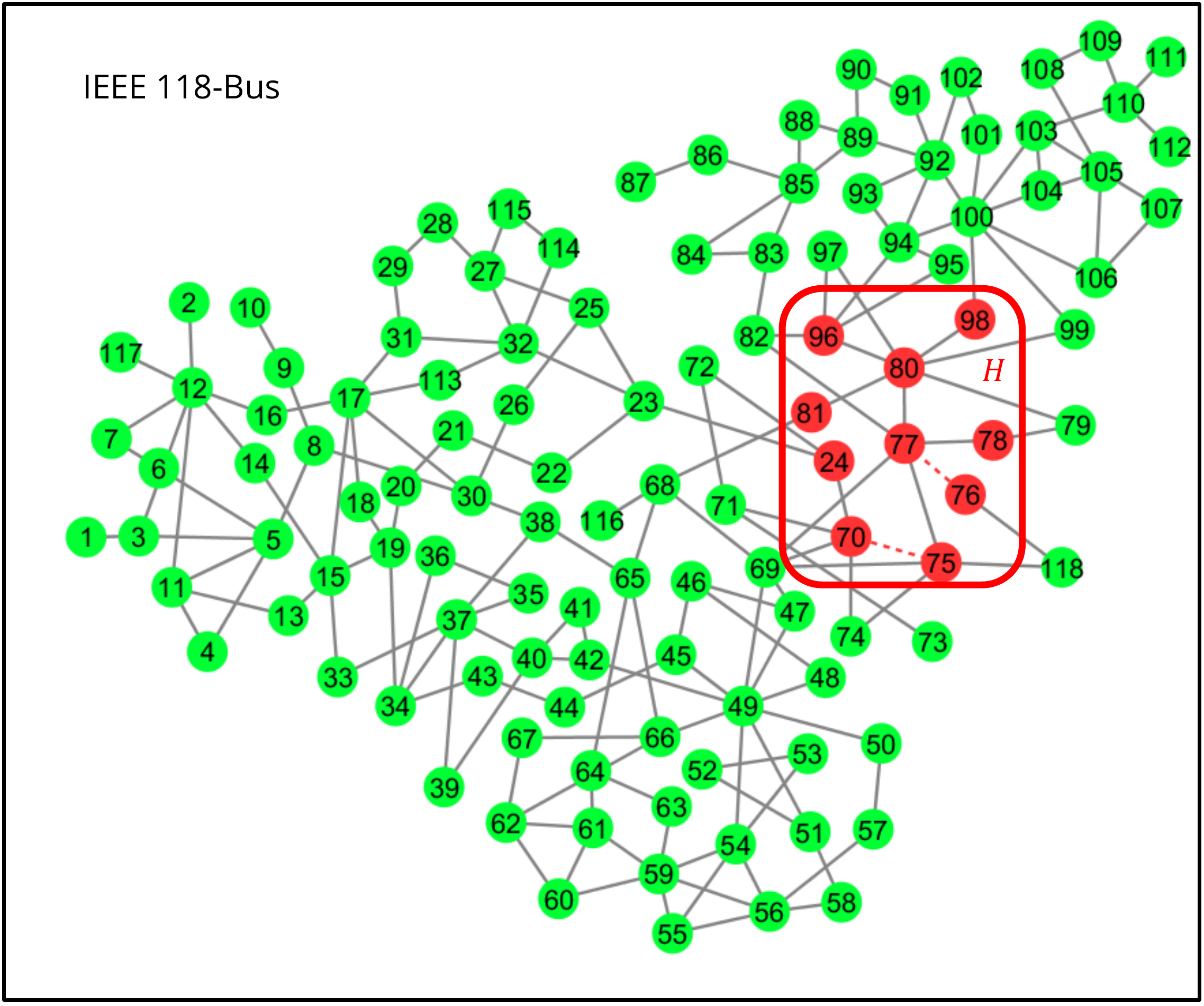}
\vspace*{-0.2cm}
\caption{The attack model.
An adversary attacks an area $H$ which is unknown to the control center (represented by red nodes) by disconnecting some lines within the attacked area (shown by red dashed lines) and modifying the information received from the attacked area to mask the line failures and hide the attacked area from the control center.}
\label{fig:attack_model}
\end{figure}

 Motivated by the Ukraine event, in this paper, we deploy the DC power flow model and study a model of a cyber attack on the power grid that affects both the physical infrastructure and the data at the control center. As illustrated in Fig.~\ref{fig:attack_model}, we assume that an adversary attacks an area by: (i) disconnecting some lines within the attacked area (by remotely activating the circuit breakers), and (ii) modifying the information (phase angles of the nodes and status of the lines) received from the attacked area  to mask the line failures and hide the attacked area from the control center. For the latter, we consider two types of attacks: (i) \emph{data distortion:} which distorts the data by adding powerful noise to the data received from the attacked area, and (ii) \emph{data replay:} which replays a locally consistent old data instead of the actual data. 

We prove that the problem of finding the set of line failures given the phase angles of the nodes outside of the attacked area is strongly NP-hard, \emph{even when the attacked area is known}. Hence, one cannot expect to develop a polynomial time algorithm that can exactly detect the attacked area and recover the information for all possible attack scenarios. However, we introduce the polynomial time REcurrent Attack Containment and deTection (REACT) Algorithm and numerically show that it performs very well in reasonable scenarios.

In particular, we first introduce the ATtacked Area Containment (ATAC) Module for approximately detecting the attacked area using graph theory and the algebraic properties of the DC power flow equations. We show that the ATAC Module can always provide an area containing the attacked area after a data distortion or a data replay attack. We further provide tools to improve the accuracy of the approximated attacked areas obtained by the ATAC Module under different data attack types.

Then, we introduce the randomized LIne Failures Detection (LIFD) Module to detect the line failures and recover the phase angles inside the detected attacked area. The LIFD Module builds upon the methods first introduced in~\cite{SYZ2015}, to detect line failures using Linear Programming (LP) in more general cases. In particular, we prove that in some cases that the methods in~\cite{SYZ2015} fail to detect line failures, the LIFD Module can successfully detect line failures in expected polynomial running time.  


Finally, the REACT Algorithm combines the ATAC and LIFD Modules to provide a comprehensive algorithm for attacked area detection and information recovery following a cyber attack. We evaluate the performance of the REACT Algorithm by considering two attacked areas, one with 15 nodes and the other one with 31 nodes within the IEEE 300-bus system~\cite{IEEEtestcase}. We show that when the attacked area is small, the REACT Algorithm performs equally well after the data distortion and the data replay attacks. In particular, it can exactly detect the attacked area in all the cases, and accurately detect single, double, and triple line failures  within the attacked area in more than 80\% of the cases.

When the attacked area is large, however, the REACT Algorithm's performance is different after the data distortion and the data replay attacks. It still performs very well in detecting the attacked area after a data distortion attack and accurately detects line failures after single, double, and triple line failures in more than 60\% of the cases. However, it may face difficulties providing an accurate approximation of the attacked area after a replay attack. Despite these difficulties in approximating the attacked area, it accurately detects single and double line failures in around 98\% and 60\% of the cases, respectively.

The main contributions of this paper are two folds: (i) analyzing the computational complexity of the attacked area detection and information recovery problem after a cyber attack on the grid, and (ii) introducing a polynomial time algorithm to address this problem and numerically evaluating its performance. To the best of our knowledge, \emph{this is the first attempt to develop recovery algorithms for the attacks in which the data from the area is modified and therefore the attacked area in unknown.} 
\section{Related Work}\label{sec:related}

The vulnerability of general networks to attacks was thoroughly studied in the past (e.g., \cite{albert2000error,phillips1993network,Kleinberg2004NFD} and references therein). In particular,
\cite{Ciavarella2017Progressive,tootaghaj2017network} studied a problem similar to the one studied in this paper (failure detection from partial observations) in the context of communication networks.


Vulnerability of power grids to failures and attacks was extensively studied~\cite{nesti2016reliability,pinar_power,kim2016analyzing,liu2014distributed,Bern2012ACM,Dobson,bienstock2016electrical,SYZ2015}. In particular false data injection attacks on power grids and anomaly detection were studied using the DC power flows in~\cite{kim2013topology,liu2011false,dan2010stealth,vukovic2011network,li2015quickest,kim2015subspace}. These studies focused on the observability of the failures and attacks in the grid. 


The problem studied in this paper is related to the problem of line failures detection using phase angle measurements~\cite{tate2008line,tate2009double,garcia2016line,zhu2012sparse,SYZ2015}. Up to two line failures detection, under the DC power flow model, was studied in~\cite{tate2008line,tate2009double}. Since the methods developed in~\cite{tate2008line,tate2009double} are greedy-based that need to search the entire failure space, their the running time grows exponentially as the number of failures increases. Hence, these methods cannot be generalized to detect higher order failures.  Similar greedy approaches with likelihood detection functions  were studied in~\cite{manousakis2012taxonomy,Khandeparkar2014Eff,zhao2012pmu,zhao2014identification,zhu2014phasor} to address the PMU placement problem under the DC power flow model. 

 The problem of line failures detection in an internal system using the information from an external system was also studied in~\cite{zhu2012sparse} based on the DC power flow model. The proposed algorithm works for only one and two line failures, since it depends on the sparsity of line failures. 
In a recent work~\cite{garcia2016line}, a linear multinomial regression model was proposed as a classifier for a single line failure detection using transient voltage phase angles data. Due to the time complexity of the learning process for multiple line failures, this method is impractical for detecting higher order failures. 

In~\cite{SYZ2015}, attack scenarios similar to the one in this paper was studied. However, \cite{SYZ2015} only focused on the attacks that blocked the information from the attacked area, and therefore, the attacked area was detectable simply by checking the missing data. In this work, we build upon the results of~\cite{SYZ2015} to detect line failures in more general data attack cases than the ones considered in~\cite{SYZ2015}. In a recent work~\cite{soltan2017power}, the methods provided in~\cite{SYZ2015} were extended to function under the AC power flow model.

Finally, in a recent series of works, the vulnerability of power grids to undetectable cyber-physical attacks is studied~\cite{li2016bilevel,deng2017ccpa,zhang2016physical} using the DC power flows. These studies are mainly focused on designing attacks that affect the entire grid and therefore may be impossible to detect. 


\section{Model and Definitions}\label{sec:Model}
\subsection{DC Power Flow Model}
\label{ssec:flow-model}
We adopt the linearized DC power flow model, which is widely used as an approximation for the non-linear AC power flow model~\cite{bergenvittal}. The notation is summarized in Table~\ref{tb:notatition}.
In particular, we represent the power grid by a connected undirected graph $G=(V,E)$ where $V=\{1,2,\dots,n\}$ and $E=\{e_1,\dots,e_m\}$ are the set of nodes and edges corresponding to the buses and transmission lines, respectively. Each edge $e_i$ is a set of two nodes $e_i=\{u,v\}$.
$p_v$ is the active power \emph{supply} ($p_v>0$) or \emph{demand} ($p_v<0$) at node $v\in V$ (for a \emph{neutral node} $p_v=0$).
We assume \emph{pure reactive} lines, implying that each edge $\{u,v\} \in E$ is characterized by its \emph{reactance} $r_{uv}=r_{vu}$.

Given the power supply/demand vector $\vec{p}\in \mathbb{R}^{|V|\times1}$ and the reactance values, a \emph{power flow} is a solution $\mathrm{\textbf{P}}\in \mathbb{R}^{|V|\times|V|}$ and $ \tet\in\mathbb{R}^{|V|\times 1}$ of:
\begin{eqnarray}
\label{eqn:flow1}&&\sum_{v \in N(u)}p_{uv} = p_u, \ \forall~ u \in V \\
\label{eqn:flow2}&&\theta_u - \theta_v - r_{uv}p_{uv} = 0, \ \forall~ \{u,v\} \in E
\end{eqnarray}
where $N(u)$ is the set of neighbors of node $u$, $p_{uv}$ is the power flow from node $u$ to node $v$, and $\theta_u$ is the phase angle of node $u$.
Eq.~(\ref{eqn:flow1}) guarantees (classical) flow conservation and (\ref{eqn:flow2}) captures the dependency of the flow on the reactance values and phase angles. Additionally, (\ref{eqn:flow2}) implies that $p_{uv}=-p_{vu}$.
When the total supply equals the total demand in each connected component of $G$, (\ref{eqn:flow1})-(\ref{eqn:flow2}) has a unique solution $\textbf{P}$ and $\tet$ up to a shift (since shifting all $\theta_u$s by equal amounts does not violate (\ref{eqn:flow2})). 
Eqs.(\ref{eqn:flow1})-(\ref{eqn:flow2}) are equivalent to the following matrix equation:
\begin{equation}\label{eqn:flowmatrix}
\textbf{A}\vec{\theta}=\vec{p}
\end{equation}
where $\textbf{A}\in \mathbb{R}^{|V|\times |V|}$ is the \textit{admittance matrix} of $G$,\footnote{The matrix $\textbf{A}$ can also be considered as the \emph{weighted Laplacian matrix} of the graph.} defined as:
\begin{equation*}\label{def: plus}
a_{uv}=
\begin{cases}
0&\text{if}~ u\neq v~\text{and}~\{u,v\}\notin E,\\
-1/r_{uv}&\text{if}~u\neq v~\text{and}~\{u,v\}\in E,\\
-\sum_{w\in N(u)} a_{uw}&\text{if}~u=v.
\end{cases}
\end{equation*}
Note that in power grids nodes can be connected by multiple edges,  and therefore, if there are $k$ multiple lines between nodes $u$ and $v$, $a_{uv}=-\sum_{i=1}^{k}1/r_{uv_i}$. Once $\vec{\theta}$ is computed, the flows, $p_{uv}$, can be obtained from~(\ref{eqn:flow2}). \\
\textbf{Notation.} Throughout this paper we use bold uppercase characters to denote matrices (e.g., $\textbf{A}$), italic uppercase characters to denote sets (e.g., $V$), and italic lowercase characters and overline arrow to denote column vectors (e.g., $\vec{\theta}$). For a matrix $\textbf{Q}$, $\textbf{Q}_i$ denotes its $i^{th}$ row, and $q_{ij}$ denotes its $(i,j)^{\text{th}}$ entry. For a column vector $\vec{y}$, $\pvec{y}^T$ denote its transpose, $y_i$ denotes its $i^{\text{th}}$ entry, $\|\vec{y}\|_1:=\sum_{i=1}^n |y_i|$ is its $l_1$-norm, and $\supp(\vec{y}):=\{i|y_i\neq0\}$ is its support.

\begin{table}[t]
\caption{Summary of notation.}
\vspace*{-0.2cm}
\centering
\begin{tabular}{|c|l|}
\hline
\textbf{Notation}&\textbf{Description}\\
\hline
$G=(V,E)$& The graph representing the power grid\\
\hline
$\A$ & Admittance matrix of $G$\\
\hline
$\tet$ & Vector of the phase angles of the nodes in $G$\\
\hline
$\vec{p}$ & Vector of power supply/demand values\\
\hline
$H$ & A subgraph of $G$ representing the attacked area\\
\hline
$F$ & Set of failed edges due to an attack\\
\hline
$\D$ & Incidence matrix of $G$\\
\hline
$N(i)$& Set of neighbors of node $i$\\
\hline
$N(S)$& Set of neighbors of subgraph $S$\\
\hline
$\intt(S)$ &Interior of the subgraph $S$\\
\hline
$\partial(S)$& Boundary of the subgraph $S$\\
\hline
$\cl(S)$& Closure of the subgraph $S$\\
\hline
$\bigcirc'$& The actual value of $\bigcirc$ after an attack\\
\hline
$\bigcirc^{\star}$& The observed value of $\bigcirc$ after an attack\\
\hline
$\overline{\bigcirc}$& The complement of $\bigcirc$\\
\hline
\end{tabular}\label{tb:notatition}
\end{table}
\subsection{Attack Model}\label{subsec:attack}
We study a cyber attack on the power grid that affects both its physical infrastructure and the data at its control center. We assume that an adversary attacks an area by: (i) disconnecting some lines within the attacked area (by remotely activating the circuit breakers), and (ii) modifying the information (phase angle of the nodes and status of the lines) received from the attacked area  to mask the line failures and hide the attacked area from the control center. 
We assume that supply/demand values do not change after the attack and disconnecting lines within the attacked area does not make $G$ disconnected. However, the developed methods in this paper can also be used when these conditions do not hold, if the control center is aware of the changes in the supply/demand values after the attack and in the case of the grid separation.

Fig.~\ref{fig:attack_model} shows an example of such an attack on the area represented by $H=(V_H, E_H)$. Due to the attack, some edges are disconnected (we refer to these edges as \emph{failed lines})  and the phase angles and the status of the lines within the \emph{attacked area} are modified. We denote the set of failed lines in area $H$ by $F\subseteq E_H$. Upon failure, the failed lines are removed from the graph and the flows are redistributed according to (\ref{eqn:flow1})-(\ref{eqn:flow2}). The objective is to detect the attacked area and the failed lines after the attack using the observed phase angles.

The vectors of phase angle of the nodes in $H$ and in its complement $\bar{H} = G\backslash H$ are denoted by $\tet_H$ and $\tet_{\bar{H}}$, respectively. We use the prime symbol $(')$ to denote the actual values after an attack. For instance, $G'$, $\A'$, and $\vec{\theta}'$ are used to represent the graph, the admittance matrix of the graph, and the actual phase angles after the attack. Based on our assumptions $\vec{p}=\A\tet =\A'\tet'=\pvec{p}'$. 

We also use $\tet^{\star}$ to denote the observed phase angles after the attack. According to the attack model $\tet_H^{\star}$ is modified and is not necessarily equal to $\tet_H'$.
We assume that the attacker performs any of the following two types of data attacks:
\vspace*{-0.1cm}
\begin{enumerate}
\item \textbf{Data distortion:} We assume $\tet_H^{\star}=\tet_H'+\vec{z}$ for a random vector $\vec{z}$ with an arbitrary distribution with no positive probability mass in any proper linear subspace (e.g., multivariate Gaussian distribution).
\vspace*{-0.1cm}
\item \textbf{Data replay:} We assume $\tet_H^{\star}=\tet_H''$ such that $\tet''$ satisfies $\A\tet'' =\pvec{p}''$ for an arbitrary power supply/demand vector $\pvec{p}''$ such that $\pvec{p}_H'' =\pvec{p}_H$. We assume that $\pvec{p}_{\bar{H}}''$ is selected generally enough and is only known to the attacker. $\pvec{p}''$ can be considered as the vector of supply/demand values from previous hours or days.
\end{enumerate}
\vspace*{-0.1cm}
 Notice that adversarial modification of the reported phase angles in $H$ is not in the scope of this paper and is an interesting problem on its own.


\textbf{Notation.} 
Without loss of generality we assume that the indices are such that $V_H=\{1,2,\dots,|V_H|\}$ and $E_H=\{e_1,e_2,\dots,e_{|E_H|}\}$. 
If $X,Y$ are two subgraphs of $G$, $\A_{X|Y}$ and $\A_{V_X|V_Y}$ both denote the submatrix of the admittance matrix of $G$ with rows from $V_X$ and columns from $V_Y$. For instance, $\A$ can be written in any of the following forms,
\begin{equation*}
\A=
\begin{bmatrix}
\A_{H|H}&\A_{H|\bar{H}}\\
\A_{\bar{H}|H}&\A_{\bar{H}|\bar{H}}
\end{bmatrix}
, \A=
\begin{bmatrix}
\A_{G|H}&\A_{G|\bar{H}}
\end{bmatrix}
, \A=
\begin{bmatrix}
\A_{H|G}\\
\A_{\bar{H}|G}
\end{bmatrix}.
\end{equation*}


\subsection{Graph Theoretical Terms}\label{subsec:graph}
In this paper, we use some graph theoretical terms most of which are borrowed from~\cite{bondy2008graph}.

\noindent\textbf{Subgraphs:} Let $X$ be a subset of the nodes of a graph $G$. $G[X]$ denotes the subgraph of $G$ \emph{induced} by $X$. 
We denote the complement of a set $X$ by $\bar{X}=V\backslash X$.

The \emph{neighbors}, \emph{interior}, \emph{boundary}, and \emph{closure} of a subgraph $S$ are defined and denoted by $N(S):=\{i\in V\backslash V_S|\exists j\in V_S: i\in N(j) \}$, $\intt(S):=\{i\in V_S|N(i)\subseteq V_S\}$, $\partial(S):=\{i\in V_S|N(i)\cap V_{\bar{S}}\neq \emptyset\}$, and $\cl(S):=V_S\cup N(S)$, respectively.



%

\noindent\textbf{Incidence Matrix:} Assign arbitrary directions to the edges of $G$. 
The (node-edge) \emph{incidence matrix} of $G$ is denoted by $\D\in\{-1,0,1\}^{|V|\times|E|}$ and is defined as follows,
\begin{equation*}
d_{ij}=
\begin{cases}
0&\text{if}~e_j~\text{is not incident to node}~i,\\
1&\text{if}~e_j~\text{is coming out of node}~i,\\
-1&\text{if}~e_j~\text{is going into node}~i.\\
\end{cases}
\end{equation*}
When we use the incidence matrix, we assume an arbitrary orientation for the edges unless we mention an specific orientation. $\D_H\in\{-1,0,1\}^{|V_H|\times|E_H|}$ is the submatrix of $\D$ with rows from $V_H$ and columns from $E_H$. 
\section{Hardness}\label{sec:hardness}
Using the notation provided in the previous section, the problem considered in this paper can be stated as follows: Given $\A, \tet,$ and $\tet^{\star}$, detect the attacked area $H$ and the set of line failures $F$. In this section, we study the computational complexity of this and related problems. To study the computational complexity of this problem, we consider a more general case of $\tet^{\star}_H$ without any assumptions on the type of the data attack.

First, we prove that the problem of finding the set of line failures ($F$) solely based on the given the phase angles of the nodes before ($\tet$) and after the attack ($\tet'$) is NP-hard. We prove this by reduction from the \emph{3-partition problem}.
\begin{definition}\label{def:3-partition}
Given a set $S=\{s_1,s_2,\dots,s_{3k}\}$ of $3k$ elements and a bound $B$, such that $\sum_{i=1}^{3k} s_i=k B$ and for $1\leq i\leq 3k$, $B/4<s_i<B/2$, the \emph{3-partition problem} is the problem of whether $S$ can be partitioned into $k$ disjoint sets $S_1,\dots,S_k$ such that for $1\leq i\leq k$, $\sum_{s_j\in S_i} s_j = B$ (note that each $S_i$ must therefore contain exactly 3 elements from $S$).
\end{definition}
\begin{lemma}[Garey and Johnson\cite{garey1979computers}]\label{lem:3-partition}
The 3-partition problem is strongly NP-complete.
\end{lemma}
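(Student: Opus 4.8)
The plan is to prove the lemma by a single polynomial-time many-one reduction, taking special care that all integers produced are bounded by a polynomial in the instance size, since this is exactly the extra requirement that upgrades ordinary NP-completeness to \emph{strong} NP-completeness. Membership in NP is immediate: a certificate is the partition $S_1,\dots,S_k$ itself, and verifying that each $S_i$ contains three elements summing to $B$ takes polynomial time. For hardness I would reduce from \emph{3-Dimensional Matching} (3DM), which is known to be NP-complete and, crucially, is \emph{number-free}: its instance size is governed by the number of elements and triples, not by the magnitude of any encoded integer, so a reduction that outputs polynomially-bounded integers automatically yields a \emph{strong} hardness result.

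Given a 3DM instance with disjoint sets $W,X,Y$ of size $q$ each and a family $M=\{m_1,\dots,m_p\}$ of triples $m_l=(w_{u_l},x_{v_l},y_{z_l})$, I would build a 3-partition instance with one item per triple $m_l$ together with several \emph{element} and \emph{filler} items per member of $W\cup X\cup Y$. The heart of the construction is to assign each item an integer through its base-$r$ representation, where $r$ is chosen polynomial in $p$ (e.g.\ $r=\Theta(p)$) and every integer uses only a \emph{constant} number of base-$r$ digits. Each digit position owns one combinatorial coordinate: the item for $m_l$ places markers in the positions of $w_{u_l}$, $x_{v_l}$, and $y_{z_l}$, while the element/filler items place compensating markers in those positions. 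Because $r$ exceeds the largest possible column sum over the constantly many items in any candidate block, no carries occur and summation to $B$ can be analyzed digit-by-digit; the bound $B$ is taken to be the integer whose digits realize the ``fully covered'' pattern. The filler items are supplied in exactly the right multiplicities so that the $p-q$ unused triples can be packed into all-$B$ blocks without disturbing the covering digits, and finally a common large offset is added to every value (and to $B$) to force $B/4<s_i<B/2$, guaranteeing that every block holds precisely three items.

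Correctness then splits into two routine directions. A perfect matching $M'\subseteq M$ yields a valid 3-partition by grouping each selected triple item with its matched element/filler items and packing the leftover items into blocks of total $B$; conversely, the no-carry property forces any block summing to $B$ to consist of one triple item together with items covering exactly its three coordinates, from which a perfect matching can be read off. Both directions are digit-counting arguments once the encoding is fixed.

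The main obstacle is reconciling three competing demands. First, the integers must stay polynomially bounded, which rules out the familiar exponential base-$2^{\Theta(n)}$ digit encodings used for ordinary \textsc{Partition} and is precisely what makes the result \emph{strong} rather than merely NP-complete. Second, the base must nevertheless be large enough to suppress carries, so that the digit-by-digit equivalence is valid; balancing these two pressures is what pins down $r=\Theta(p)$ with a constant digit count. Third, the uniform offset must squeeze every value into the open interval $(B/4,B/2)$ to enforce exactly three elements per block. Getting the filler multiplicities right so that the matched and unmatched triples both resolve into all-$B$ blocks is the delicate bookkeeping, and it is this combination of polynomial bounds with carry-free digit isolation that I expect to be the technically demanding part.
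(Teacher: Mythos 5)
First, a remark on what you are being compared against: the paper does not prove this lemma at all. It is quoted verbatim as a known result of Garey and Johnson \cite{garey1979computers}, so the ``paper's own proof'' is simply a citation to the textbook. Your decision to actually construct a reduction is therefore going beyond what the paper does, and your overall strategy --- reduce from a number-free problem (3DM) using a base-$r$ digit encoding with $r$ polynomial in the instance size and a constant number of digits, so that all produced integers are polynomially bounded and the hardness is \emph{strong} --- is exactly the family of ideas used in the standard proof. Your observations about carry suppression and about the interval $(B/4,B/2)$ forcing exactly three items per block are also correct.

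There is, however, a genuine gap in the sketch as written: an arity mismatch that your ``delicate bookkeeping'' quietly absorbs but which is in fact the hard part of the proof. A triple $m_l=(w,x,y)$ together with one item per covered coordinate gives \emph{four} items that must travel together, whereas a 3-partition block contains exactly \emph{three} items; your own correctness claim (``any block summing to $B$ consists of one triple item together with items covering exactly its three coordinates'') cannot be satisfied by a 3-element block if each coordinate is a separate item. This is precisely why Garey and Johnson do not reduce 3DM to 3-PARTITION directly: they first reduce 3DM to 4-PARTITION (where four items per block makes the triple-plus-three-elements grouping natural), and then reduce 4-PARTITION to 3-PARTITION via a separate pairing gadget that splits each item into two halves with matching ``registration marks'' so that the halves of distinct items can only recombine in controlled ways. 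To repair your sketch you would either need to reproduce that two-step route, or redesign the encoding so that two of the three coordinates of a triple are always certified by a single combined item --- neither of which is routine digit counting. As it stands, the reduction you describe proves strong NP-completeness of 4-PARTITION, not of 3-PARTITION.
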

\begin{lemma}\label{lem:hardness_no_cyber}
Given $\A, \tet,$ and $\tet'$, it is strongly NP-hard to determine if there exists a set of line failures $F$ such that $\A'\tet'=\A\tet$.
\end{lemma}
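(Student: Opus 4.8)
The plan is to prove hardness by reduction from the 3-partition problem (Lemma~\ref{lem:3-partition}). The first step is to convert the feasibility condition $\A'\tet'=\A\tet$ into a purely combinatorial selection condition. Writing the admittance matrix as a weighted Laplacian, $\A=\D\W\D^T$ with $\W=\diag(1/r_e)$, deleting the edges in $F$ replaces $\A$ by $\A'=\A-\D_F\W_F\D_F^T$, where $\D_F$ and $\W_F$ are the restrictions of $\D$ and $\W$ to the columns/entries indexed by $F$. Setting $\del:=\tet'-\tet$ and letting $\vec{d}_e$ be the column of $\D$ for edge $e$, this gives
\begin{equation*}
\A'\tet'=\A\tet \iff \sum_{e\in F} f_e\,\vec{d}_e=\A\del,
\end{equation*}
where $f_e:=(\theta'_u-\theta'_v)/r_e$ is the flow on $e=\{u,v\}$ induced by $\tet'$. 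Thus the decision problem becomes: can one select a subset $F$ of the vectors $\{f_e\vec{d}_e\}$ summing to the fixed target $\vec{b}:=\A\del$?

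I would then design a gadget so that this selection encodes a partition. Given a 3-partition instance $(S,B)$, build the complete bipartite graph on \emph{element} nodes $a_1,\dots,a_{3k}$ and \emph{bin} nodes $b_1,\dots,b_k$, orient every edge $e_{ij}=\{a_i,b_j\}$ from $a_i$ to $b_j$, and set $r_{ij}=1/s_i$ (so the admittance entry is $1/r_{ij}=s_i$). Take the phase angles $\theta'_{a_i}=1,\ \theta'_{b_j}=0$ and $\theta_{a_i}=1,\ \theta_{b_j}=1/k$. A direct computation then yields $f_{ij}=(\theta'_{a_i}-\theta'_{b_j})/r_{ij}=s_i$ for every edge, and $\vec{b}=\A\del$ with
\begin{equation*}
b_{a_i}=s_i,\qquad b_{b_j}=-B .
\end{equation*}
Multiplying all angles by $k$ clears the only denominator and scales both sides of the selection identity equally, hence leaves the feasible sets $F$ unchanged; after this scaling every reactance, phase angle, and admittance entry is an integer bounded by a polynomial in $B$ and $k$, so the reduction runs in polynomial time and preserves \emph{strong} hardness.

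With these values the selection identity decomposes node by node. At an element node $a_i$ only its outgoing edges contribute, each with coefficient $s_i>0$, so the $a_i$-equation $s_i\cdot|\{j:e_{ij}\in F\}|=s_i$ forces \emph{exactly one} incident edge to lie in $F$; i.e.\ $F$ assigns each element to a unique bin. At a bin node $b_j$ the equation reads $\sum_{i:e_{ij}\in F}s_i=B$, so the elements assigned to $b_j$ sum to $B$. Since $B/4<s_i<B/2$, any bin summing to $B$ must contain exactly three elements, so a feasible $F$ is precisely a partition of $S$ into $k$ triples each summing to $B$. Therefore such an $F$ exists if and only if $(S,B)$ is a yes-instance, establishing strong NP-hardness.

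The conceptual core, and the step I expect to require the most care, is the algebraic reformulation turning $\A'\tet'=\A\tet$ into the subset-selection identity $\sum_{e\in F}f_e\vec{d}_e=\vec{b}$; once this is in hand the gadget is a fairly standard assignment-style encoding of 3-partition. The remaining work is arithmetic bookkeeping: verifying the closed forms for $f_{ij}$ and for $\vec{b}$ from the chosen angles and reactances, and confirming that after the global scaling all quantities stay polynomially bounded so that the strong NP-completeness of 3-partition transfers to our problem.
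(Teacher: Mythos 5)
Your proof is correct and is in essence the same reduction the paper uses: 3-partition encoded into a complete bipartite graph on $3k$ element nodes and $k$ bin nodes, with the per-node balance equations forcing each element to be assigned to exactly one bin and each bin's elements to sum to $B$. The gadget is a dual of the paper's. The paper keeps unit reactances and writes the element values into the post-attack phase angles ($\theta_j'=-s_{j-k}$), so every surviving edge incident to element $j$ carries flow $s_{j-k}$ and the \emph{surviving} edge set is the assignment (each element node must keep exactly one edge); you instead write the values into the admittances ($1/r_{ij}=s_i$) with unit phase differences, so the \emph{failed} set $F$ itself is the assignment (each element node must lose exactly one edge). Your explicit reformulation $\A'\tet'=\A\tet \iff \sum_{e\in F}f_e\vec{d}_e=\A\del$ via $\A'=\A-\D_F\W_F\D_F^T$ is a clean, reusable way to organize the node-by-node bookkeeping that the paper carries out directly through flow conservation; it buys a transparent "subset-selection" view of the feasibility condition, at the cost of a slightly less standard gadget. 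One trivial misstatement: after scaling the angles by $k$, the reactances $r_{ij}=1/s_i$ are still not integers; this is immaterial because the problem's input is $\A$ (whose off-diagonal entries $-s_i$ are integers), $\tet$, and $\tet'$, all of magnitude polynomial in $k$ and $B$, so strong NP-hardness transfers exactly as you claim.
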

\begin{proof}
We reduce the 3-partition problem to this problem. Assume $S$ is a given set as described in Def.~\ref{def:3-partition}, we form a bipartite graph $G=(V,E)$ such that
$V = X\cup Y$, $E=\{\{x,y\}|x\in X, y\in Y\}$, $X=\{1,\dots,k\}$, and $Y=\{k+1,\dots,4k\}$. For all edges in $G$,  we set the reactance values equal to 1. For each $i\in X$, we set $p_{i}=B$ and for each $j\in Y$ we set $p_{j}=-s_{j-k}$. 
Define the vector of phase angles $\tet$ as follows:
\begin{equation*}
\theta_i=
\begin{cases}
0&i\leq k\\
-s_{i-k}/k& i>k.
\end{cases}
\end{equation*}
If $\A$ is the admittance matrix of $G$, it is easy to check that $\A\tet=\vec{p}$. Now define $\tet'$ as follows:
\begin{equation*}
\theta_i'=
\begin{cases}
0&i\leq k\\
-s_{i-k}& i>k.
\end{cases}
\end{equation*}
We prove that there exist a set of line failures $F$ such that $\A'\tet'=\vec{p}$ if, and only if, there exists a solution to the 3-partition problem.

First, lets assume that there exist a solution to the 3-partition problem such as $S_1,\dots,S_k$. Set $E_S=\{\{i,j\}|s_{j-k}\in S_i\}$. We show that $F=E\backslash E_S$ implies $\A'\tet'=\vec{p}$. Notice that $F=E\backslash E_S$ means that $G'=(V,E_S)$. Given the $p_i$ and the reactance values, it is easy to check that the defined $\tet'$ satisfies the DC power flow equations~(\ref{eqn:flow1})-(\ref{eqn:flow2}) in $G'$. Hence, $\A'\tet'=\vec{p}$.

Now, lets assume there exist a set of line failures $F$ such that $\A'\tet'=\vec{p}$. Set $E_S=E\backslash F$ and $G'=(V,E_S)$. Given the phase angles $\tet'$, it is easy to see that for any $\{i,j\}\in E_S$, $p_{ij}=s_{j-k}$. This implies that for $j\in Y$, at most one edge in $E_S$ is incident to $j$. On the other hand, using (\ref{eqn:flow1}), for any $i\in X$, $\sum_{j\in N(i)'} s_{j-k}=B$ in which by $N(i)'$ we mean the set of neighbors of node $i$ in $G'$. Given that each node $j\in Y$ is incident to at most one edge in $E_S$, defining $S_i=\{s_{j-k}|j\in N(i)'\}$ for $1\leq i\leq k$ gives a good solution to the 3-partition problem.

Hence, determining if there exist a set of line failures $F$ is at least as hard as determining if the 3-partition problem has a solution, and therefore, it is an NP-hard problem in the strong sense.
\end{proof}
\begin{corollary}\label{cor:hardness_no_cyber}
Given $\A, \tet,$ and $\tet'$, it is strongly NP-hard to find the set of line failures $F$, even if such a set exists.
\end{corollary}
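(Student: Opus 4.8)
The plan is to reduce the decision problem of Lemma~\ref{lem:hardness_no_cyber} to the search problem, exploiting the fact that a candidate set of line failures is certifiable in polynomial time. First I would observe that, given any candidate set $F$, one can form $G'=(V,E\backslash F)$, build its admittance matrix $\A'$, and check whether $\A'\tet'=\A\tet$ in polynomial time; hence solutions to the search problem are verifiable. This is the feature that lets the hardness of the decision version transfer to the search version even under the promise that a solution exists.

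Next I would recall the structure of the reduction in the proof of Lemma~\ref{lem:hardness_no_cyber}: it maps each instance $S$ of the $3$-partition problem (Def.~\ref{def:3-partition}) to an instance $(\A,\tet,\tet')$ of our problem so that a valid $F$ exists \emph{if and only if} $S$ is a yes-instance, and moreover so that from any valid $F$ a corresponding $3$-partition of $S$ can be read off in polynomial time. Thus the promise ``such a set exists'' on the line-failure side corresponds exactly to the yes-instances of $3$-partition, and because we reuse the same construction, the polynomial bound on the numbers involved is inherited, preserving \emph{strong} NP-hardness.

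Assume, for contradiction, a polynomial-time algorithm $\mathcal{A}$ that outputs a valid $F$ on every instance for which one exists. Given an arbitrary $3$-partition instance $S$, I would apply the reduction to obtain $(\A,\tet,\tet')$, run $\mathcal{A}$, and verify its output. If $S$ is a yes-instance the promise holds, so $\mathcal{A}$ returns a valid $F$, the verification passes, and we extract a $3$-partition; if $S$ is a no-instance then no valid $F$ exists, so whatever $\mathcal{A}$ returns fails verification. In either case we decide $3$-partition in polynomial time, contradicting its strong NP-completeness (Lemma~\ref{lem:3-partition}), and hence the search problem is strongly NP-hard even under the existence promise.

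The only delicate point---what I expect to be the ``main obstacle''---is that when the promise is violated (the no-instance case) the algorithm $\mathcal{A}$ carries no guarantee and may return an arbitrary set. This is precisely what the polynomial-time verification step handles: since no valid $F$ exists for a no-instance, any output of $\mathcal{A}$ necessarily fails the check $\A'\tet'=\A\tet$, so the promise violation never produces a wrong answer. Everything else is routine, since the reduction and its inverse extraction are already supplied by the proof of Lemma~\ref{lem:hardness_no_cyber}.
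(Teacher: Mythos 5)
Your proposal is correct and is essentially the paper's own argument: the paper's proof of Corollary~\ref{cor:hardness_no_cyber} is exactly the observation that a candidate $F$ can be verified in polynomial time, so a polynomial-time search algorithm would decide the existence problem of Lemma~\ref{lem:hardness_no_cyber}. You simply spell out the verification step and the handling of no-instances more explicitly than the paper does.
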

\begin{proof}
It is easy to see that if one can find a set of line failures $F$ with an algorithm, the output of that algorithm can be used here to verify the correctness and existence of such a set as well. Therefore, this problem is at least as hard as the existence problem.
\end{proof}
In Corollary~\ref{cor:hardness_no_cyber}, we proved that given the phase angle of the nodes before and after the attack, it is NP-hard to detect the set of line failures $F$. In the following lemma, we show that even if the attack area $H$ is known (since $\tet_H'$ is not given) the problem remains NP-hard.
\begin{lemma}\label{lem:hardness_H}
Given $\A, \tet, H,$ and $\tet_{\bar{H}}'$, it is strongly NP-hard to determine if there exist a set of line failures $F$ in $H$ and a vector $\tet_H'$, such that $\A'\tet'=\A\tet$.
\end{lemma}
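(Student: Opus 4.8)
The plan is to reduce from the problem already shown to be strongly NP-hard in Lemma~\ref{lem:hardness_no_cyber} (equivalently, straight from 3-partition), converting the setting ``the full $\tet'$ is given'' into the setting ``only $\tet_{\bar{H}}'$ is given, with $H$ known.'' The idea is to glue a one-node \emph{observation gadget} onto every node of the graph used in that reduction, so that the given exterior phase angles rigidly pin down the unknown interior ones; once the interior angles are forced, the new problem collapses exactly onto the previous one.

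Concretely, starting from an instance $(\A,\tet,\tet')$ of the problem in Lemma~\ref{lem:hardness_no_cyber} on a graph $G=(V,E)$, I would build $\hat{G}=(\hat{V},\hat{E})$ with $\hat{V}=V\cup\{w_v\mid v\in V\}$ and $\hat{E}=E\cup\{\{v,w_v\}\mid v\in V\}$, giving each new edge reactance $1$ and making each $w_v$ neutral ($\hat{p}_{w_v}=0$). I then declare the attacked area to be $V_H=V$, $E_H=E$, so that $\bar{H}=\{w_v\mid v\in V\}$ and every gadget edge $\{v,w_v\}$ is a boundary edge lying outside $E_H$; since the sought failures must satisfy $F\subseteq E_H$, no gadget edge can fail. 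Finally I define $\hat{\tet}$ by $\hat{\theta}_v=\theta_v$ and $\hat{\theta}_{w_v}=\theta_v$, and I present $\hat{\tet}_{\bar{H}}'$ by $\hat{\theta}_{w_v}'=\theta_v'$, i.e., copying the post-attack angles from the given $\tet'$ onto the observer nodes. The matrix $\hat{\A}$ is the admittance matrix of $\hat{G}$ with all edges present, so $\hat{\vec{p}}=\hat{\A}\hat{\tet}$ equals $p_v$ on $V$ and $0$ on the observers.

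The crux of the argument, and the step I would verify most carefully, is that each gadget pins its interior angle. Because $w_v$ is neutral with the single, non-failing neighbor $v$ of reactance $1$, the row of $\hat{\A}'\hat{\tet}'=\hat{\vec{p}}$ indexed by $w_v$ reads $\hat{\theta}_{w_v}'-\hat{\theta}_v'=0$, forcing $\hat{\theta}_v'=\hat{\theta}_{w_v}'=\theta_v'$ for every $v\in V$. Hence any feasible $\hat{\tet}_H'$ must coincide with the previous $\tet'$ on $V$, and since each gadget edge then carries zero flow it contributes nothing to conservation at the original nodes; the row of $\hat{\A}'\hat{\tet}'=\hat{\vec{p}}$ indexed by an original node $v$ therefore reduces exactly to the conservation equation at $v$ over $E\setminus F$ with $\hat{\theta}_v'=\theta_v'$. (The analogous fact pre-attack gives $\hat{\A}\hat{\tet}=\hat{\vec{p}}$, since the gadget edges carry zero flow there as well.) Consequently, the existence of $F\subseteq E_H$ and $\hat{\tet}_H'$ with $\hat{\A}'\hat{\tet}'=\hat{\A}\hat{\tet}$ is equivalent to the existence of a failure set $F\subseteq E$ with $\A'\tet'=\A\tet$ in the original instance, which Lemma~\ref{lem:hardness_no_cyber} characterizes as solvability of the 3-partition instance.

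To conclude I would observe that the reduction is polynomial and, crucially, introduces only unit reactances and copies of existing phase-angle values, so it never enlarges the numeric magnitudes of the instance; strong NP-hardness is thus preserved. The main obstacle I anticipate is not the combinatorics but making the pinning argument airtight: I must ensure that the \emph{only} way to satisfy the exterior equations is to set the interior angles to the given values -- relying on the facts that boundary edges cannot fail and that each observer has degree one -- so that no alternative failure pattern inside $H$ paired with a different $\hat{\tet}_H'$ can spuriously satisfy the system.
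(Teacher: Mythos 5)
Your proposal is correct and is essentially the paper's own construction: the paper likewise attaches a degree-one exterior ``dummy'' node to every node of the Lemma~\ref{lem:hardness_no_cyber} gadget (its sets $X_1$ and $Y_1$) so that the given exterior angles pin $\tet_H'$ uniquely via the resulting matching, and then falls back on the earlier reduction. The only cosmetic differences are that the paper places the nonzero supplies/demands on the pendant exterior nodes (making the interior neutral) and invokes \cite[Corollary 2]{SYZ2015} for the pinning step, whereas you keep the supplies interior, make the observers neutral, and argue the pinning directly from the degree-one flow-conservation rows.
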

\begin{proof}
The idea of the proof is very similar to the proof of Lemma~\ref{lem:hardness_no_cyber}. Again we reduce the 3-partition problem with a given set $S$ as described in Def.~\ref{def:3-partition} to this problem. Consider sets $X_1=\{1,\dots,k\}$, $X_2=\{k+1,\dots,2k\}$, $Y_2=\{2k+1,\dots,5k\}$, $Y_1=\{5k+1,\dots,8k\}$. We form a bipartite graph $G=(V,E)$ such that
$V = X_1\cup X_2 \cup Y_2\cup Y_1$ and $E=\{\{i,k+i\}|1\leq i\leq k\}\cup \{\{x,y\}|x\in X_2, y\in Y_2\}\cup\{\{j,j+3k\}|2k+1\leq j\leq5k\}$. Notice that the defined bipartite graph here is very similar to the one defined in the proof of Lemma~\ref{lem:hardness_no_cyber} except that here for each node in $X_2$ and $Y_2$ there exist a dummy node in $X_1$ and $Y_1$, accordingly, that is directly connected to its counterpart. We set $H=G[X_2\cup Y_2]$. It is easy to see that $H$ has exactly the same topology as the graph $G$ in the proof of Lemma~\ref{lem:hardness_no_cyber}. Again for all edges in $G$,  we set the reactance values equal to 1. For each $i\in X_2\cup Y_2$ we set $p_i=0$, for each $i\in X_1$, we set $p_{i}=B$, and for each $j\in Y_1$ we set $p_{j}=-s_{j-5k}$.
Define the vector of phase angles $\tet$ as follows:
\begin{equation*}
\theta_i=
\begin{cases}
B&1\leq i\leq k\\
0&k+1\leq i\leq 2k\\
-s_{i-2k}/k& 2k+1\leq i\leq 5k\\
-s_{i-5k}/k-s_{i-5k}& 5k+1\leq i\leq 8k
\end{cases}
\end{equation*}
If $\A$ is the admittance matrix of $G$, it is easy to check that $\A\tet=\vec{p}$. Now define $\tet'$ as follows:
\begin{equation*}
\theta_i'=
\begin{cases}
B&1\leq i\leq k\\
0&k+1\leq i\leq 2k\\
-s_{i-2k}& 2k+1\leq i\leq 5k\\
-2s_{i-5k}& 5k+1\leq i\leq 8k
\end{cases}
\end{equation*}
Now given $\tet_{\bar{H}}'$, since each node in $H$ is connected to an exactly one distinct node in $\bar{H}$, there exist a matching between the nodes in $H$ and $\bar{H}$ that covers nodes in $H$ and therefore from \cite[Corollary 2]{SYZ2015}, $\tet_H'$ will be determined uniquely. Hence, we can assume that $\tet'$ is given for all the nodes. Now we prove that there exist a set of line failures $F$ in $H$ such that $\A'\tet'=\vec{p}$ if, and only if, there exist a solution to the 3-partition problem. Given the way we build the graph $G$ and since the set of failures should be in $H$, the rest of the proof is exactly similar to the proof of Lemma~\ref{lem:hardness_no_cyber}.
\end{proof}
\begin{corollary}\label{cor:hardness_H}
Given $\A, \tet, H,$ and $\tet_{\bar{H}}'$, it is strongly NP-hard to find the set of line failures $F$ in $H$, even if such a set exists.
\end{corollary}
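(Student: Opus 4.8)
The plan is to deduce the hardness of the search problem directly from the decision problem of Lemma~\ref{lem:hardness_H}, in exactly the same way that Corollary~\ref{cor:hardness_no_cyber} was obtained from Lemma~\ref{lem:hardness_no_cyber}. The key observation is that any algorithm that \emph{finds} a valid set of line failures $F$ in $H$ can be converted into an algorithm that \emph{decides} whether such a set exists: given an instance $(\A,\tet,H,\tet_{\bar{H}}')$, I would run the search algorithm, and then check whether the object it returns is a genuine witness, answering YES precisely when it is. Hence a polynomial-time search algorithm would yield a polynomial-time decision procedure for the problem of Lemma~\ref{lem:hardness_H}, so the strong NP-hardness of the latter transfers to the former.

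The only point that requires care is verifying, in polynomial time, that a candidate $F\subseteq E_H$ is correct. Given such an $F$, I would form the post-attack graph $G'=(V,E\backslash F)$ with admittance matrix $\A'$, and compute $\vec{p}=\A\tet$. Since $\tet_{\bar{H}}'$ is part of the input, the equations $\A'_{H|G}\tet'=\vec{p}_H$ form a linear system in the unknown $\tet_H'$ alone; solving it (as in the proof of Lemma~\ref{lem:hardness_H}, the matching between the nodes of $H$ and $\bar{H}$ guarantees via \cite[Corollary 2]{SYZ2015} that this recovery is well defined) produces a candidate $\tet_H'$, after which one checks that the full equation $\A'\tet'=\vec{p}$ holds. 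All of these are standard linear-algebra operations running in polynomial time, so a valid $F$ is recognizable efficiently, and an invalid or absent output is rejected.

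I do not expect any real obstacle here: the entire content is the routine reduction of a decision problem to its associated search problem, together with the polynomial-time checkability of a candidate solution just described. The strongness of the hardness is inherited from Lemma~\ref{lem:hardness_H}, because the instances produced there involve only numbers of magnitude polynomial in the input and the verification step introduces no large numbers. It follows that finding the set of line failures $F$ in $H$ is strongly NP-hard, even when such a set is promised to exist.
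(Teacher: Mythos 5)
Your proposal is correct and matches the paper's approach: the paper proves this corollary (implicitly, by the same argument it spells out for Corollary~\ref{cor:hardness_no_cyber}) via the standard decision-to-search reduction, noting that a returned candidate $F$ can be verified in polynomial time, which is exactly what you do. Your added remark that verifying $F$ only requires checking feasibility of a linear system in the unknown $\tet_H'$ is a harmless elaboration of the same idea.
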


Finally, we prove that when the phase angles are modified ($\tet^{\star}$) and therefore $H$ is not known in advance, it is NP-hard to detect $H$ and $F$. We assume that the attacked area cannot contain more than half of the nodes, otherwise this problem might have many solutions.
\begin{lemma}\label{lem:hardness_cyber}
Given $\A, \tet,$ and $\tet^{\star}$, it is strongly NP-hard to determine if there exists a subgraph $H_0$ with $|V_{H_0}|\leq|V|/2$, a set of line failures $F$ in $H_0$, and a vector $\tet'_{H_0}$ such that $\A\tet = \A'\left[\begin{smallmatrix}\tet'_{H_0}\\\tet^{\star}_{\bar{H}_0}\end{smallmatrix}\right]$.
\end{lemma}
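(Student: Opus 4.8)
The plan is to reduce from the 3-partition problem, building directly on the construction in the proof of Lemma~\ref{lem:hardness_H}. I would reuse the same bipartite graph $G$ on $8k$ nodes with the same supply/demand vector $\vec{p}$ and the same pre-attack phases $\tet$, taking the intended attacked area to be $H=G[X_2\cup Y_2]$, so that $|V_H|=4k=|V|/2$ and the size constraint $|V_{H_0}|\leq|V|/2$ is met with equality by the honest choice $H_0=H$. The only new ingredient is the observed vector $\tet^{\star}$. Since data outside the attacked area is never modified, I would set $\tet^{\star}_{\bar{H}}=\tet'_{\bar{H}}$ equal to the true post-attack values on the degree-one dummy nodes $X_1\cup Y_1$, and set the interior entries $\tet^{\star}_H$ to generic ``garbage'' values chosen to be nonzero, distinct from their intended values, and (more strongly) to lie off every hyperplane arising below; rapidly growing rationals suffice and keep the reduction polynomial. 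The forward direction is then immediate: a 3-partition solution yields, exactly as in Lemma~\ref{lem:hardness_H}, a failure set $F\subseteq E_H$ and interior phases $\tet'_H$ with $\A'[\tet'_H,\tet^{\star}_{\bar{H}}]=\A\tet$, and $H_0=H$ meets the size bound, so the instance is a YES instance.

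The substance is the reverse direction: any valid triple $(H_0,F,\tet'_{H_0})$ with $|V_{H_0}|\leq|V|/2$ must encode a 3-partition, even though the solver is free to pick both $H_0$ and $\tet'_{H_0}$. First I would exploit the private degree-one dummies. Each interior node is joined by a single edge to its own anchored dummy, whose observed phase is fixed to the true value. If both a dummy and its interior partner lay in $\bar{H}_0$, then that edge survives (failures lie in $E_{H_0}$, so no edge with an endpoint outside $H_0$ can fail) and the dummy's flow equation would force the partner's phase to its intended value, contradicting the garbage placed there. Hence each of the $4k$ dummy/interior pairs contributes at least one node to $H_0$, giving $|V_{H_0}|\geq 4k$; with the size bound this is an equality, so $H_0$ contains exactly one node of every pair and $|V_{H_0}|=4k$. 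Next I would argue that the \emph{interior} node of each pair is in fact the selected one. The key rigidity is again that $F\subseteq E_{H_0}$: whenever an interior node is left in $\bar{H}_0$, \emph{all} of its complete-bipartite edges necessarily survive, and writing the flow equations at its (now anchored) neighbors pins the used garbage value to a fixed expression in the $s_i$'s, which the generic choice of $\tet^{\star}_H$ violates. Ruling out these ``mixed'' choices forces $H_0=V_H=H$, after which the surviving equations are precisely those analyzed in Lemma~\ref{lem:hardness_H} and deliver a 3-partition.

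The main obstacle is exactly this last step: neutralizing the adversary's two degrees of freedom --- which node of each pair to absorb into $H_0$, and how to set the free phases $\tet'_{H_0}$. The pairing together with the $|V|/2$ bound removes the size slack, and the ``all bipartite edges of an exterior interior node survive'' rigidity, combined with a generically chosen garbage vector so that every spurious configuration imposes an unsatisfiable linear constraint, is what I expect to require the most care to make fully rigorous. In particular, I would need to verify that each mixed configuration is genuinely over-determined (its forced phases leave no freedom to absorb the garbage) rather than merely constrained, so that a single generic garbage vector of polynomial bit-length defeats all of the exponentially many alternative $(H_0,F)$ choices at once. Once $H_0=H$ is pinned, strong NP-hardness follows from Lemma~\ref{lem:hardness_H} and the strong NP-completeness of 3-partition in Lemma~\ref{lem:3-partition}.
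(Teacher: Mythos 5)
Your proposal is correct and follows essentially the same route as the paper: reuse the Lemma~\ref{lem:hardness_H} construction, place generic values on $\tet^{\star}_H$, and use the degree-one anchored dummies together with the $|V_{H_0}|\leq|V|/2$ bound to force $H_0=H$, after which Lemma~\ref{lem:hardness_H} finishes the reduction. The only difference is bookkeeping: the paper argues directly that each interior node of $X_2\cup Y_2$ must lie in $H_0$ (otherwise all its edges survive and its own flow equation becomes a nontrivial linear constraint on the garbage values, violated generically), whereas you first count one node per dummy/interior pair and then eliminate the mixed choices by genericity --- the verification you flag as the delicate step is precisely the part the paper treats tersely, and it does go through.
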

\begin{proof}
Again we reduce the 3-partition partition problem to this problem. The proof is similar to the proof of Lemma~\ref{lem:hardness_H}. Given an instance of a 3-partition problem, we build a graph $G$, subgraph $H$, and supply and demand vector $\vec{p}$ exactly as in the proof of Lemma~\ref{lem:hardness_H}. Define $\tet^{\star}_{\bar{H}}=\tet'_{\bar{H}}$ as defined in Lemma~\ref{lem:hardness_H} and $\tet^{\star}_H = \vec{z}$, in which $\vec{z}$ is a random vector with arbitrary distribution with no positive probability mass in any proper linear subspace. 
For any $i\in X_1$, node $i$ is only connected to node $i+k$. Since $\theta_i = B$ and $\theta_{k+i}=z_i$ for a random variable $z_i$, $\theta_i -\theta_{k+i}\neq B$ almost surely. So in order for the flow equations to hold, either both $i, i+k\in H_0$ or $i+k\in H_0$. The same argument holds for any node $j\in Y_1$ and its only neighbor $j-3k$. So in order for the problem to have a solution, $H_0$ should contain both $X_2$ and $Y_2$. On the other hand, since $|V_{H_0}|\leq|V|/2$, therefore $H_0=G[X_2\cup Y_2]=H$ is the only possible attacked area. Now since $H_0=H$, we can assume that the attacked area is given and the rest of the proof is exactly similar to the proof of Lemma~\ref{lem:hardness_H}.
\end{proof}
\begin{corollary}\label{cor:hardness_cyber}
Given $\A, \tet,$ and $\tet^{\star}$, it is strongly NP-hard to find a subgraph $H$, a set of line failures $F$ in $H$, and a vector $\tet'_H$ such that $\A\tet = \A'\left[\begin{smallmatrix}\tet'_H\\\tet^{\star}_{\bar{H}}\end{smallmatrix}\right]$, even if such $H,F$ exist.
\end{corollary}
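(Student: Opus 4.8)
The plan is to obtain this search-version hardness directly from the decision-version Lemma~\ref{lem:hardness_cyber}, mirroring exactly the way Corollary~\ref{cor:hardness_no_cyber} follows from Lemma~\ref{lem:hardness_no_cyber} and Corollary~\ref{cor:hardness_H} follows from Lemma~\ref{lem:hardness_H}. The underlying principle is the standard self-reducibility argument: a search problem is at least as hard as its associated existence (decision) problem whenever a candidate witness can be verified in polynomial time.

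Concretely, I would suppose there is an algorithm $\mathcal{M}$ that, on input $(\A, \tet, \tet^{\star})$, returns a subgraph $H$, a failure set $F \subseteq E_H$, and a vector $\tet'_H$ purporting to satisfy $\A\tet = \A'\left[\begin{smallmatrix}\tet'_H\\\tet^{\star}_{\bar{H}}\end{smallmatrix}\right]$, together with the size restriction $|V_H| \le |V|/2$ inherited from Lemma~\ref{lem:hardness_cyber} (this restriction is essential, since otherwise the trivial choice $H=G$, $F=\emptyset$, $\tet'_H=\tet$ is always a solution). I would then run $\mathcal{M}$ and verify its output: check that $|V_H| \le |V|/2$ and $F \subseteq E_H$, delete $F$ to form $G'$ and its admittance matrix $\A'$, and test the linear identity $\A\tet = \A'\left[\begin{smallmatrix}\tet'_H\\\tet^{\star}_{\bar{H}}\end{smallmatrix}\right]$ by a single matrix--vector multiplication. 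Each of these checks runs in polynomial time.

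This verification turns $\mathcal{M}$ into a decision procedure for the existence problem of Lemma~\ref{lem:hardness_cyber}: on a ``yes'' instance $\mathcal{M}$ must return a valid triple, which passes all the checks, while on a ``no'' instance no valid triple exists, so whatever $\mathcal{M}$ returns fails the checks. Since the existence problem is strongly NP-hard by Lemma~\ref{lem:hardness_cyber}, the search problem is at least as hard, and the claim follows even under the promise that a valid $H,F$ exist.

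There is essentially no combinatorial obstacle here; all the difficulty is already encapsulated in Lemma~\ref{lem:hardness_cyber}. The only point requiring (routine) care is confirming that the witness check is genuinely polynomial and does not itself smuggle in a hard subproblem — but since it amounts merely to forming $\A'$ from $G'$ and performing one matrix--vector product, this is immediate.
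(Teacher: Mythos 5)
Your proposal is correct and matches the paper's intended argument: the paper states this corollary without an explicit proof, but it clearly follows the same self-reducibility template as the proof of Corollary~\ref{cor:hardness_no_cyber} (run the hypothetical search algorithm, verify its output in polynomial time, and thereby decide the existence problem of Lemma~\ref{lem:hardness_cyber}). Your additional observation that the size bound $|V_{H}|\le|V|/2$ must be inherited from the lemma to rule out the trivial witness $H=G$, $F=\emptyset$, $\tet'_H=\tet$ is a worthwhile point of care that the paper leaves implicit.
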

Corollary~\ref{cor:hardness_cyber} indicates that it is NP-hard to detect the line failures after an attack as described in Section~\ref{sec:Model} in general cases. However, in the next sections, we provide a polynomial-time algorithm to detect the attacked area $H$ and the set of line failures $F$, and show based on simulations that it performs well in reasonable scenarios. 
\section{Attacked Area Approximation}\label{sec:area}
In this section, we provide methods to approximate the attacked area after a cyber attack as described in Subsection~\ref{subsec:attack}. In subsections~\ref{subsec:data_distortion} and \ref{subsec:data_replay}, we first provide methods to contain the attacked area after the data distortion and replay attacks, respectively.
We then combine these methods in the ATtacked Area Containment (ATAC) Module for containing the attacked area after both types of data attacks. Finally, given an area containing the attacked area, in subsection~\ref{subsec:improve}, we provide methods to improve the approximation of the attacked area.
\subsection{Data Distortion}\label{subsec:data_distortion}
We first consider data distortion attacks. In particular, recall that we assume that $\tet_H^{\star}=\tet_H'+\vec{z}$ for a random vector $\vec{z}$ with an arbitrary distribution with no positive probability mass in any proper linear subspace. Since $\tet_H^{\star}$ is the vector of the modified phase angles and there are also some line failures in $H$, it can be seen that $\A\tet^{\star}\neq \vec{p}$.
\begin{lemma}\label{lem:Hbar_int}
For any $i\in \intt(\bar{H})$, $\A_i\tet^{\star}=p_i$.
\end{lemma}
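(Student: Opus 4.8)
The plan is to exploit two features built into the attack model together with one purely structural observation. First, the adversary only tampers with data reported from inside $H$, so the observed and actual phase angles coincide on $\bar{H}$; that is, $\tet^{\star}_{\bar{H}}=\tet'_{\bar{H}}$. Second, all line failures are confined to the attacked area, $F\subseteq E_H$. The structural ingredient is that if $i\in\intt(\bar{H})$ then by definition $N(i)\subseteq V_{\bar{H}}$, so the only nonzero entries of the row $\A_i$ lie in the columns indexed by $\{i\}\cup N(i)\subseteq V_{\bar{H}}$; in other words $\supp(\A_i)\subseteq V_{\bar{H}}$.

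First I would use this support property to swap $\tet^{\star}$ for $\tet'$ inside the product $\A_i\tet^{\star}$: since $\A_i$ vanishes on the coordinates indexed by $V_H$, and $\tet^{\star}$ agrees with $\tet'$ on the remaining coordinates $V_{\bar{H}}$, we get $\A_i\tet^{\star}=\A_i\tet'$. Next I would argue that row $i$ of the admittance matrix is unaffected by the attack, i.e.\ $\A'_i=\A_i$. Indeed, every edge incident to $i$ has both endpoints in $V_{\bar{H}}$ (both $i$ and all its neighbors lie in $\bar{H}$), hence no such edge belongs to $E_H$ and none of them fails; consequently both the off-diagonal entries $a_{iw}=-1/r_{iw}$ for $w\in N(i)$ and the diagonal entry $a_{ii}=-\sum_{w\in N(i)}a_{iw}$ are identical in $\A$ and $\A'$. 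Finally, invoking $\A'\tet'=\pvec{p}'=\vec{p}$ from the model (supply/demand is unchanged), the $i$-th coordinate reads $\A_i\tet'=\A'_i\tet'=p_i$, and chaining the equalities gives $\A_i\tet^{\star}=\A_i\tet'=p_i$, as claimed.

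The step I expect to be the crux is the claim $\A'_i=\A_i$, because the admittance matrix genuinely depends on which edges survive the attack; this is precisely where the assumption $F\subseteq E_H$ does the work, guaranteeing that an interior node of $\bar{H}$ sees none of the failures in its neighborhood. Everything else is bookkeeping about the support of a (weighted Laplacian) row. I would also double-check the multi-edge convention noted just after the definition of $\A$, but since failures still cannot touch an edge having an endpoint outside $V_H$, the conclusion is unaffected.
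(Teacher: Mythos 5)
Your proposal is correct and follows essentially the same route as the paper: both arguments rest on the observation that for $i\in\intt(\bar{H})$ the row $\A_i$ is supported on $V_{\bar{H}}$, that $\tet^{\star}$ and $\tet'$ agree there, and that this row of the admittance matrix is untouched by failures confined to $E_H$, so the post-attack flow equation $\A'\tet'=\vec{p}$ yields the claim. Your write-up is if anything slightly more careful than the paper's in spelling out why $\A'_i=\A_i$.
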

\begin{proof}
Since $i\in \intt(\bar{H})$, therefore $a_{ij}=0$ for all $j\in V_H$. Hence, $\A_i\tet^{\star}=\A_{i|\bar{H}}\tet_{\bar{H}}^{\star}$. On the other hand, since the attack is inside $H$, we know $\A_{i|\bar{H}}=\A_{i|\bar{H}}'$, and also $\tet_{\bar{H}}^{\star}=\tet_{\bar{H}}'$. Hence, $\A_i\tet^{\star}=\A_{i|\bar{H}}\tet_{\bar{H}}^{\star}=\A_{i|\bar{H}}'\tet_{\bar{H}}'=p_i$.
\end{proof}
\begin{lemma}\label{lem:not_int}
For any $i\!\in\! V\backslash \intt(\bar{H})$, $\A_i\tet^{\star}\!\neq\! p_i$ almost surely.
\end{lemma}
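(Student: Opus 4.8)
The plan is to turn the equality $\A_i\tet^\star = p_i$ into an explicit linear condition on the random distortion $\vec z$ and then invoke its non-degeneracy. Splitting row $i$ of $\A$ over the columns of $H$ and $\bar H$ and substituting the data-distortion model $\tet_H^\star = \tet_H' + \vec z$ together with $\tet_{\bar H}^\star = \tet_{\bar H}'$, I would first compute
\[
\A_i\tet^\star = \A_{i|H}\tet_H^\star + \A_{i|\bar H}\tet_{\bar H}^\star = \A_i\tet' + \A_{i|H}\vec z .
\]
Hence $\A_i\tet^\star = p_i$ holds exactly when $\A_{i|H}\vec z = p_i - \A_i\tet'$, i.e. when $\vec z$ falls on a fixed hyperplane whose normal direction is the row vector $\A_{i|H}$. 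The lemma therefore reduces to two facts: (a) $\A_{i|H}$ is not the zero vector when $i\in V\setminus\intt(\bar H)$, so that this is a genuinely lower-dimensional set, and (b) $\vec z$ assigns that set probability zero.

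For (a) I would prove the structural complement of Lemma~\ref{lem:Hbar_int}, namely that $\A_{i|H}=\vec 0$ \emph{iff} $i\in\intt(\bar H)$, by examining the two pieces of $V\setminus\intt(\bar H)$. If $i\in V_H$, the diagonal entry $a_{ii}=\sum_{w\in N(i)}1/r_{iw}>0$ sits inside $\A_{i|H}$, so $\A_{i|H}\neq\vec 0$. If $i\in\partial(\bar H)$, then by definition $i$ has a neighbor $j\in V_H$, so the off-diagonal entry $a_{ij}=-1/r_{ij}\neq 0$ sits inside $\A_{i|H}$; again $\A_{i|H}\neq\vec 0$. This is precisely the reason the randomness survives exactly off the interior of $\bar H$.

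For (b) I would isolate the offset $p_i-\A_i\tet'$. Because every failed line lies inside $H$, no node of $\bar H$ is incident to a failed edge, so its admittance row is unchanged by the attack: $\A_i=\A_i'$ for $i\in\bar H$. Thus for boundary nodes $i\in\partial(\bar H)$ we get $\A_i\tet'=\A_i'\tet'=p_i$, the offset vanishes, and the bad event is $\{\A_{i|H}\vec z=0\}$, a \emph{proper linear} subspace that carries no mass by hypothesis. For $i\in V_H$ incident to a failure the offset equals $-(\A_i-\A_i')\tet'$ (the flow the failed lines at $i$ would have carried under $\tet'$) and need not be zero, so the bad event is the affine hyperplane $\{\vec z:\A_{i|H}\vec z=p_i-\A_i\tet'\}$; since $\A_{i|H}\neq\vec 0$ this is a proper affine subspace, and the non-degeneracy of $\vec z$ gives it probability zero. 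A union bound over the finitely many such $i$ then yields the conclusion simultaneously.

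I expect the one delicate point to be exactly this last step. The hypothesis is stated for proper \emph{linear} subspaces, yet the interior-of-$H$ nodes that are incident to failures produce an affine hyperplane with a genuinely nonzero offset, and ruling that event out rigorously requires the distribution of $\vec z$ to put zero mass on every proper \emph{affine} subspace. This is automatic for the absolutely continuous examples the model has in mind (e.g. a non-degenerate multivariate Gaussian assigns zero probability to every proper affine subspace), so the argument goes through once the ``linear-subspace'' phrasing is read in its affine form.
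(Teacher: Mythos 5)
Your proof is correct and follows essentially the same route as the paper's: identify a nonzero entry of $\A_{i|H}$ for every $i\in V\backslash\intt(\bar{H})$ (the diagonal entry when $i\in V_H$, an off-diagonal entry when $i\in\partial(\bar{H})$) and conclude that $\A_i\tet^{\star}=p_i$ confines the random perturbation $\vec{z}$ to a hyperplane of measure zero. Your added observation that for nodes of $V_H$ incident to failed lines the bad event is an \emph{affine} rather than linear hyperplane is a legitimate sharpening of the paper's terser argument, and is resolved exactly as you say once the non-degeneracy hypothesis on $\vec{z}$ is read in its affine form (as it holds for the intended absolutely continuous examples such as a non-degenerate Gaussian).
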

\begin{proof}
For any $i\in V\backslash\intt(\bar{H})$, there exists a node $j\in V_H$ such that $a_{ij}\neq 0$. Now since the set of solutions $\vec{x}$ to $\A_i\vec{x}= p_i$ is a measure zero set in $\mathbb{R}^n$ and $\theta_j^{\star}$ is a random modification of $\theta_j'$, $\A_i\tet^{\star}\neq p_i$ almost surely.
\end{proof}
Lemmas~\ref{lem:Hbar_int} and \ref{lem:not_int} indicate that given $\A,\tet,$ and $\tet^{\star}$ one can find $\intt(\bar{H})$ by computing $V\backslash \supp(\A\tet^{\star}- \vec{p})$.
\begin{corollary}\label{cor:int_Hbar}
$\intt(\bar{H})\!=\!V\backslash \supp(\A\tet^{\star}\!-\! \vec{p})$, almost surely.
\end{corollary}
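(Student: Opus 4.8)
The plan is to establish the claimed equality by proving two set inclusions, each a direct consequence of one of the two preceding lemmas, once the definition of the support is unwound.

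First I would rewrite the right-hand side in terms of the row-wise residuals. By definition of the support, $\supp(\A\tet^{\star}-\vec{p})=\{i\in V:(\A\tet^{\star}-\vec{p})_i\neq 0\}=\{i:\A_i\tet^{\star}\neq p_i\}$, so its complement in $V$ is exactly $\{i\in V:\A_i\tet^{\star}=p_i\}$. Thus the corollary is equivalent to showing $\intt(\bar{H})=\{i\in V:\A_i\tet^{\star}=p_i\}$ almost surely, and the two lemmas supply precisely the two inclusions.

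For the forward inclusion $\intt(\bar{H})\subseteq\{i:\A_i\tet^{\star}=p_i\}$, I would simply invoke Lemma~\ref{lem:Hbar_int}, which asserts that every $i\in\intt(\bar{H})$ satisfies $\A_i\tet^{\star}=p_i$; note this inclusion holds deterministically, with no almost-sure qualification. For the reverse inclusion I would argue contrapositively via Lemma~\ref{lem:not_int}: every $i\in V\backslash\intt(\bar{H})$ satisfies $\A_i\tet^{\star}\neq p_i$ almost surely, hence lies in $\supp(\A\tet^{\star}-\vec{p})$ almost surely, so no such index belongs to $V\backslash\supp(\A\tet^{\star}-\vec{p})$. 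This yields $V\backslash\supp(\A\tet^{\star}-\vec{p})\subseteq\intt(\bar{H})$, and combining the two inclusions gives the equality.

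There is no genuine obstacle here—the corollary is a bookkeeping combination of Lemmas~\ref{lem:Hbar_int} and~\ref{lem:not_int}. The only point requiring minor care is the almost-sure qualifier, inherited from Lemma~\ref{lem:not_int}: each index $i\in V\backslash\intt(\bar{H})$ fails to lie in the support only on an event of measure zero, and since $V$ is finite, the union over these finitely many indices of the exceptional events remains measure zero. Hence, almost surely, all indices outside $\intt(\bar{H})$ simultaneously lie in the support, so the equality holds almost surely as stated.
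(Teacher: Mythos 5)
Your proof is correct and follows exactly the route the paper intends: the corollary is stated as an immediate consequence of Lemmas~\ref{lem:Hbar_int} and~\ref{lem:not_int}, with the two lemmas supplying the two inclusions after unwinding the definition of the support. Your added remark that the finitely many measure-zero exceptional events from Lemma~\ref{lem:not_int} union to a measure-zero set is a careful touch the paper leaves implicit.
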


Define $S_0:=G[\supp(\A\tet^{\star}-\pvec{p})]$. We know from Corollary~\ref{cor:int_Hbar} that $\intt(\bar{H})=V_{\bar{S}_0}$ and from Lemma~\ref{lem:not_int} that $V_H\subset V_{S_0}$. Therefore, $S_0$ clearly contains $H$.
The following lemma demonstrates that $\intt(S_0)$ is a better approximation for $V_H$. We use this lemma in Subsection~\ref{subsec:improve} to improve the approximation of the attacked area.
\begin{lemma}\label{lem:estimate_H1}
$V_H\subseteq \intt(S_0)$, almost surely.
\end{lemma}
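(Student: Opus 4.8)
The plan is to reduce the statement to a purely combinatorial fact about neighborhoods, using Corollary~\ref{cor:int_Hbar} as the only non-trivial input. By that corollary, almost surely $V_{S_0} = \supp(\A\tet^\star - \vec{p}) = V \setminus \intt(\bar{H})$, so I will work on this event throughout. Recall that by definition $\intt(S_0) = \{i \in V_{S_0} \mid N(i) \subseteq V_{S_0}\}$. Hence proving $V_H \subseteq \intt(S_0)$ amounts to showing, for every $i \in V_H$, two things: (i) $i \in V_{S_0}$, and (ii) $N(i) \subseteq V_{S_0}$. Equivalently, rewriting $V_{S_0} = V \setminus \intt(\bar{H})$, I must verify that none of the nodes in $\{i\} \cup N(i)$ belongs to $\intt(\bar{H})$.

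For claim (i), I would observe that $\intt(\bar{H}) \subseteq V_{\bar{H}} = V \setminus V_H$ directly from the definition of interior, so $\intt(\bar{H})$ and $V_H$ are disjoint; thus $i \in V_H$ forces $i \notin \intt(\bar{H})$, i.e. $i \in V_{S_0}$. (This also re-derives the containment $V_H \subseteq V_{S_0}$ noted before the lemma.) For claim (ii), fix $i \in V_H$ and a neighbor $j \in N(i)$, and split on where $j$ lies. If $j \in V_H$, the same disjointness argument as in (i) gives $j \notin \intt(\bar{H})$. If instead $j \in V_{\bar{H}}$, then $j$ has the neighbor $i \in V_H$, so $i \in N(j)$ while $i \notin V_{\bar{H}}$; therefore $N(j) \not\subseteq V_{\bar{H}}$, which by the definition of interior means $j \notin \intt(\bar{H})$. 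In both cases $j \in V \setminus \intt(\bar{H}) = V_{S_0}$, so $N(i) \subseteq V_{S_0}$ and hence $i \in \intt(S_0)$.

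There is no genuinely hard step here; the argument is a one-line unwinding of the definition of $\intt(\cdot)$ once Corollary~\ref{cor:int_Hbar} is in hand. The only point that deserves care, and the thing I would state explicitly, is that all the graph operators $N(\cdot)$, $\intt(\cdot)$, and the induced subgraph $S_0$ are taken in the original graph $G$, and that this is legitimate because the attack removes only edges of $E_H$ (lines internal to $H$): the adjacencies of every node in $\bar{H}$, and all edges crossing the $H/\bar{H}$ boundary, are unchanged, which is exactly what makes Corollary~\ref{cor:int_Hbar} and the boundary-crossing step above consistent. The qualifier ``almost surely'' is inherited verbatim from Corollary~\ref{cor:int_Hbar}, so no additional probabilistic reasoning is required in this proof.
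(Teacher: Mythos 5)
Your proof is correct and rests on the same key fact as the paper's: any node adjacent to $V_H$ fails the conservation check $\A_j\tet^{\star}=p_j$ almost surely, so it lands in $V_{S_0}$. The paper phrases this as a proof by contradiction that re-runs the measure-zero argument of Lemma~\ref{lem:not_int} for the offending neighbor $j$, whereas you package that step as Corollary~\ref{cor:int_Hbar} and finish with a purely combinatorial unwinding of $\intt(\cdot)$; these are the same argument in different clothing, and yours is, if anything, slightly cleaner about where the ``almost surely'' enters.
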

\begin{proof}
Assume not. Then there exists a node $i\in V_H$ such that $N(i)\cap V_{\bar{S}_0}\neq \emptyset$. Assume $j\in N(i)\cap V_{\bar{S}_0}\neq \emptyset$, then with a similar argument as in the proof of Lemma~\ref{lem:not_int}, one can show that $\A_j\tet^{\star}\neq p_j$ almost surely, which contradicts with $j\notin V_{{S}_0}$. Hence, $N(i)\cap V_{\bar{S}_0} = \emptyset$ and $V_H\subseteq \intt(S_0)$.
\end{proof}

\subsection{Data Replay}\label{subsec:data_replay}
In this subsection, we consider data replay attacks. Recall that we assume $\tet_H^{\star}=\tet_H''$ such that $\tet''$ satisfies $\A\tet'' =\pvec{p}''$. The power supply/demand vector $\pvec{p}''$ is arbitrarily selected such that $\pvec{p}_H'' =\pvec{p}_H$, and $\pvec{p}''_{\bar{H}}$ is selected generally enough.

The data replay attacks are harder to detect since the data seems to be correct locally. Again, one can easily see that $\A\tet^{\star}\neq \vec{p}$, but here unlike the data distortion case, not all the nodes in $H$ can be detected by checking $\A_i\tet^{\star}\neq p_i$. The following lemma shows why attacked area containment is more difficult after a data replay attack.
\begin{lemma}\label{lem:int_replay}
For any $i\in \intt(H)\cup \intt(\bar{H})$, $\A_i\tet^{\star}=p_i$.
\end{lemma}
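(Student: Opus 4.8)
The plan is to split the claim into the two disjoint cases $i\in\intt(\bar H)$ and $i\in\intt(H)$ and handle each separately, since the observed vector $\tet^{\star}$ agrees with $\tet'$ on the columns indexed by $V_{\bar H}$ but equals the replayed vector $\tet''$ on the columns indexed by $V_H$. The case $i\in\intt(\bar H)$ is already covered by Lemma~\ref{lem:Hbar_int}: every neighbor of $i$ lies in $\bar H$, so the row $\A_i$ vanishes on all columns in $V_H$, giving $\A_i\tet^{\star}=\A_{i|\bar H}\tet_{\bar H}^{\star}=\A_{i|\bar H}'\tet_{\bar H}'=p_i$, where I use that the failures are interior to $H$ (so $\A_{i|\bar H}=\A_{i|\bar H}'$) and that $\tet_{\bar H}^{\star}=\tet_{\bar H}'$.

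The genuinely new content is the case $i\in\intt(H)$, and here I would lean on the defining property of a data replay attack: the replayed vector $\tet''$ satisfies $\A\tet''=\pvec{p}''$ with the \emph{original} admittance matrix $\A$, and with $\pvec{p}_H''=\pvec{p}_H$. Because $i$ is interior to $H$, all its neighbors lie in $V_H$, so $a_{ij}=0$ for every $j\in V_{\bar H}$ and the row $\A_i$ only sees columns indexed by $V_H$. Therefore I can write $\A_i\tet^{\star}=\A_{i|H}\tet_H^{\star}=\A_{i|H}\tet_H''=\A_i\tet''$, where the middle equality substitutes $\tet_H^{\star}=\tet_H''$ and the last equality again uses that the $\bar H$-columns of $\A_i$ are zero so the missing entries of $\tet''$ are irrelevant. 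Feeding this into the $i$-th row of the replay equation yields $\A_i\tet''=p_i''=p_i$, the final step invoking $i\in V_H$ together with $\pvec{p}_H''=\pvec{p}_H$. Combining the two cases establishes the lemma.

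The one point that requires care — and the reason this case is not just a copy of Lemma~\ref{lem:Hbar_int} — is the bookkeeping about which admittance matrix is in force. For an interior node of $H$ the relevant structural identity is stated with the original $\A$, not $\A'$, precisely because the replayed data was generated under the pre-attack topology; the line failures inside $H$ never intrude into the computation because the interior-node row has no support outside $V_H$ regardless of which edges within $H$ were removed, and both the lemma statement and the replay equation $\A\tet''=\pvec{p}''$ use this same original $\A$ consistently. Recognizing that the interior row localizes to $V_H$, so that the replay equation can be read off directly after the substitution, is the crux; everything else is a one-line computation.
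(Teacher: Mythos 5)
Your proof is correct and follows essentially the same route as the paper's: the $\intt(\bar H)$ case is delegated to Lemma~\ref{lem:Hbar_int}, and for $i\in\intt(H)$ you use that the row $\A_i$ is supported on $V_H$ to reduce $\A_i\tet^{\star}$ to $\A_{i|H}\tet_H''=p_i''=p_i$ via the replay equation and $\pvec{p}_H''=\pvec{p}_H$. Your added remark about consistently working with the original $\A$ is a correct and slightly more explicit account of the same bookkeeping the paper does implicitly.
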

\begin{proof}
Similar to the proof of Lemma~\ref{lem:Hbar_int}, it is easy to show that for any $i\in \intt(\bar{H})$, $\A_i\tet^{\star}=p_i$. The only new part is to show the same for nodes in $\intt(H)$. So assume $i\in\intt(H)$, following the definition of the interior, it can be verified that $\A_i\tet^{\star} = \A_{i|H}\tet^{\star}_H$. On the other hand, since $\tet^{\star}_H=\tet_H''$ and $\pvec{p}_H''=\pvec{p}_H$, we can verify that $\A_{i|H}\tet^{\star}_H =\A_{i|H}\tet_H''=p_i''=p_i$. Hence, for all $i\in \intt(H)$ also $\A_i\tet^{\star}=p_i$.
\end{proof}
\begin{lemma}\label{lem:boundary_replay}
 For any $i\in \partial(H)\cup\partial(\bar{H})$, $\A_i\tet^{\star}\neq p_i$, almost surely.
\end{lemma}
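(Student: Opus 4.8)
The plan is to follow the template of Lemmas~\ref{lem:Hbar_int} and \ref{lem:int_replay}: substitute the replay structure $\tet^{\star}=\left[\begin{smallmatrix}\tet''_H\\\tet'_{\bar{H}}\end{smallmatrix}\right]$ into $\A_i\tet^{\star}$, simplify $\A_i\tet^{\star}-p_i$ into a single linear form, and then argue that, by the genericity of the replayed data, this form vanishes only on a measure-zero set. I would treat the two halves of the boundary separately. For $i\in\partial(H)\subseteq V_H$ I would use that $\A\tet''=\pvec{p}''$ with $\pvec{p}''_H=\pvec{p}_H$ gives $\A_i\tet''=p_i$, so that
\begin{equation*}
\A_i\tet^{\star}-p_i=\A_{i|H}\tet''_H+\A_{i|\bar{H}}\tet'_{\bar{H}}-\big(\A_{i|H}\tet''_H+\A_{i|\bar{H}}\tet''_{\bar{H}}\big)=\A_{i|\bar{H}}\big(\tet'_{\bar{H}}-\tet''_{\bar{H}}\big).
\end{equation*}
For $i\in\partial(\bar{H})\subseteq V_{\bar{H}}$, since all failures lie inside $H$ no failed edge is incident to $i$, so $\A'_i=\A_i$ and $\A'\tet'=\pvec{p}$ gives $\A_i\tet'=p_i$; subtracting yields
\begin{equation*}
\A_i\tet^{\star}-p_i=\A_{i|H}\big(\tet''_H-\tet'_H\big).
\end{equation*}
In both cases the relevant coefficient row is nonzero, namely $\A_{i|\bar{H}}\neq 0$ for $i\in\partial(H)$ and $\A_{i|H}\neq 0$ for $i\in\partial(\bar{H})$, precisely because a boundary node has a neighbor across the cut.

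Next I would cast each event ``$\A_i\tet^{\star}=p_i$'' as a bad event inside the admissible space of replay vectors $\mathcal{T}''=\{\tet:\A_{H|G}\tet=\pvec{p}_H\}$, over which $\tet''$ is drawn generically, i.e.\ with no positive probability mass on any proper affine subspace (mirroring the assumption on $\vec z$ used in Lemma~\ref{lem:not_int}). Each bad event is the solution set of a single affine equation in $\tet''$, hence it is either all of $\mathcal{T}''$ or a proper affine subspace of it; in the latter case it has measure zero. Since $\partial(H)\cup\partial(\bar{H})$ is finite, a union bound then shows that almost surely $\A_i\tet^{\star}\neq p_i$ for every boundary node simultaneously, which is exactly the claim.

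The crux, and the step I expect to be the main obstacle, is showing that each bad event is a \emph{proper} subspace of $\mathcal{T}''$, i.e.\ that the defining functional is not constant on $\mathcal{T}''$. The naive reasoning ``nonzero row implies nonzero functional'' fails here because $\tet''$ is not free in $\mathbb{R}^{n}$ but confined to $\mathcal{T}''$, so one must verify that the gradient covector does not lie in the row space of $\A_{H|G}$. The key tool is that the interior block $\A_{H|H}$ is positive definite, hence invertible: extending any $\vec x$ to $\tilde{\vec x}$ by zeros on $\bar{H}$ and using $\vec x^{T}\A_{H|H}\vec x=\sum_{\{u,v\}\in E}r_{uv}^{-1}(\tilde x_u-\tilde x_v)^2\geq 0$, connectivity of $G$ together with $\bar{H}\neq\emptyset$ forces $\vec x=\vec 0$. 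For $i\in\partial(H)$, constancy of $\tet''\mapsto\A_{i|\bar{H}}\tet''_{\bar{H}}$ on $\mathcal{T}''$ would require a vector $\vec c$ with $\vec c^{T}\A_{H|H}=\vec 0$ and $\vec c^{T}\A_{H|\bar{H}}=\A_{i|\bar{H}}$; invertibility forces $\vec c=\vec 0$ and hence $\A_{i|\bar{H}}=\vec 0$, a contradiction. For $i\in\partial(\bar{H})$, constancy of $\tet''\mapsto\A_{i|H}\tet''_H$ reduces to $\A_{i|H}\A_{H|H}^{-1}\A_{H|\bar{H}}=\vec 0$, which I would rule out from the sign structure: $\A_{H|H}$ is a Stieltjes matrix, so $\A_{H|H}^{-1}\geq 0$ entrywise with strictly positive diagonal, while $\A_{i|H}$ and $\A_{H|\bar{H}}$ have nonpositive entries, and tracking the diagonal term shows the $i$-th coordinate of the product is strictly positive. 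With non-degeneracy established, every bad event is a proper affine subspace of $\mathcal{T}''$ and the genericity argument of the previous paragraph closes the proof.
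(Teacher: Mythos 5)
Your proof is correct, and its skeleton matches the paper's: reduce the event $\A_i\tet^{\star}=p_i$ at a boundary node to a single affine condition on the replayed data and invoke genericity to conclude it fails almost surely. The paper's own proof is essentially a two-line appeal to the same idea as Lemma~\ref{lem:not_int} ("the set of solutions to $\A_i\vec{x}=p_i$ is measure zero and $\pvec{p}''_{\bar{H}}$ is generic"), and it never addresses the point you correctly single out as the crux: $\tet''$ is not free in $\mathbb{R}^n$ but confined to the affine set $\{\tet:\A_{H|G}\tet=\pvec{p}_H\}$, so one must check that the hyperplane $\{\A_i\vec{x}=p_i\}$ does not contain that whole set. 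Your closing of this gap is the genuinely new content: the clean split of the residual into $\A_{i|\bar{H}}(\tet'_{\bar{H}}-\tet''_{\bar{H}})$ for $i\in\partial(H)$ and $\A_{i|H}(\tet''_H-\tet'_H)$ for $i\in\partial(\bar{H})$, the use of positive definiteness of $\A_{H|H}$ (valid since $G$ is connected and $\bar{H}\neq\emptyset$) to rule out the covector lying in the row space of $\A_{H|G}$ in the first case, and the Stieltjes/M-matrix sign argument showing the $i$-th entry of $\A_{i|H}\A_{H|H}^{-1}\A_{H|\bar{H}}$ is strictly positive (the diagonal term $a_{ij}(\A_{H|H}^{-1})_{jj}a_{ji}>0$ for a cross-cut neighbor $j$, all other terms nonnegative) in the second. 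What your version buys is a proof that works for \emph{any} distribution on $\pvec{p}''_{\bar{H}}$ with no mass on proper affine subspaces, rather than an unexamined "generally enough" assumption; what the paper's version buys is brevity. The only cosmetic caveat is that the randomness is formally placed on $\pvec{p}''_{\bar{H}}$ rather than on $\tet''$, but the map between them is an affine bijection modulo the all-ones shift (which does not affect $\A_i\tet''$), so your genericity model is equivalent to the paper's.
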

\begin{proof}
The proof of this lemma is similar to the proof of Lemma~\ref{lem:not_int}. For any $i\in \partial(\bar{H})$, there exists a node $j\in V_H$ such that $a_{ij}\neq 0$. Now since the set of solutions $\vec{x}$ to $\A_i\vec{x}= p_i$ is a measure zero set in $\mathbb{R}^n$ and $\theta_j^{\star}=\theta_j''$ for a generally enough selected vector $p_{\bar{H}}''$, $\A_i\tet^{\star}\neq p_i$ almost surely. A similar argument holds for $i\in \partial(H)$.
\end{proof}
\begin{corollary}\label{cor:S_replay}
$\supp(\A\tet^{\star}-\pvec{p})=\partial(H)\cup\partial(\bar{H})$, almost surely.
\end{corollary}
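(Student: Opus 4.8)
The plan is to read off the claim directly by combining Lemmas~\ref{lem:int_replay} and~\ref{lem:boundary_replay} over a partition of the vertex set. First I would observe that, by the definitions of interior and boundary, every node of $V_H$ either has all its neighbors in $V_H$ or has at least one neighbor in $V_{\bar{H}}$, so $V_H=\intt(H)\cup\partial(H)$ with the two parts disjoint, and likewise $V_{\bar{H}}=\intt(\bar{H})\cup\partial(\bar{H})$. Since $V_H$ and $V_{\bar{H}}$ are themselves disjoint and cover $V$, this yields a partition
\begin{equation*}
V=\big(\intt(H)\cup\intt(\bar{H})\big)\cup\big(\partial(H)\cup\partial(\bar{H})\big)
\end{equation*}
into two disjoint pieces.

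Next, recalling that $\supp(\A\tet^{\star}-\pvec{p})=\{\,i\in V:\A_i\tet^{\star}\neq p_i\,\}$, I would treat the two pieces separately. For $i$ in the first piece, Lemma~\ref{lem:int_replay} gives $\A_i\tet^{\star}=p_i$, so no such index lies in the support. For $i$ in the second piece, Lemma~\ref{lem:boundary_replay} gives $\A_i\tet^{\star}\neq p_i$ almost surely, so each such index does lie in the support. Combining the two inclusions gives $\supp(\A\tet^{\star}-\pvec{p})=\partial(H)\cup\partial(\bar{H})$.

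The only point requiring care is the almost-sure bookkeeping, which is also the step I expect to be the main (and only) subtlety. Lemma~\ref{lem:boundary_replay} produces, for each fixed boundary node $i$, a measure-zero exceptional event outside of which $\A_i\tet^{\star}\neq p_i$; the corollary asserts instead that \emph{all} boundary nodes simultaneously belong to the support. Because $V$ is finite there are only finitely many such nodes, and a finite union of null sets is null, so almost surely every node of $\partial(H)\cup\partial(\bar{H})$ is in the support at once. Beyond this finite-union argument there is no real obstacle, since the two lemmas already carry out the analytic work.
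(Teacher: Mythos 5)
Your argument is exactly the one the paper intends: the corollary is stated without proof as an immediate consequence of Lemmas~\ref{lem:int_replay} and~\ref{lem:boundary_replay}, using precisely the partition $V=\intt(H)\cup\partial(H)\cup\intt(\bar{H})\cup\partial(\bar{H})$ that you spell out. Your extra care with the finite union of null sets is correct and fills in the only detail the paper leaves implicit.
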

From comparing Corollaries~\ref{cor:int_Hbar} and \ref{cor:S_replay}, one can see that in the replay attack case, $S_0=G[\supp(\A\tet^{\star}-\pvec{p})]$ does not contain the attacked area $H$ anymore. In the following lemma, we show how one can still contain the attacked area in this case.
\begin{lemma}\label{lem:cc_H_Hbar}
If $C_1,C_2,\dots,C_k$ are the connected components of $G\backslash S_0$, then these connected components can be divided into two disjoint sets $\{i_1,i_2,\dots,i_s\}$ and $\{j_1,j_2,\dots,j_t\}$ such that $G[\intt(H)] = C_{i_1}\cup C_{i_2}\cup\dots\cup C_{i_s}$ and $G[\intt(\bar{H})] = C_{j_1}\cup C_{j_2}\cup\dots\cup C_{j_t}$.
\end{lemma}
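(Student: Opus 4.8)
The plan is to use Corollary~\ref{cor:S_replay} to pin down exactly which nodes survive the deletion of $S_0$, and then to observe that the interior of $H$ and the interior of $\bar{H}$ cannot be joined by an edge, so they fall into separate connected components. First I would record that, by Corollary~\ref{cor:S_replay}, $V_{S_0}=\partial(H)\cup\partial(\bar{H})$ almost surely. Since every node of $H$ either has all of its neighbors inside $H$ or has at least one neighbor outside $H$, the definitions of interior and boundary give a disjoint partition $V_H=\intt(H)\cup\partial(H)$, and symmetrically $V_{\bar{H}}=\intt(\bar{H})\cup\partial(\bar{H})$. Hence the nodes remaining after deleting $V_{S_0}$ are exactly $V\backslash V_{S_0}=\intt(H)\cup\intt(\bar{H})$, so that $G\backslash S_0=G[\intt(H)\cup\intt(\bar{H})]$.

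The crux of the argument---and the only step requiring an actual observation---is that $G$ has no edge with one endpoint in $\intt(H)$ and the other in $\intt(\bar{H})$. I would prove this by contradiction: suppose $\{u,v\}\in E$ with $u\in\intt(H)$ and $v\in\intt(\bar{H})$. Then $v\in N(u)$, but $u\in\intt(H)$ forces $N(u)\subseteq V_H$ by the definition of the interior, so $v\in V_H$, contradicting $v\in\intt(\bar{H})\subseteq V_{\bar{H}}$. Thus no such crossing edge exists.

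Finally, because there are no edges between $\intt(H)$ and $\intt(\bar{H})$, each connected component of $G[\intt(H)\cup\intt(\bar{H})]$ lies wholly inside $\intt(H)$ or wholly inside $\intt(\bar{H})$; indeed, any path in $G\backslash S_0$ starting in $\intt(H)$ can never reach $\intt(\bar{H})$. Grouping the components $C_1,\dots,C_k$ according to which of the two interiors contains them produces the two disjoint index sets $\{i_1,\dots,i_s\}$ and $\{j_1,\dots,j_t\}$. Moreover, since the absence of crossing edges means the edges of $G\backslash S_0$ among the nodes of $\intt(H)$ are precisely those of $G[\intt(H)]$, the union of the components sitting inside $\intt(H)$ equals $G[\intt(H)]$ as a subgraph, and likewise the union of the remaining components equals $G[\intt(\bar{H})]$.

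I do not anticipate any genuine obstacle here: the entire lemma reduces to the one-line no-crossing-edge observation, which is immediate from the definition of the interior, combined with the identification $V\backslash V_{S_0}=\intt(H)\cup\intt(\bar{H})$ supplied by Corollary~\ref{cor:S_replay}. The only point to state carefully is that the claimed equalities are equalities of \emph{subgraphs}, so I would be explicit that the induced edges are recovered correctly, as noted above.
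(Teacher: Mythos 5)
Your proof is correct and takes the same route as the paper, which simply states that the lemma is a direct consequence of Corollary~\ref{cor:S_replay}; you have filled in the natural details (the identification $V\backslash V_{S_0}=\intt(H)\cup\intt(\bar{H})$ and the no-crossing-edge observation from the definition of the interior), all of which check out.
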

\begin{proof}
It is a direct result of Corollary~\ref{cor:S_replay}.
\end{proof}
\begin{lemma}\label{lem:cc_common_node}
For two connected components $C_i$ and $C_j$ of $G\backslash S_0$, if $N(C_i)\cap N(C_j)\neq \emptyset$, then either $C_i\cup C_j\subseteq \intt(H)$ or $C_i\cup C_j\subseteq \intt(\bar{H})$.
\end{lemma}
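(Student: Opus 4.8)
The plan is to argue by contradiction, reducing everything to the location of the neighbor sets $N(C_i)$ relative to the partition of $V$ into $\intt(H)$, $\partial(H)$, $\partial(\bar H)$, and $\intt(\bar H)$. First I would record the consequence of Corollary~\ref{cor:S_replay}: since $V_{S_0}=\supp(\A\tet^{\star}-\pvec{p})=\partial(H)\cup\partial(\bar H)$, the vertex set of $G\backslash S_0$ is exactly $\intt(H)\cup\intt(\bar H)$. By Lemma~\ref{lem:cc_H_Hbar} each connected component of $G\backslash S_0$ therefore lies entirely inside $\intt(H)$ or entirely inside $\intt(\bar H)$, so the statement to prove is simply that the ``mixed'' configuration $C_i\subseteq\intt(H)$, $C_j\subseteq\intt(\bar H)$ is incompatible with $N(C_i)\cap N(C_j)\neq\emptyset$.

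The key step is to pin down where the neighbors of a component can live. Because $C_i$ is a \emph{maximal} connected subset of $G\backslash S_0$, any neighbor of $C_i$ that lay outside $V_{S_0}$ would itself be a vertex of $G\backslash S_0$ adjacent to $C_i$, hence would already belong to $C_i$; this forces $N(C_i)\subseteq V_{S_0}=\partial(H)\cup\partial(\bar H)$. Next I would combine this with the definition of interior: if $C_i\subseteq\intt(H)$, then every vertex of $C_i$ has all of its neighbors in $V_H$, so $N(C_i)\subseteq V_H$, and intersecting with the previous containment gives $N(C_i)\subseteq V_H\cap(\partial(H)\cup\partial(\bar H))=\partial(H)$, using $\partial(\bar H)\subseteq V_{\bar H}$. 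Symmetrically, $C_j\subseteq\intt(\bar H)$ yields $N(C_j)\subseteq\partial(\bar H)$.

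Finally I would close the contradiction: since $\partial(H)\subseteq V_H$ and $\partial(\bar H)\subseteq V_{\bar H}$ and $V_H\cap V_{\bar H}=\emptyset$, the two neighbor sets are disjoint, so $N(C_i)\cap N(C_j)=\emptyset$, contradicting the hypothesis. Hence $C_i$ and $C_j$ cannot lie on opposite sides, and the conclusion $C_i\cup C_j\subseteq\intt(H)$ or $C_i\cup C_j\subseteq\intt(\bar H)$ follows. The only genuinely non-bookkeeping step is the containment $N(C_i)\subseteq V_{S_0}$, which rests on the maximality of connected components; the rest is definitional manipulation of interior/boundary together with the disjointness of $V_H$ and $V_{\bar H}$. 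I expect no serious obstacle here, since the statement is essentially a separation property already encoded in Corollary~\ref{cor:S_replay}.
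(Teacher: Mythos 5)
Your proof is correct and follows essentially the same route as the paper's: reduce to the claim that a component inside $\intt(H)$ has all its neighbors in $\partial(H)$ (and symmetrically for $\intt(\bar H)$), then conclude from the disjointness of $\partial(H)$ and $\partial(\bar H)$. The only difference is that you explicitly justify the containment $N(C_i)\subseteq\partial(H)$ via the maximality of the connected component together with the definition of interior, a step the paper asserts without elaboration.
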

\begin{proof}
From Lemma~\ref{lem:cc_H_Hbar}, for any $i$, either $C_i\subseteq G[\intt(\bar{H})]$ or $C_i\subseteq G[\intt(H)]$. If $C_i\subseteq G[\intt(\bar{H})]$ then $N(C_i)\subseteq \partial(\bar{H})$, and if $C_i\subseteq G[\intt(H)]$ then $N(C_i)\subseteq \partial(H)$. Hence, since $\partial(\bar{H})\cap\partial(H)=\emptyset$, if $N(C_i)\cap N(C_j)\neq \emptyset$, then either $C_i\cup C_j\subseteq \intt(H)$ or $C_i\cup C_j\subseteq \intt(\bar{H})$.
\end{proof}
Following Lemma~\ref{lem:cc_common_node}, one can see that connected components $C_1,C_2,\dots,C_k$ can be combined into disjoint subgraphs $G_1,\\G_2,\dots, G_t$ such that for any two of these subgraphs such as $G_i$ and $G_j$, $N(G_i)\cap N(G_j)=\emptyset$. Moreover for any $i$, either $G_i\subseteq G[\intt(H)]$ or $G_i\subseteq G[\intt(\bar{H})]$. In the following lemma, we use this fact to contain the attacked area.
\begin{lemma}\label{lem:approx_H_replay}
There exists $1\leq i\leq t$, such that $H\subset G\backslash G_i$. Moreover, $H\subseteq  G[\intt(G\backslash G_i)]$.
\end{lemma}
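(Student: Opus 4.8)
The plan is to produce the required index by selecting $G_i$ to be one of the merged subgraphs that lies inside $\intt(\bar{H})$, and then to read off both assertions from the separating structure of $S_0$.

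First I would confirm that such an index is available. The merging procedure just established guarantees that every $G_j$ satisfies either $G_j\subseteq G[\intt(H)]$ or $G_j\subseteq G[\intt(\bar{H})]$, and by Lemma~\ref{lem:cc_common_node} no merged subgraph mixes the two sides. Because the $G_j$'s partition the vertices of $G\backslash S_0$, which by Corollary~\ref{cor:S_replay} equal $\intt(H)\cup\intt(\bar{H})$ (so that $V_{S_0}=\partial(H)\cup\partial(\bar{H})$), the set $\intt(\bar{H})$ is precisely the union of those $G_j$ it contains. As long as $\intt(\bar{H})\neq\emptyset$, at least one merged subgraph lies inside it; fix such a $G_i$.

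The first containment is then immediate: $V_{G_i}\subseteq\intt(\bar{H})\subseteq V_{\bar{H}}$ is disjoint from $V_H$, so deleting $G_i$ removes no node of $H$ and $H\subset G\backslash G_i$. For the ``moreover'' claim I would show that deleting $G_i$ also removes no \emph{neighbor} of $H$. Let $v\in V_H$ and $u\in N(v)$. If $u\in V_H$ then $u\notin V_{\bar{H}}\supseteq V_{G_i}$; if instead $u\in V_{\bar{H}}$, then $u$ is a vertex of $\bar{H}$ adjacent to the $H$-vertex $v$, hence $u\in\partial(\bar{H})\subseteq V_{S_0}$, which is disjoint from $V_{G\backslash S_0}\supseteq V_{G_i}$. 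In either case $N(v)\cap V_{G_i}=\emptyset$, so $v\in\intt(G\backslash G_i)$, giving $V_H\subseteq\intt(G\backslash G_i)$.

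The step I expect to carry the weight is guaranteeing the existence of a merged subgraph contained in $\intt(\bar{H})$: this is exactly the point where one needs $\intt(\bar{H})\neq\emptyset$, equivalently $\cl(H)\neq V$ (the attacked area and its immediate neighborhood do not already exhaust the grid), together with the fact from Lemma~\ref{lem:cc_common_node} that merging never joins an interior-of-$H$ component to an interior-of-$\bar{H}$ component. Once such a $G_i$ is fixed, both containments are routine, resting only on the observation that $S_0=\partial(H)\cup\partial(\bar{H})$ acts as a separating layer that shields $H$ and all of its neighbors from $G_i$.
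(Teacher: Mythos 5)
Your proof is correct and follows essentially the same route as the paper's: pick a merged subgraph $G_i$ lying in $\intt(\bar{H})$ (possible by Lemmas~\ref{lem:cc_H_Hbar} and \ref{lem:cc_common_node}), and then observe that $S_0=G[\partial(H)\cup\partial(\bar{H})]$ shields $H$ and all its neighbors from $G_i$, which is exactly the paper's use of $\partial(\bar{H})\subset G\backslash G_i$. The only difference is that you make explicit the (implicit) requirement $\intt(\bar{H})\neq\emptyset$, which the paper leaves unstated; that is a reasonable added precaution rather than a gap.
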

\begin{proof}
The first part of the proof is the direct result of Lemmas~\ref{lem:cc_H_Hbar} and \ref{lem:cc_common_node}. To prove the second part, notice that for any $i$, $S_0\subset G\backslash G_i$. Therefore, for any $i$, $\partial(\bar{H})\subset G\backslash G_i$. Hence, if $H\subset G\backslash G_i$, since $\partial(\bar{H})\subset G\backslash G_i$, one can verify that $H\subseteq  G[\intt(G\backslash G_i)]$.
\end{proof}

Lemma~\ref{lem:approx_H_replay} demonstrates that at least one of $G\backslash G_i$ contains the attacked area. Hence, one can use this fact to contain the attacked area after a data replay attack.

\begin{figure}[t]
\centering
\includegraphics[scale=0.4]{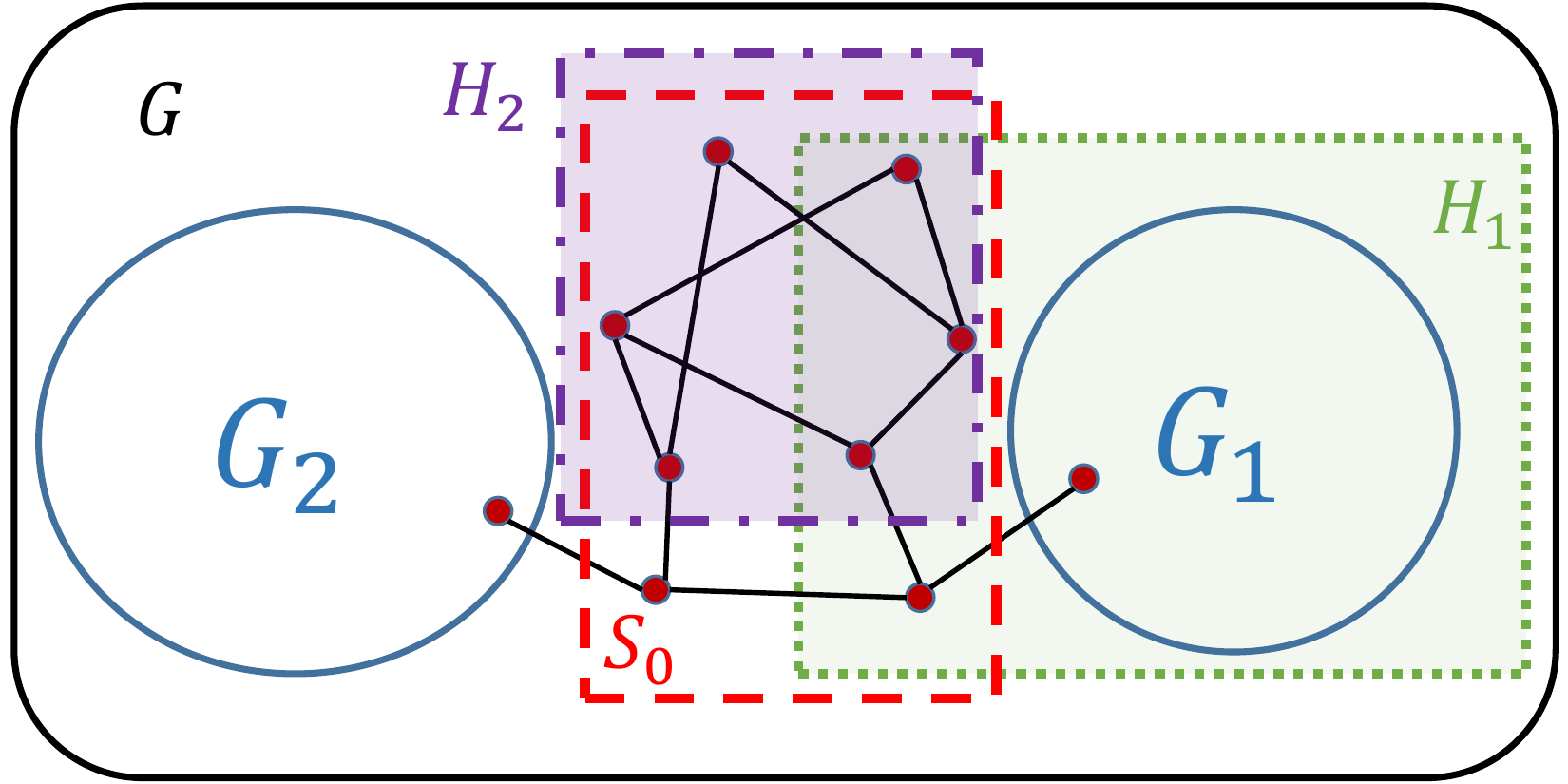}
\vspace*{-0.2cm}
\caption{An ambiguous scenario. Both a data replay attack on the attacked area $H_1$ or a data distortion attack on the attacked area $H_2$ result in the same  $S_0=G[\supp(\A\tet^{\star}-\vec{p})]$.}
\label{fig:Amb_scen}
\end{figure}

\subsection{The ATAC Module} \label{subsec:ATAC}
Using the results in the previous subsections, here we introduce the ATtacked Area Containment (ATAC) Module for containing the attacked area after both types of data attacks. The main challenge is to distinguish between the two data attacks. As shown in Fig.~\ref{fig:Amb_scen}, there are scenarios for which the data attack type cannot be recognized by simply looking at $S_0$. Hence, the ATAC Module does not return a single subgraph containing the attacked area but a series of possible subgraphs. In Sections~\ref{sec:line} and \ref{sec:algorithm}, we show that by defining the \emph{confidence of the solution}, an algorithm can go over all of these subgraphs until it detects the attacked area and the set of line failures  with high confidence.

The steps of the ATAC Module are summarized in Module~\ref{module:ATAC}. As can be seen, $S_0$ is the first possible subgraph returned by the ATAC module, which is for the case when there is a data distortion attack. Then based on Lemma~\ref{lem:approx_H_replay}, $S_1:=G\backslash G_1, S_2:=G\backslash G_2,\dots, S_t:=G\backslash G_t$ are other possible areas containing the attacked area, if there is a replay attack. Notice that since $t<|V|$, therefore the ATAC module is a polynomial time algorithm.

\begin{module}[t]
\caption{ATtacked Area Containment (ATAC)}
\label{module:ATAC}
\small
\begin{trivlist}
\item\textbf{Input:} $G$, $\A$, $\tet$, and $\tet^{\star}$
\end{trivlist}
\vspace*{-3mm}
\begin{algorithmic}[1]
\STATE Compute $\vec{p} = \A\tet$
\STATE Compute $S_0 = G[\supp(\A\tet^{\star}- \vec{p})]$
\STATE Find the connected components $C_1,C_2,\dots,C_k$ of $G\backslash S_0$
\STATE Using Lemma~\ref{lem:cc_common_node}, combine the connected components with common neighbors to obtain $G_1,\dots,G_t$ (sorted based on their size from largest to smallest)
\STATE Return $S_0, S_1:=G\backslash G_1, S_2:=G\backslash G_2,\dots,S_t:=G\backslash G_t$
\end{algorithmic}
\end{module}

\subsection{Improving Attacked Area Approximation}\label{subsec:improve}
Assume that from the subgraphs returned by the ATAC Module, $S^*$ is one of them that contains the attacked area $H$. Following Lemma~\ref{lem:estimate_H1} and Lemma~\ref{lem:approx_H_replay}, $S_a:=G[\intt(S^*)]$ is a better approximation for the attacked area $H$.  In order to find a more accurate approximation for $H$, we provide the following lemma which is similar to \cite[Lemma 1]{SYZ2015}.

\begin{lemma}\label{lem:StoH}
For a subgraph $S$, if $V_H\subseteq V_S$, then:
\begin{equation}\label{eq:StoH}
\A_{\bar{S}|G} (\tet-\tet') = 0.
\end{equation}
\end{lemma}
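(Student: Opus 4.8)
The plan is to reduce the claim to a single restriction of the flow equations, once we know that the rows of the admittance matrix indexed outside the attacked area are unchanged by the attack. Note first that $V_H\subseteq V_S$ is equivalent to $V_{\bar S}\subseteq V_{\bar H}$, so every row index appearing in $\A_{\bar S|G}$ corresponds to a node lying outside $H$.

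The crucial step is to establish that $\A_{\bar S|G}=\A'_{\bar S|G}$. The reasoning is that the attack only disconnects edges in $F\subseteq E_H$, all of which have both endpoints in $V_H$. Consequently, no edge incident to a node $i\notin V_H$ is among the failed lines. For such an $i$, each off-diagonal entry $a_{ij}$ is determined solely by whether the edge $\{i,j\}$ is present and by its reactance, both of which are preserved, so $a_{ij}=a'_{ij}$; and the diagonal entry $a_{ii}=-\sum_{w\in N(i)}a_{iw}$ is likewise preserved because the set of edges incident to $i$ and their reactances are untouched. Hence the entire row $\A_i$ equals $\A'_i$ for every $i\notin V_H$, and in particular for every $i\in V_{\bar S}$, which gives $\A_{\bar S|G}=\A'_{\bar S|G}$.

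With this in hand, I would invoke the standing assumption $\vec p=\A\tet=\A'\tet'$ and simply restrict both equalities to the rows indexed by $V_{\bar S}$, obtaining $\A_{\bar S|G}\tet=\vec p_{\bar S}=\A'_{\bar S|G}\tet'$. Substituting $\A'_{\bar S|G}=\A_{\bar S|G}$ yields $\A_{\bar S|G}\tet=\A_{\bar S|G}\tet'$, which rearranges to~(\ref{eq:StoH}). This mirrors the argument behind \cite[Lemma 1]{SYZ2015}.

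I expect the only genuine obstacle to be the verification that the attack perturbs $\A$ exactly in the principal submatrix indexed by $V_H$ and nowhere else—in particular that the diagonal entries $a_{ii}$ for $i\notin V_H$ are unaffected, since a careless reader might worry that a node outside $H$ adjacent to a failed edge could have its degree term change. Making precise that such edges are never in $E_H$ (as they have an endpoint outside $V_H$) settles this; the remaining restriction of the flow equations is routine.
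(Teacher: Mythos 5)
Your proposal is correct and follows essentially the same route as the paper's proof: both establish $\A_{\bar S|G}=\A'_{\bar S|G}$ from the fact that all failed lines lie in $E_H$ (hence touch only rows indexed by $V_H\subseteq V_S$), and then subtract the restrictions of $\A\tet=\vec p$ and $\A'\tet'=\vec p$ to the rows of $V_{\bar S}$. You simply spell out in more detail the entrywise verification that the paper dismisses with ``it can be seen that.''
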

\begin{proof}
Since all the line failures are inside $H$, and also $V_H\subseteq V_S$, therefore it can be seen that $\A_{\bar{S}|G} = \A_{\bar{S}|G}'$. On the other hand, $\A_{\bar{S}|G}\tet = \vec{p}_{\bar{S}}$ and $\A_{\bar{S}|G}'\tet' = \vec{p}_{\bar{S}}$. Hence, $\A_{\bar{S}|G}\tet - \A_{\bar{S}|G}'\tet'=0$ and therefore $\A_{\bar{S}|G} (\tet-\tet') = 0$.
\end{proof}

Lemma~\ref{lem:StoH} can be effectively used to estimate the phase angle of the nodes in $S$ and to detect the attacked area $H$ using these estimated values. The idea is to break (\ref{eq:StoH}) into parts that are known and unknown as follows:
\begin{align*}
\A_{\bar{S}|\bar{S}} (\tet_{\bar{S}}-\tet'_{\bar{S}}) + \A_{\bar{S}|S} (\tet_{S}-\tet'_{S}).
\end{align*}
Notice that since $V_H\subseteq V_S$, therefore $\tet'_{\bar{S}}=\tet^{\star}_{\bar{S}}$. Hence, the only unknown variable in the equation above is $\tet'_{S}$. Assume $\vec{y}\in \mathbb{R}^{|V_S|}$ is a solution to the following equation:
\begin{equation}\label{eq:H_detect}
\A_{\bar{S}|S} \vec{y} = \A_{\bar{S}|\bar{S}} (\tet_{\bar{S}}-\tet^{\star}_{\bar{S}}) + \A_{\bar{S}|S} \tet_{S}.
\end{equation}
In the following lemma, we demonstrate that $\supp(\vec{y}-\tet^{\star}_{S})$ can be used to estimate $H$.
\begin{lemma}\label{lem:estimate_H2}
If $\vec{y}$ is a solution to (\ref{eq:H_detect}), $V_H\subseteq \supp(\vec{y}-\tet^{\star}_{S})$, almost surely.
\end{lemma}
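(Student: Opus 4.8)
The plan is to show that the computed vector $\vec{y}$ agrees with the true post-attack angles $\tet'_S$ up to an element of $\nul(\A_{\bar{S}|S})$, and that this forces $y_i \neq \theta^{\star}_i$ for every attacked node $i\in V_H$, because the observed value $\theta^{\star}_i$ is a random (or generically chosen) perturbation that almost surely misses the single deterministic value $y_i$.

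First I would show that $\tet'_S$ is itself a solution of~(\ref{eq:H_detect}). Starting from Lemma~\ref{lem:StoH}, expand $\A_{\bar{S}|G}(\tet-\tet')=0$ into its $\bar{S}$ and $S$ column blocks. Since $V_H\subseteq V_S$ we have $\bar{S}\subseteq\bar{H}$, so the nodes of $\bar{S}$ are not attacked and $\tet'_{\bar{S}}=\tet^{\star}_{\bar{S}}$. Substituting this and rearranging gives exactly $\A_{\bar{S}|S}\tet'_S=\A_{\bar{S}|\bar{S}}(\tet_{\bar{S}}-\tet^{\star}_{\bar{S}})+\A_{\bar{S}|S}\tet_S$, i.e. $\tet'_S$ solves~(\ref{eq:H_detect}). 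Hence, for the computed solution $\vec{y}$, the difference $\vec{w}:=\vec{y}-\tet'_S$ lies in $\nul(\A_{\bar{S}|S})$.

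The crucial observation, which I would stress next, is that the right-hand side of~(\ref{eq:H_detect}) depends only on $\tet$ and on $\tet^{\star}_{\bar{S}}$, and never on the modified angles $\tet^{\star}_H$ inside the attacked area. Because $\bar{S}\subseteq\bar{H}$, the entries $\tet^{\star}_{\bar{S}}$ are un-attacked and equal the deterministic $\tet'_{\bar{S}}$. Therefore $\vec{y}$, and with it $\vec{w}$, is independent of the random data modification the attacker applies inside $H$ (the distortion noise $\vec{z}$, respectively the generically replayed vector); this independence is precisely what legitimizes the measure-zero argument, since an adversary who could choose $\vec{y}$ after seeing the modification could make the support miss $V_H$. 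Now fix $i\in V_H$ and write $(\vec{y}-\tet^{\star}_S)_i=w_i+(\theta'_i-\theta^{\star}_i)$. For a data distortion attack $\theta^{\star}_i=\theta'_i+z_i$, so this equals $w_i-z_i$; as $\vec{w}$ is deterministic and $\{\vec{z}:z_i=w_i\}$ is a measure-zero set (exactly as argued in Lemma~\ref{lem:not_int}), the event $z_i=w_i$ has probability zero. For a data replay attack $\theta^{\star}_i=\theta''_i$ depends affinely, and non-trivially because $G$ is connected, on the generically chosen $\pvec{p}''_{\bar{H}}$, so the single value $\theta''_i=\theta'_i+w_i$ is avoided for almost all choices. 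In either case $(\vec{y}-\tet^{\star}_S)_i\neq0$ almost surely, and a union bound over the finitely many $i\in V_H$ yields $V_H\subseteq\supp(\vec{y}-\tet^{\star}_S)$ almost surely.

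The step I expect to require the most care is the independence claim of the third paragraph together with the non-degeneracy of the replay map $\pvec{p}''_{\bar{H}}\mapsto\theta''_i$: these are what allow $w_i$ and $\theta'_i$ to be treated as fixed constants so that the "almost surely" conclusion holds uniformly across both attack types. The reduction to $\tet'_S$ being a solution and the null-space membership are routine once the block decomposition and $\tet'_{\bar{S}}=\tet^{\star}_{\bar{S}}$ are in place.
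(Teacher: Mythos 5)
Your proof is correct and follows essentially the same route as the paper: both rest on the observation that (\ref{eq:H_detect}) involves only $\tet$, $\A$, and $\tet^{\star}_{\bar{S}}=\tet'_{\bar{S}}$, so any solution $\vec{y}$ is determined independently of the random modification $\tet^{\star}_H$, and the genericity of $\tet^{\star}_H$ then forces $y_i\neq\theta^{\star}_i$ for every $i\in V_H$ almost surely. Your explicit decomposition $\vec{y}=\tet'_S+\vec{w}$ with $\vec{w}\in\nul(\A_{\bar{S}|S})$ is a cleaner packaging of the same idea (and consistent with how the paper derives (\ref{eq:H_detect}) from Lemma~\ref{lem:StoH}), not a different argument.
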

\begin{proof}
Since $\tet^{\star}_H$ is selected generally enough (for both the data distortion and replay attacks) for any $i\in H$, the only way $y_i = \theta_i^{\star}$ satisfying (\ref{eq:H_detect}) is that $\A_{\bar{S}|i}=0$. In that case any $y_i\in \mathbb{R}$ satisfies (\ref{eq:H_detect}). So the set of solutions $\vec{y}$ such that $y_i = \theta_i^{\star}$ is a measure zero set and $y_i \neq \theta_i^{\star}$  almost surely. Hence, $V_H\subseteq \supp(\vec{y}-\tet^{\star}_{S})$, almost surely.
\end{proof}
Following Lemma~\ref{lem:estimate_H2}, if $\vec{y}$ is a solution to (\ref{eq:H_detect}) for $S=S_a$, then $S_b:=G[\supp(\vec{y}-\tet^{\star}_{S_a})]$ is a better approximation for the attacked area $H$. Fig.~\ref{fig:attack_containment} shows the difference between $S^*$, $S_a$, and $S_b$ in approximating the attacked area for the case of a distortion attack.

\begin{figure}[t]
\centering
\includegraphics[scale=0.3]{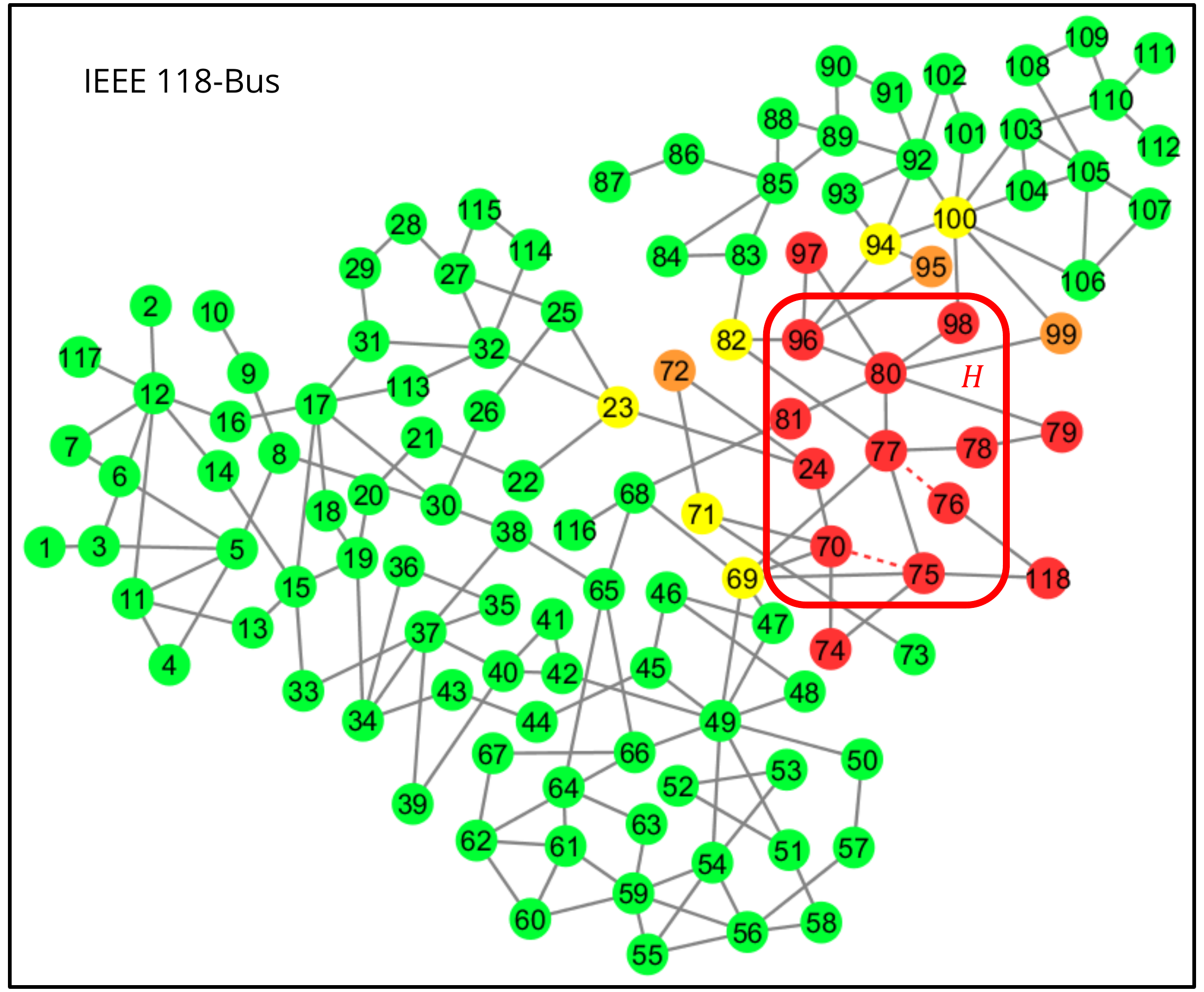}
\vspace*{-0.2cm}
\caption{$H$ is an induced subgraph of $G$ that represents the  attacked area with data distortion attack. The red, orange, and yellow nodes represent the nodes in $S_b$, $S_a\backslash S_b$, and $S^*\backslash S_a$, respectively.}
\label{fig:attack_containment}
\end{figure}

Finally, the following lemma demonstrates when $S_b$ is exactly equal to $H$.
\begin{lemma}\label{lem:exact}
For a subgraph $S$ such that $V_H\subseteq V_S$, if $V_S\backslash V_H\subseteq \partial(S)$ and there is a matching between the nodes in $\bar{S}$ and $\partial(S)$ that covers all the nodes in $\partial(S)$, then $G[\supp(\vec{y}-\tet^{\star}_{S})]=H$, in which $\vec{y}$ is the solution to (\ref{eq:H_detect}).
\end{lemma}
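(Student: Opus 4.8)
The plan is to show that the two node sets $\supp(\vec{y}-\tet^{\star}_{S})$ and $V_H$ coincide; since $G[\,\cdot\,]$ denotes the induced subgraph on a node set, this is precisely the claim $G[\supp(\vec{y}-\tet^{\star}_{S})]=H$. One inclusion is free: Lemma~\ref{lem:estimate_H2} already gives $V_H\subseteq\supp(\vec{y}-\tet^{\star}_{S})$ almost surely, so the entire content of the lemma is the reverse inclusion $\supp(\vec{y}-\tet^{\star}_{S})\subseteq V_H$, i.e.\ that $y_i=\theta^{\star}_i$ for every $i\in V_S\backslash V_H$.

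First I would exhibit a canonical solution of (\ref{eq:H_detect}). Because $V_H\subseteq V_S$, Lemma~\ref{lem:StoH} yields $\A_{\bar{S}|G}(\tet-\tet')=0$; splitting this into the $\bar{S}$ and $S$ blocks and using that $\tet'_{\bar{S}}=\tet^{\star}_{\bar{S}}$ (the nodes of $\bar{S}$ lie outside $H$, so their phase angles are unmodified) rearranges exactly into $\A_{\bar{S}|S}\tet'_S=\A_{\bar{S}|\bar{S}}(\tet_{\bar{S}}-\tet^{\star}_{\bar{S}})+\A_{\bar{S}|S}\tet_S$. Hence $\vec{y}=\tet'_S$ solves (\ref{eq:H_detect}), and every solution differs from $\tet'_S$ by an element of $\nul(\A_{\bar{S}|S})$.

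The second step uses the two structural hypotheses on $S$. The columns of $\A_{\bar{S}|S}$ indexed by $\intt(S)$ vanish, so a null vector is unconstrained there; the matching hypothesis is exactly what forces the remaining (boundary) columns to be independent, so that $\nul(\A_{\bar{S}|S})$ consists precisely of vectors supported on $\intt(S)$. This is the matching-based unique-recovery fact of \cite[Corollary~2]{SYZ2015}: a matching saturating $\partial(S)$ from $\bar{S}$ pins down the solution on $\partial(S)$. Writing $\vec{y}=\tet'_S+\vec{n}$ with $\supp(\vec{n})\subseteq\intt(S)$, the hypothesis $V_S\backslash V_H\subseteq\partial(S)$ (equivalently $\intt(S)\subseteq V_H$) makes $\vec{n}$ vanish on $V_S\backslash V_H$; and on those same nodes $\tet'_S-\tet^{\star}_S=0$ since they lie outside $H$. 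Therefore $\vec{y}-\tet^{\star}_S=(\tet'_S-\tet^{\star}_S)+\vec{n}$ vanishes on $V_S\backslash V_H$, giving $\supp(\vec{y}-\tet^{\star}_S)\subseteq V_H$; combined with Lemma~\ref{lem:estimate_H2} this is the desired equality.

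I expect the only delicate point to be the claim that a matching saturating $\partial(S)$ makes the boundary columns of $\A_{\bar{S}|S}$ linearly independent (equivalently, that (\ref{eq:H_detect}) has a unique solution on $\partial(S)$). A system of distinct representatives guarantees a nonzero diagonal in the associated square submatrix, but not automatically a nonvanishing determinant for arbitrary reactance values; I would lean on \cite[Corollary~2]{SYZ2015} for this recovery statement, noting that, as in its use in the proof of Lemma~\ref{lem:hardness_H}, the cleanest regime is when the matched external neighbors render the relevant submatrix essentially triangular, and treating the reactances as generic enough to rule out accidental cancellation.
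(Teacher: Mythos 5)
Your proof is correct and follows essentially the same route as the paper: both rely on $\A_{\bar{S}|\intt(S)}=0$ together with \cite[Corollary~2]{SYZ2015} to pin every solution of (\ref{eq:H_detect}) to $\tet'$ on $\partial(S)$, then use $V_S\backslash V_H\subseteq\partial(S)$ to get $y_i=\theta'_i=\theta^{\star}_i$ off $H$, and genericity of $\tet^{\star}_H$ (Lemma~\ref{lem:estimate_H2}) for the reverse inclusion. Your explicit null-space decomposition $\vec{y}=\tet'_S+\vec{n}$ is just a slightly more structured phrasing of the paper's argument.
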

\begin{proof}
If there is a matching between the nodes inside and outside of $H$ that covers all the nodes in $\partial(S)$, one can prove that $\A_{\bar{S}|\partial(S)}$ has linearly independent columns, almost surely (see~\cite[Corollary 2]{SYZ2015}). Moreover, it is easy to see that $\A_{\bar{S}|\intt(S)}=0$. Hence, if $\vec{y}$ is a solution to (\ref{eq:H_detect}), $y_{\partial(H)}=\tet'_{\partial(H)}$. Now since $V_S\backslash V_H\subseteq \partial(S)$, for any $i$ in $V_S\backslash V_H$, $y_i = \theta'_i=\theta^{\star}_i$. On the other hand, since $\tet^{\star}_H$ are selected generally enough, one can verify that for any $i\in H$, $y_i\neq\tet^{\star}_i$, almost surely. Therefore, $G[\supp(\vec{y}-\tet^{\star}_{S})]=H$, almost surely.
\end{proof}

\section{Line Failures Detection}\label{sec:line}
In the previous section, we provided methods to find a good approximation $S$ for the attacked area $H$. In this section, we provide a method to detect line failures inside $S$. For this reason,  we use and build on the idea introduced in~\cite{SYZ2015}. It was proved in~\cite{SYZ2015} that if the attacked area $H$ is known, then there always exists feasible vectors $\vec{x}\in\mathbb{R}^{|E_H|}$ and $\vec{y}\in\mathbb{R}^{|V_H|}$ satisfying the conditions of the following optimization problem such that $\supp(\vec{x})=F$ and $\vec{y}=\tet_H'$:
\begin{eqnarray}\label{eq:simul_detect}
&&\min_{\vec{x},\vec{y}} \|\vec{x}\|_1~\text{s.t.}\nonumber\\
&&\A_{H|H}(\tet_H-\vec{y})+\A_{H|\bar{H}}(\tet_{\bar{H}}-\tet'_{\bar{H}})=\D_H\vec{x}\\
&&\A_{\bar{H}|H}(\tet_H-\vec{y})+\A_{\bar{H}|\bar{H}}(\tet_{\bar{H}}-\tet'_{\bar{H}})=0.\nonumber
\end{eqnarray}

Notice that  the  optimization problem~(\ref{eq:simul_detect}) can be solved efficiently using Linear Program (LP). It is proved in~\cite{SYZ2015} that under some conditions on $H$ and the set of line failures $F$, the solution to~(\ref{eq:simul_detect}) is unique, therefore the relaxation is exact and the set of line failures can be detected by solving~(\ref{eq:simul_detect}). In particular, it is proved in~\cite{SYZ2015} that if $H$ is acyclic and there is a matching between the nodes in $H$ and $\bar{H}$ that covers $H$, the solution to~(\ref{eq:simul_detect}) is unique for any set of line failures.

Since the conditions on $H$ and $F$ as described in~\cite{SYZ2015} may not always hold for the exactness of the line failures detection using (\ref{eq:simul_detect}), it cannot be used in general cases to detect line failures. To address this issue, here, we introduce a randomized version of (\ref{eq:simul_detect}).

Assume that $\W \in \mathbb{R}^{|E_S|\times |E_S|}$ is a diagonal matrix. We show that the solution to the following optimization problem can detect line failures in $S$ accurately for a ``good" matrix $\W$:
\begin{eqnarray}\label{eq:weighted_simul_detect}
&&\min_{\vec{x},\vec{y}} \|\W\vec{x}\|_1~\text{s.t.}\nonumber\\
&&\A_{S|S}(\tet_S-\vec{y})+\A_{S|\bar{S}}(\tet_{\bar{S}}-\tet'_{\bar{S}})=\D_S\vec{x}\\
&&\A_{\bar{S}|S}(\tet_S-\vec{y})+\A_{\bar{S}|\bar{S}}(\tet_{\bar{S}}-\tet'_{\bar{S}})=0.\nonumber
\end{eqnarray}
The idea behind optimizing the weighted norm-1 of vector $\vec{x}$ is to be able to detect the line failures when the solution to (\ref{eq:simul_detect}) does not detect the correct set of line failures but a small disturbance results in the correct detection.

\begin{module}[t]
\caption{LIne Failures Detection (LIFD)}
\label{module:LIFD}
\small
\begin{trivlist}
\item\textbf{Input:} $G$, $\A$, $\tet$, $S$, $T$, and $\tet^{\star}$
\end{trivlist}
\vspace*{-3mm}
\begin{algorithmic}[1]
\STATE Compute $\vec{p} = \A\tet$
\STATE Compute a solution $\vec{x},\vec{y}$ to (\ref{eq:weighted_simul_detect}) for $\W =\I$
\STATE Set $F^{\dagger} =\supp(\vec{x})$ and $\pvec{\theta}_S^{\dagger}=\vec{y}$
\WHILE {$c(F^{\dagger},\pvec{\theta}_S^{\dagger})<99.99\%$ \& counter<T}
    \STATE counter++
    \STATE Draw random numbers $w_1,w_2,\dots,w_{|V_S|}$ from an exponential distribution with rate $\lambda = 1$
    \STATE Compute a solution $\vec{x},\vec{y}$ to (\ref{eq:weighted_simul_detect}) for $\W=\diag(w_1,w_2,\dots,w_{|V_S|})$
    \STATE Set $F^{\dagger} =\supp(\vec{x})$ and $\pvec{\theta}_S^{\dagger}=\vec{y}$
\ENDWHILE
\IF {$c(F^{\dagger},\pvec{\theta}_S^{\dagger})>99.99\%$}
    \STATE \textbf{return} $F^{\dagger},\pvec{\theta}_S^{\dagger}$
\ELSE
    \STATE \textbf{return} $F^{\dagger},\pvec{\theta}_S^{\dagger}$ with maximum $c(F^{\dagger},\pvec{\theta}_S^{\dagger})$ in all iterations
\ENDIF
\end{algorithmic}
\end{module}

Before we demonstrate the effectiveness of the optimization (\ref{eq:weighted_simul_detect}) in detecting line failures, we provide a metric for measuring the \emph{confidence of a solution}. In a subgraph $S$, assume $F^{\dagger}=\supp(\vec{x})$ and $\pvec{\theta}_S^{\dagger}=\vec{y}$ are the set of detected line failures and the recovered phase angles using the solution to (\ref{eq:weighted_simul_detect}). Also assume that $\A^{\dagger}$ is the admittance matrix after removing the lines in $F^{\dagger}$ and define $\pvec{p}^{\dagger}:=\A_{G|\bar{S}}\tet'_{\bar{S}}+\A_{G|S}^{\dagger}\pvec{\theta}^{\dagger}_S$. Notice that $\vec{x}$ and $\vec{y}$ satisfying (\ref{eq:weighted_simul_detect}) does not necessarily imply $\pvec{p}^{\dagger}=\pvec{p}$. Hence, one can use this difference to compute the confidence of a solution as follows.
\begin{definition}
 The \emph{confidence of the solution} is denoted by $c(F^{\dagger},\pvec{\theta}_S^{\dagger})$ and defined as:
\begin{equation}\label{eq:confidence}
c(F^{\dagger},\pvec{\theta}_S^{\dagger}) := (1-\|\pvec{p}^{\dagger} - \vec{p}\|_2/\|\vec{p}\|_2)^+\times100,
\end{equation}
in which $(z)^+ := \max(0,z)$.
\end{definition}

The confidence of the solution along with a random selection of the weight matrix $\W$ in (\ref{eq:weighted_simul_detect}) can be used to detect line failures that cannot be detected using (\ref{eq:simul_detect}). The idea is to repeatedly solve (\ref{eq:weighted_simul_detect}) using a random weight matrix  until the confidence of the solution for $F^{\dagger}=\supp(\vec{x})$ and $\pvec{\theta}_S^{\dagger}=\vec{y}$ is 100\% or reach a maximum number of iterations ($T$). Here, we consider the case when the diagonal entries of matrix $\W$ are randomly selected from an exponential distribution. This approach is summarized in Module~\ref{module:LIFD} as the LIne Failures Detection (LIFD) Module.

 Through the rest of this section, we demonstrate why the LIFD Module is effective and when the number of iterations ($T$) is enough to be polynomial in terms of the input size to make sure that it finds the line failures accurately.  

\begin{lemma}\label{lem:expdistbound}
Assume $w_1,w_2,\dots,w_m$ are i.i.d.\ exponential random variables, then for $1\leq k\leq m-1$:
\begin{equation*}
Pr(\sum_{i=1}^k w_i<\sum_{i=k+1}^m w_i)= \frac{\sum_{j=k}^{m-1}\binom{m-1}{j}}{2^{m-1}}.
\end{equation*}
\end{lemma}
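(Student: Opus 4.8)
The plan is to interpret $w_1,\dots,w_m$ as the inter-arrival times of a rate-$1$ Poisson process (the common rate is irrelevant, since the comparison $\sum_{i=1}^k w_i < \sum_{i=k+1}^m w_i$ is invariant under scaling all the $w_i$ by a positive constant). Writing the partial sums $T_j := w_1 + \cdots + w_j$ and $S := T_m$, we have $\sum_{i=1}^k w_i = T_k$ and $\sum_{i=k+1}^m w_i = S - T_k$, so the event in question is exactly $\{T_k < S - T_k\} = \{T_k < S/2\}$. This reduces the claim to computing the probability that the $k$-th of the first $m$ arrivals occurs before the temporal midpoint $S/2$.

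The key step is the conditional-uniformity property of the partial sums. The joint density of $(w_1,\dots,w_m)$ is $e^{-S}$ on the positive orthant; changing variables to $(T_1,\dots,T_{m-1},S)$ is volume preserving, so conditioned on $S = t$ the vector $(T_1,\dots,T_{m-1})$ has constant density on the simplex $0 < T_1 < \cdots < T_{m-1} < t$. That is, $(T_1,\dots,T_{m-1})$ is distributed exactly as the order statistics of $m-1$ i.i.d.\ $\text{Uniform}(0,t)$ variables. Under this conditional law the $k$-th smallest point $T_k$ lies below $t/2$ if and only if at least $k$ of the $m-1$ uniform points fall in $(0,t/2)$; since each falls there independently with probability $1/2$, we obtain
\begin{equation*}
Pr\!\left(T_k < t/2 \mid S = t\right) = Pr\big(\text{Bin}(m-1,\tfrac12)\geq k\big) = \frac{\sum_{j=k}^{m-1}\binom{m-1}{j}}{2^{m-1}}.
\end{equation*}
Crucially this value is independent of $t$, so integrating over the distribution of $S$ leaves it unchanged, giving $Pr(\sum_{i=1}^k w_i < \sum_{i=k+1}^m w_i) = Pr(T_k < S/2)$ equal to the stated binomial tail.

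The main obstacle is establishing the conditional-uniformity statement cleanly; once that is in hand the combinatorics is immediate. An alternative route avoids the Poisson-process language: after normalizing, $(w_1/S,\dots,w_m/S)$ is uniform on the simplex, so $\sum_{i=1}^k w_i/S \sim \text{Beta}(k,m-k)$ and the claim becomes $Pr(\text{Beta}(k,m-k) < 1/2) = I_{1/2}(k,m-k)$, which the standard identity $I_{1/2}(k,m-k) = \sum_{j=k}^{m-1}\binom{m-1}{j}2^{-(m-1)}$ closes. I would present the order-statistics version as the primary proof, since it produces the binomial tail directly and transparently and its only nontrivial input—the conditional law of the partial sums given their total—has a one-line Jacobian justification.
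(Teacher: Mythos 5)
Your proof is correct, and it takes a genuinely different route from the paper's. The paper computes the probability head-on: it uses the Gamma density of $s_k=\sum_{i=1}^k w_i$, writes $Pr(s_k<s_{m-k})$ as a double integral, evaluates the inner incomplete-Gamma integral by repeated integration by parts, arrives at the intermediate form $\sum_{i=0}^{m-k-1}2^{-i-k}\binom{k+i-1}{i}$, and then needs a separate combinatorial counting argument (enumerating subsets of $\{1,\dots,m-1\}$ with at least $k$ elements by their $k$-th smallest member) to convert that sum into the binomial tail. Your argument reaches the binomial tail in one conceptual step: after reducing to the event $\{T_k<S/2\}$ for a rate-$1$ Poisson process, the conditional uniformity of $(T_1,\dots,T_{m-1})$ given $T_m=t$ (justified correctly by your unit-Jacobian change of variables) turns the question into whether at least $k$ of $m-1$ independent $\mathrm{Uniform}(0,t)$ points land in $(0,t/2)$, i.e.\ a $\mathrm{Bin}(m-1,\tfrac12)$ tail, independent of $t$. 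In effect, the negative-binomial-versus-binomial duality that the paper rederives by hand at the end is built into your setup from the start. What the paper's approach buys is self-containedness -- it needs nothing beyond calculus -- at the cost of length; yours buys brevity and transparency at the cost of invoking (and briefly proving) the order-statistics representation of Poisson arrival times. Your Beta-function alternative, $Pr(\mathrm{Beta}(k,m-k)<\tfrac12)=I_{1/2}(k,m-k)$, is also a valid third route. The edge cases check out ($k=m-1$ gives $2^{-(m-1)}$, $k=1$ gives $1-2^{-(m-1)}$), and the scaling argument correctly disposes of the rate parameter, which the paper instead carries through the integrals.
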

\begin{proof}
See Section~\ref{sec:proofs}.
\end{proof}
\begin{corollary}\label{cor:expdistbound}
Assume $w_1,w_2,\dots,w_m$ are i.i.d.\ exponential random variables, then for $k \leq m/2+\Theta(\sqrt{m})$:
\begin{equation*}
Pr(\sum_{i=1}^k w_i<\sum_{i=k+1}^m w_i)= \Omega(\frac{1}{\sqrt{m}}).
\end{equation*}
\end{corollary}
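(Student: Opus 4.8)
The plan is to reduce the statement to a lower tail bound for a symmetric binomial distribution and then bound that tail from below by a single (near-central) coefficient estimated via Stirling's formula. First I would invoke Lemma~\ref{lem:expdistbound}, which is already stated and which I may assume: it rewrites the event $\{\sum_{i=1}^k w_i < \sum_{i=k+1}^m w_i\}$ in the closed form $2^{-(m-1)}\sum_{j=k}^{m-1}\binom{m-1}{j}$. Setting $n:=m-1$, this is exactly $Pr(X\geq k)$ for $X\sim\mathrm{Binomial}(n,\tfrac12)$, whose mean is $n/2$ and whose standard deviation is $\tfrac12\sqrt n$. The hypothesis $k\le m/2+\Theta(\sqrt m)$ says $k$ sits at most a constant number of standard deviations above the mean, so heuristically the tail is a positive constant; I only need the much weaker conclusion $\Omega(1/\sqrt m)$, so the argument can afford to be lossy.

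Since $Pr(X\ge k)$ is nonincreasing in $k$, it suffices to establish the bound at the largest admissible value $q:=\lfloor m/2 + c\sqrt m\rfloor$, where $c$ is the constant hidden in the $\Theta(\sqrt m)$ of the hypothesis; for every smaller $k$ the probability is only larger. Note $q = n/2 + s$ with $s := q - n/2 = \Theta(\sqrt n)$. For each $k\le q$ I keep just the single leading term:
$$\sum_{j=k}^{n}\binom{n}{j}\;\ge\;\sum_{j=q}^{n}\binom{n}{j}\;\ge\;\binom{n}{q}.$$
Thus everything reduces to showing $\binom{n}{q}/2^{n}=\Omega(1/\sqrt n)$ for $q=n/2+\Theta(\sqrt n)$.

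To estimate this off-center coefficient I would compare it to the central one. Stirling's formula gives $\binom{n}{\lfloor n/2\rfloor}=\Theta(2^{n}/\sqrt n)$, so it remains to control the ratio via the telescoping product of successive-term ratios:
$$\frac{\binom{n}{\,n/2+s\,}}{\binom{n}{\,n/2\,}}=\prod_{i=0}^{s-1}\frac{n/2-i}{n/2+i+1}\;\ge\;\left(\frac{1-2s/n}{1+2s/n}\right)^{\!s}.$$
With $s=\Theta(\sqrt n)$ we have $2s/n=o(1)$, and taking logarithms, $\log\!\frac{1-2s/n}{1+2s/n}=-4s/n+o(s/n)$, so the product is bounded below by $\exp(-4s^2/n+o(1))=\Omega(1)$ because $s^2/n=\Theta(1)$. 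Combining, $\binom{n}{q}/2^{n}=\Omega(1)\cdot\Theta(1/\sqrt n)=\Omega(1/\sqrt n)=\Omega(1/\sqrt m)$, which closes the argument.

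The one step that needs genuine care is the third: keeping the exponent $-4s^2/n$ bounded. This is precisely where the hypothesis $k-m/2=O(\sqrt m)$ is essential—if $k$ were $\omega(\sqrt m)$ above the mean, then $s^2/n\to\infty$, the product would decay super-constantly, and the $\Omega(1/\sqrt m)$ conclusion would break. I would therefore make the per-factor estimate rigorous by bounding each term of $\prod_{i=0}^{s-1}\frac{n/2-i}{n/2+i+1}$ from below by its smallest factor (at $i=s-1$), using $\log\frac{1-x}{1+x}=-2x+O(x^3)$ with $x=2i/n\le 2s/n=o(1)$, and summing the $O(s)$ resulting terms to get a total exponent $-4s^2/n+o(1)$. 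This confirms the product is $\Theta(1)$ and yields the claimed bound uniformly over all $k\le q$.
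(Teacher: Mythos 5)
Your proposal is correct, and it reaches the same intermediate target as the paper --- a lower bound on the normalized binomial tail $2^{-(m-1)}\sum_{j=k}^{m-1}\binom{m-1}{j}$ supplied by Lemma~\ref{lem:expdistbound} --- but by a genuinely different route. The paper splits into two cases ($k\le (m-1)/2$, where the tail is at least $1/2$, and $k=m/2+\Theta(\sqrt m)$), and for the second case it cites the entropy lower bound $2^{n\textbf{H}(\alpha)}/\sqrt{8n\alpha(1-\alpha)}\leq \sum_{j=\alpha n}^{n}\binom{n}{j}$ from MacWilliams--Sloane, then Taylor-expands $\textbf{H}(1/2+\epsilon)=1-\Theta(\epsilon^2)$ with $\epsilon=\Theta(1/\sqrt n)$ to get $2^{n\textbf{H}(\alpha)}=2^{n-\Theta(1)}$. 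You instead reduce to the worst admissible $k$ by monotonicity, keep only the single coefficient $\binom{n}{q}$, and show it is $\Omega(2^n/\sqrt n)$ by combining Stirling for the central coefficient with a telescoping ratio product whose logarithm is $-4s^2/n+o(1)=-\Theta(1)$. Your version is more elementary and self-contained (no external tail lemma, no entropy function), at the cost of being slightly longer; the paper's is shorter on the page but leans on a cited inequality. Both arguments correctly isolate where the hypothesis $k-m/2=O(\sqrt m)$ is used --- in the paper it keeps $n\epsilon^2=\Theta(1)$ in the entropy exponent, in yours it keeps $s^2/n=\Theta(1)$ in the product exponent --- and the only loose ends in your write-up (integer parity of $n/2$, uniformity of the constant over $k\le q$) are routine.
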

\begin{proof}
See Section~\ref{sec:proofs}.
\end{proof}

\begin{lemma}\label{lem:cycle}
If $S=H$, $H$ is a cycle with $m$ nodes and edges, and there is a matching between the nodes inside and outside of $H$ that covers all the inside nodes, then any set of line failures of size $k$ can be found by the LIFD Module for expectedly $T = 2^{m-1}/(\sum_{j=k}^{m-1}\binom{m-1}{j})$. Moreover, if $k \leq m/2+\Theta(\sqrt{m})$, then LIFD Module can detect line failures for $T = O(\sqrt{m})$.
\end{lemma}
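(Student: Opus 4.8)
The plan is to reduce the weighted $\ell_1$ program~(\ref{eq:weighted_simul_detect}) with $S=H$ to a one-dimensional weighted-median problem and then read off the per-trial success probability from Lemma~\ref{lem:expdistbound}. First I would use the matching hypothesis together with \cite[Corollary 2]{SYZ2015} to argue that $\A_{\bar{H}|H}$ has linearly independent columns almost surely, so the second constraint of~(\ref{eq:weighted_simul_detect}) determines $\vec{y}$ uniquely; since $\vec{y}=\tet_H'$ is a feasible choice, the constraint forces $\vec{y}=\tet_H'$. Substituting this into the first constraint reduces the feasible set to $\{\vec{x}:\D_H\vec{x}=\vec{b}\}$ for a fixed right-hand side $\vec{b}$, where $\D_H$ is the $m\times m$ incidence matrix of the cycle $H$. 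Because a cycle on $m$ nodes has $\rank\D_H=m-1$ with a one-dimensional null space spanned by a circulation vector $\vec{c}$ whose entries lie in $\{-1,+1\}$, the feasible set is the line $\vec{x}(t)=\vec{x}_p+t\vec{c}$, and the true failure vector $\vec{x}^\star$ (with $\supp(\vec{x}^\star)=F$, $|F|=k$) sits at some parameter $t^\star$ at which the $m-k$ coordinates indexed by $\bar{F}$ vanish simultaneously.

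Next I would exploit that $|c_i|=1$ to rewrite the objective along this line as $\|\W\vec{x}(t)\|_1=\sum_{i} w_i\,|t-t_i|$, where $t_i=-x_{p,i}/c_i$; this is a convex piecewise-linear function whose minimizer is a $\W$-weighted median of the breakpoints $\{t_i\}$. The key structural fact is that the $m-k$ breakpoints coming from $\bar{F}$ all coincide at $t^\star$, carrying total weight $W_0=\sum_{e\notin F}w_e$, while the $k$ breakpoints from $F$ lie strictly to the left or right of $t^\star$. Writing $L$ and $R$ for the total weights of the failed edges whose breakpoints fall to the left and to the right of $t^\star$, a one-sided slope computation shows that $t^\star$ is the \emph{unique} minimizer — so the LP returns $\vec{x}^\star$ and the confidence is $100\%$ — exactly when $|L-R|<W_0$, which (using $L+R+W_0=W$ for the total weight $W$) is equivalent to the clean condition $L<W/2$ and $R<W/2$.

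Finally I would compute the per-trial success probability for the random weights of the LIFD Module. Since $W_0=\sum_{e\notin F}w_e>0$ almost surely, the failure events $\{L\ge W/2\}$ and $\{R\ge W/2\}$ are disjoint, so by Lemma~\ref{lem:expdistbound} the success probability equals $p_{k_L}+p_{k_R}-1$, where $k_L,k_R$ are the numbers of failed edges on each side, $k_L+k_R=k$, and $p_j:=\sum_{i=j}^{m-1}\binom{m-1}{i}/2^{m-1}$ (with $p_0=1$). The remaining step is to show this quantity is minimized, over all placements, when all failures lie on one side, where it equals exactly $p_k$; this is equivalent to the superadditivity $q_{k_L}+q_{k_R}\le q_{k}$ of the partial binomial sums $q_j:=1-p_j$, which holds because a window of $k_R$ consecutive binomial coefficients has minimal sum at the tail, by unimodality and the symmetry $\binom{m-1}{i}=\binom{m-1}{m-1-i}$. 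Hence every random trial succeeds with probability at least $p_k$, the number of trials until success is stochastically dominated by a geometric variable of mean $1/p_k=2^{m-1}/\sum_{j=k}^{m-1}\binom{m-1}{j}=T$, and the ``moreover'' claim follows by substituting the bound $p_k=\Omega(1/\sqrt{m})$ from Corollary~\ref{cor:expdistbound} valid for $k\le m/2+\Theta(\sqrt{m})$. I expect the main obstacle to be this combinatorial worst-case step: verifying that splitting the failures across $t^\star$ can only help — i.e., the superadditivity of the truncated binomial sums — and thereby pinning the worst case to precisely the probability of Lemma~\ref{lem:expdistbound}.
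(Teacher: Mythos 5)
Your proof is correct, and it shares the paper's skeleton (unique recovery of $\vec{y}=\tet'_H$ from the matching condition via \cite[Corollary 2]{SYZ2015}, reduction to the one-dimensional affine family $\vec{x}^{\dagger}+c\vec{z}$ coming from the cycle's rank-$(m-1)$ incidence matrix, and then Lemma~\ref{lem:expdistbound} for the per-trial probability), but it executes the key step by a genuinely different and sharper route. The paper simply verifies, by a direct triangle-inequality chain, that the \emph{sufficient} condition $\sum_{i\in F}w_i<\sum_{i\notin F}w_i$ forces $\|\W(\vec{x}^{\dagger}+c\vec{z})\|_1>\|\W\vec{x}^{\dagger}\|_1$ for all $c\neq 0$; this condition has probability exactly $\sum_{j=k}^{m-1}\binom{m-1}{j}/2^{m-1}$ and the lemma follows immediately. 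You instead characterize success \emph{exactly} via the weighted-median condition $|L-R|<W_0$, which is strictly weaker than the paper's condition ($L+R<W_0$ implies $|L-R|<W_0$ but not conversely), so your per-trial success probability $p_{k_L}+p_{k_R}-1$ dominates the paper's $p_k$ whenever the failed edges' breakpoints straddle $t^{\star}$. The price is the extra combinatorial step $q_{k_L}+q_{k_R}\leq q_{k_L+k_R}$, which is the only place your argument is merely sketched; it does hold, since the sliding-window sum $j\mapsto\sum_{i=j}^{j+w-1}\binom{m-1}{i}$ is unimodal in $j$ (consecutive differences are $\binom{m-1}{j+w}-\binom{m-1}{j}$) and hence minimized at the two symmetric extremes, and since $k_L+k_R-1\leq m-1$ keeps the window $[k_L,k-1]$ valid. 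The worst case ($k_R=0$, using $p_0=1$) then collapses to exactly the paper's bound, and the remaining steps (disjointness of $\{L\geq W/2\}$ and $\{R\geq W/2\}$, the geometric-trials bound for $T$, and Corollary~\ref{cor:expdistbound} for the $O(\sqrt{m})$ claim) match the paper. In short: the paper's proof is shorter because it settles for a sufficient condition whose probability already matches the stated $T$; yours buys a tight per-instance success probability at the cost of an additional (correct but nontrivial) binomial-sum inequality.
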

\begin{proof}
First, one can see that if $S=H$, and there is a matching between the nodes inside and outside of $H$ that covers all the inside nodes, then $\A_{\bar{S}|S} = \A_{\bar{H}|H}$ has uniquely independent columns, almost surely~\cite[Corollary 2]{SYZ2015}. Hence, the solution $\vec{y}$ to (\ref{eq:weighted_simul_detect}) is unique and $\vec{y} = \tet'_H$. Therefore, we can assume that $\tet'$ is given. Without loss of generality assume that $F = \{e_1,\dots,e_k\}$. We prove that the solution $\vec{x}$ to (\ref{eq:weighted_simul_detect}) is unique and $\supp(\vec{x})=F$, if $\sum_{i=1}^k w_i<\sum_{i=k+1}^m w_i$, in which $\W=\diag(w_1,\dots,w_m)$.

\noindent Without loss of generality, assume that $\D_H$ is the incidence matrix of $H$ when lines of $H$ are oriented clockwise. Since $H$ is connected, it is known that $\rank(\D_H)=m-1$~\cite[Theorem 2.2]{bapat2010graphs}. Therefore, $\dim(\nul(\D_H))=1$. Suppose $\vec{z}\in \mathbb{R}^{|E_H|}$ is the all one vector. It can be verified that $\D_H\vec{z}=0$. Since $\dim(\nul(\D_H))=1$, $\vec{z}$ forms a basis for the null space of $\D$. Now suppose $\vec{x}^{\dagger}$ is a solution to $\A_{H|G}(\tet-\tet')=\D_H\vec{x}$ such that $\supp(\vec{x}^{\dagger})=F$ (from~\cite[Lemma 2]{SYZ2015}, we know that such a solution exists). Since $\vec{z}$ forms a basis for $\nul(D)$, all other solutions of $\A_{H|G}(\tet-\tet')=\D_H\vec{x}$ can be written in the form of $\vec{x}^{\dagger}+c\vec{z}$. We want to prove that if $\sum_{i=1}^k w_i<\sum_{i=k+1}^m w_i$, then for any $c\in \mathbb{R}\backslash\{0\}$, $\|\W\vec{x}^{\dagger}\|_1< \|\W(\vec{x}^{\dagger}+c\vec{z})\|_1$. Since $\supp(\vec{x}^\dagger)=F$, $x_1^{\dagger},x_2^{\dagger},\dots,x_k^{\dagger}$ are the only nonzero elements of $\vec{x}^\dagger$. Moreover $W_d : = \sum_{i=k+1}^m w_i-\sum_{i=1}^k w_i>0$. Hence,
\begin{align*}
\|\W(\vec{x}^\dagger+c\vec{z})\|_1 &= \sum_{i=1}^k w_i|x_i^{\dagger}-c|+|c| \sum_{i=k+1}^m w_i\\
&=\sum_{i=1}^k w_i(|x_i^{\dagger}-c|+|c|)+|c| W_d\\
&\geq\sum_{i=1}^k w_i |x_i^{\dagger}|+|c| W_d>\sum_{i=1}^k w_i |x_i^{\dagger}|=\|\W\vec{x}^{\dagger}\|_1.
\end{align*}
Therefore, the solution $\vec{x}$ to (\ref{eq:weighted_simul_detect}) is unique and $\supp(\vec{x})=F$, if $\sum_{i=1}^k w_i<\sum_{i=k+1}^m w_i$. One the other hand, from Lemma~\ref{lem:expdistbound}, $Pr(\sum_{i=1}^k w_i<\sum_{i=k+1}^m w_i)= \frac{\sum_{j=k}^{m-1}\binom{m-1}{j}}{2^{m-1}}$. Hence, expectedly $\frac{2^{m-1}}{\sum_{j=k}^{m-1}\binom{m-1}{j}}$ number of iterations ($T$) should be enough to satisfy this inequality. Corollary~\ref{cor:expdistbound} also gives the expected number of iterations needed when $k \leq m/2+\Theta(\sqrt{m})$.
\end{proof}
Lemma~\ref{lem:cycle} clearly demonstrates the effectiveness of using a weight matrix $\W$ in (\ref{eq:weighted_simul_detect}). It was previously proved in~\cite{SYZ2015} that if $H$ is a cycle and there is a matching between the nodes inside and outside of $H$ that covers all the inside nodes, then for any set of line failures of size \emph{less than half of the lines in $H$}, $\supp(\vec{x})$ of the solution $\vec{x}$ to (\ref{eq:simul_detect}) exactly reveals the set of line failures. However, for the line failures with the size more than half of the lines in $H$, this approach comes short. In these cases, Lemma~\ref{lem:cycle} indicates that solving (\ref{eq:weighted_simul_detect}) for random matrices $\W$ for polynomial number of times can lead to the correct detection.

\begin{algorithm}[t]
\caption{REcurrent Attack Containment and deTection (REACT)}
\label{algorithm:REACT}
\small
\begin{trivlist}
\item\textbf{Input:} $G$, $\A$, $\tet$, $\tet^{\star}$, and $T$
\end{trivlist}
\vspace*{-3mm}
\begin{algorithmic}[1]
\STATE Compute $\vec{p} = \A\tet$
\STATE Obtain $S_0, S_1,\dots, S_t$ using the ATAC Module
\FOR {$i=1$ to $t$}
    \STATE Compute $S_a = G[\intt(S_i)]$
    \IF{(\ref{eq:H_detect}) is feasible for $S = S_a$ }
        \STATE Find a solution $\vec{y}$ to (\ref{eq:H_detect}) for $S = S_a$
    \ELSE
        \STATE \textbf{continue}
    \ENDIF
    \STATE Compute $S_b = G[\supp(\vec{y}-\tet_S^{\star})]$
    \STATE Set $S = S_b$ as an approximation for the attacked area $H$
    \STATE Compute a solution $\vec{x},\vec{y}$ to (\ref{eq:weighted_simul_detect}) for $\W =\I$
    \STATE Set $F^{\dagger} =\supp(\vec{x})$ and $\pvec{\theta}_S^{\dagger}=\vec{y}$
    \IF{$c(F^{\dagger},\pvec{\theta}_S^{\dagger})<99.99\%$}
        \STATE Obtain $F^{\dagger},\pvec{\theta}_S^{\dagger}$ from module LIFD for inputs $S$ and $T$
    \ENDIF
    \IF{$c(F^{\dagger},\pvec{\theta}_S^{\dagger})>99.99\%$}
        \STATE \textbf{return} $H = \supp(\pvec{\theta}_S^{\dagger}-\pvec{\theta}_S^{\star})$ as the detected attacked area and $F^{\dagger},\pvec{\theta}_H^{\dagger}$ as the detected line failures and recovered phase angle of the nodes inside $H$
    \ENDIF
\ENDFOR
\STATE \textbf{return} $S$ and $F^{\dagger},\pvec{\theta}_S^{\dagger}$ with maximum $c(F^{\dagger},\pvec{\theta}_S^{\dagger})$ in all iterations
\end{algorithmic}
\end{algorithm}

Although providing a similar analytical bound for $T$ to ensure detecting line failures in general cases is very difficult, in Section~\ref{sec:numres}, we numerically show that small values of $T$ is enough to detect line failures in more complex attacked areas as well.

\section{REACT Algorithm}\label{sec:algorithm}
In this section, we present the REcurrent Attack Containment and deTection (REACT) Algorithm based on the results presented in the previous sections. The steps of the REACT Algorithm are summarized in Algorithm~\ref{algorithm:REACT}.

The REACT Algorithm first obtains a set of possible subgraphs $S_0, S_1,\dots, S_t$ that may contain the attacked area $H$ using the ATAC Module. Then, for each subgraph $S_i$ using the results in Subsection~\ref{subsec:improve}, it improves the approximation of the attacked area. In particular, it first computes $S_a = G[\intt(S_i)]$ and then finds a solution  to (\ref{eq:H_detect}) for $S=S_a$. If (\ref{eq:H_detect}) is not feasible, then it means that $S_i$ does not contain the attacked area $H$, and therefore, the algorithm goes to the next iteration and tries the next possible subgraph. If (\ref{eq:H_detect}) has a feasible solution $\vec{y}$, it obtains a better approximation of the attacked area $H$ by computing $S_b= G[\supp(\vec{y}-\tet_S^{\star})]$ (Lemma~\ref{lem:estimate_H2}). 

Then, it solves the optimization (\ref{eq:weighted_simul_detect}) for $\W=\I$, in which $\I$ is the identity matrix. Notice that this is basically similar to solving (\ref{eq:simul_detect}). Then it checks the confidence of the solution $c(F^{\dagger},\pvec{\theta}_S^{\dagger})$. If it is less than $99.99\%$, it calls the LIFD Module to obtain another solution $F^{\dagger},\pvec{\theta}_S^{\dagger}$. Finally, it checks whether the confidence of the solution is $c(F^{\dagger},\pvec{\theta}_S^{\dagger})>99.99\%$. If so, it approximates the attacked area $H$ using this solution and returns $F^{\dagger},\pvec{\theta}_H^{\dagger}$.

If the REACT Algorithm cannot find a solution with confidence greater than $99.99\%$, it returns a solution with the highest confidence between all the solutions obtained in all the iterations.

Notice that the REACT Algorithm is a polynomial time algorithm. Therefore, it cannot return the correct solution to an NP-hard problem in all cases. However, in the next section we numerically demonstrate that it performs very well in reasonable settings.

\begin{figure*}[t]
\centering
\includegraphics[scale=0.45]{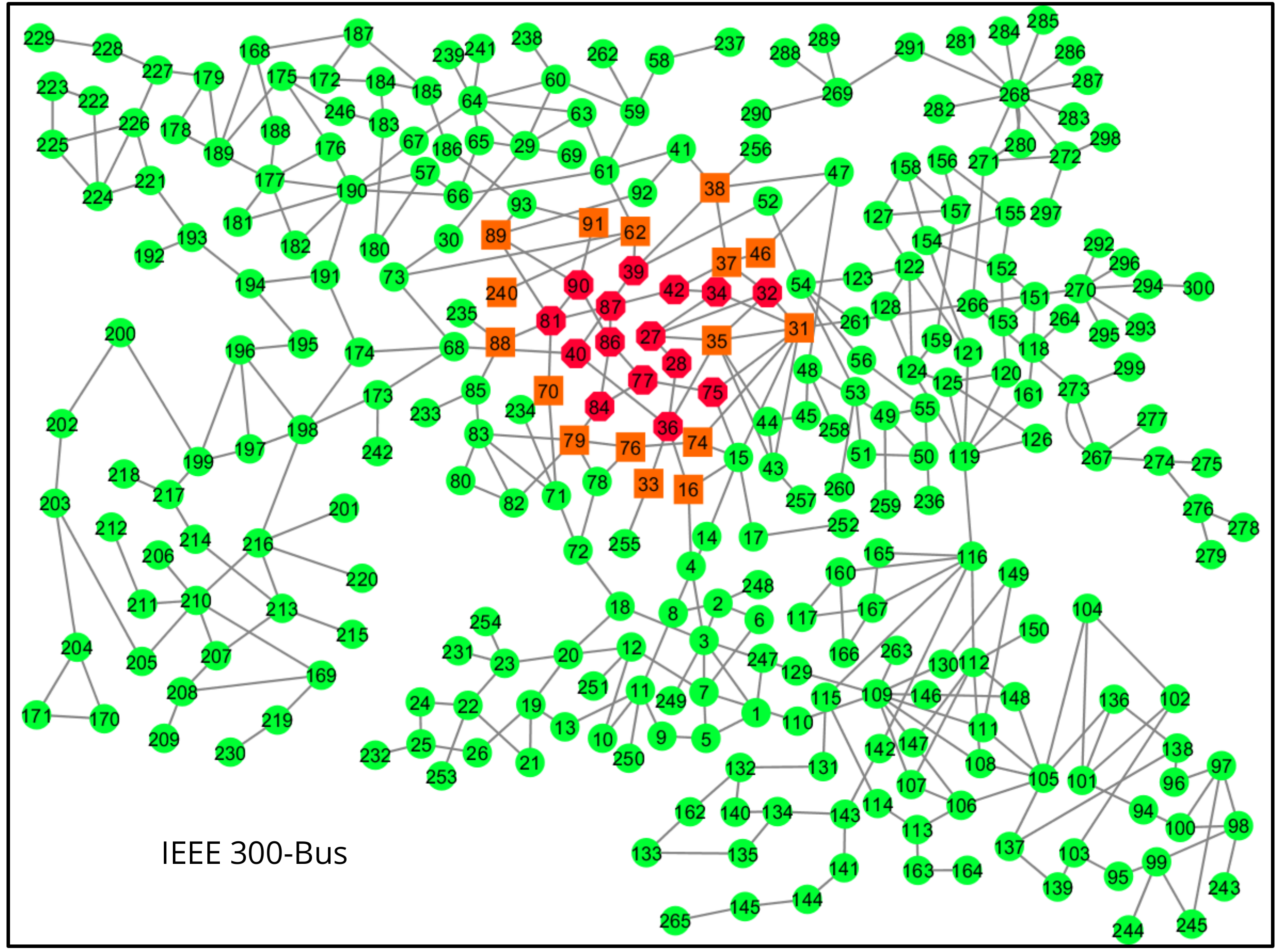}
\vspace*{-0.2cm}
\caption{The two attacked areas in the IEEE 300-bus systems that are used in simulations. The red octagon nodes are the nodes in $H_1$ and $H_2$, and the orange square nodes are the nodes that are only in $H_2$.}
\vspace*{-0.4cm}
\label{fig:attack_zones}
\end{figure*}



\section{Numerical Results}\label{sec:numres}
In this section, we evaluate the performance of the REACT Algorithm in detecting the attacked area and recovering the information after a cyber attack as described in Section~\ref{subsec:attack}. We consider two attacked areas $H_1$ and $H_2$ within the IEEE 300-bus system~\cite{IEEEtestcase} as depicted in Fig.~\ref{fig:attack_zones}. $H_1$ has 15 nodes and 16 edges, and $H_2$ which contains $H_1$, has 31 nodes and 41 edges. It can be verified that none of these two subgraphs are acyclic and there is no matching between the nodes inside and outside of these two subgraphs that covers their insides nodes. \emph{Hence, the methods provided in~\cite{SYZ2015} cannot recover the information inside these areas even when the attacked areas are known in advance.}

For the physical part of the attack, we consider all single line failures, and 100 samples of all double and triple line failures within $H_1$ and $H_2$. Figs.~\ref{fig:eval_H1} and \ref{fig:eval_H2} illustrate the REACT Algorithm's performance after these attacks. In the Algorithm, we set $T=20$ so that the while loop in the LIFD Module runs only for 20 iterations.

\begin{figure*}[t]
\centering
\begin{subfigure}{0.45\textwidth}
\centering
\includegraphics[scale=0.45]{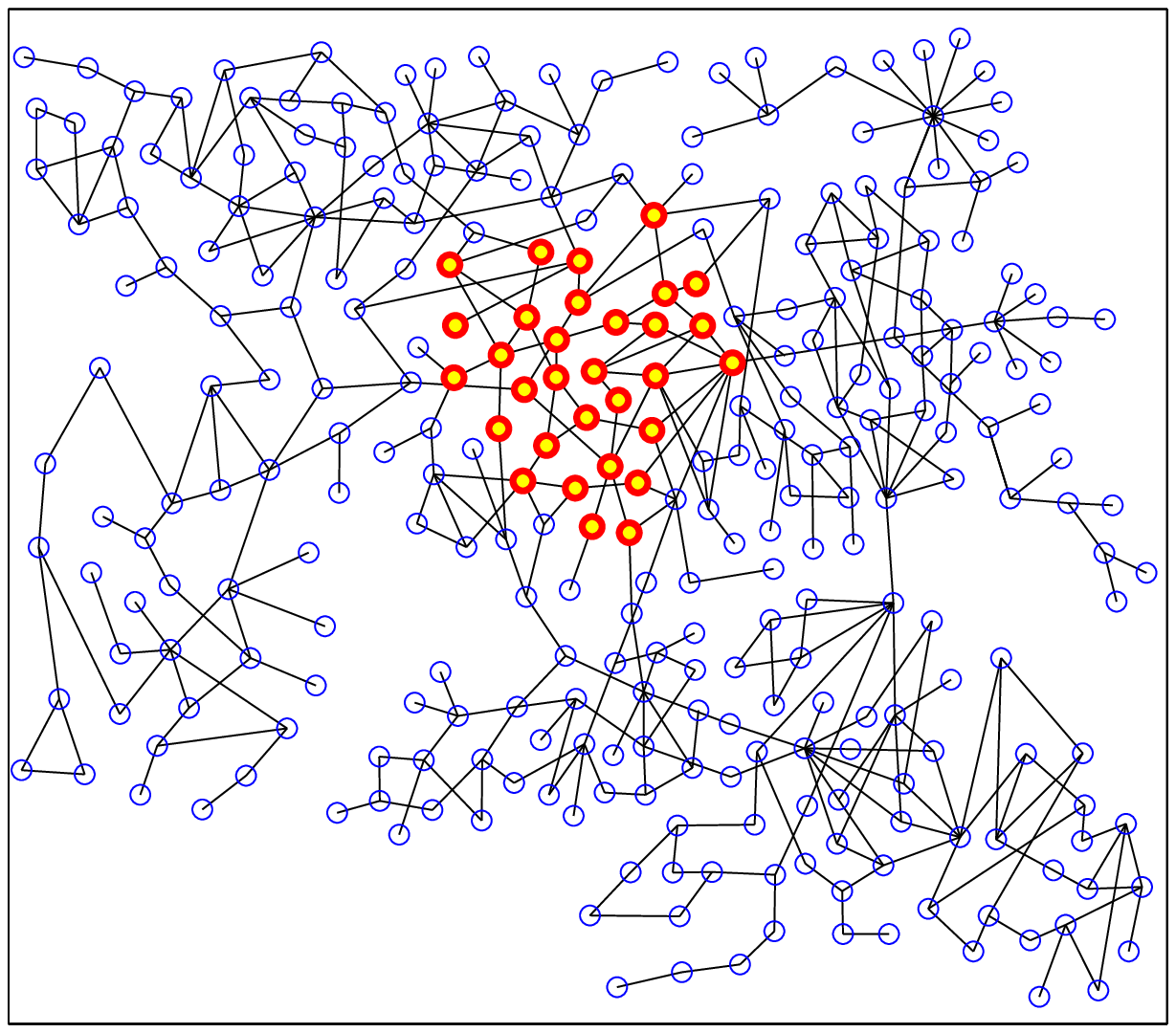}
\vspace*{-0.7cm}
\caption{Data Distortion Attack}
\label{fig:dist}
\end{subfigure}
\begin{subfigure}{0.45\textwidth}
\centering
\includegraphics[scale=0.45]{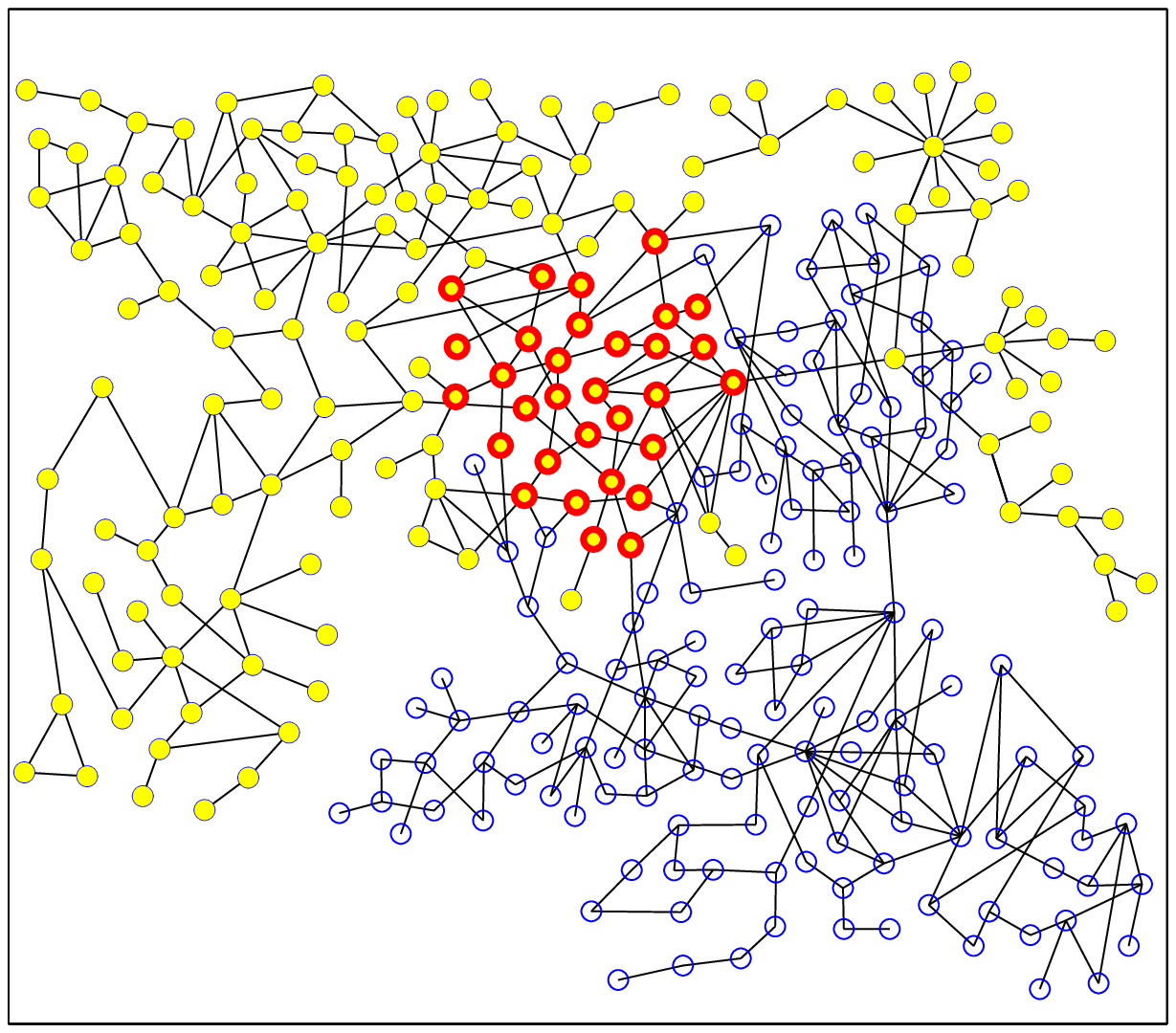}
\vspace*{-0.7cm}
\caption{Data Replay Attack}
\label{fig:rep}
\end{subfigure}
\caption{The difference in difficulty of detecting the attacked area after a data distortion attack and a data replay attack on the attacked area $H_2$ accompanied by a triple line failure within $H_2$. The yellow filled nodes represent the nodes in the detected attacked area by the REACT Algorithm, the nodes with a thick red border represent the nodes in $H_2$ that are actually attacked, and blue empty nodes represent the rest of the nodes.}
\vspace*{-0.2cm}
\label{fig:compare_bad_case}
\end{figure*}

Fig.~\ref{fig:eval_H1} shows the performance of the REACT Algorithm in detecting the attacked area and recovering the information after data distortion and data replay attacks on the attacked area $H_1$ accompanied by single, double, and triple line failures. As can be seen in Fig.~\ref{fig:ExtraNodes_H1}, the REACT Algorithm can exactly detect the attacked area after all attack scenarios under both the distortion attack and the replay attack. Hence, the performance of the REACT Algorithm is almost the same in detecting line failures and recovering the phase angles after both data attack scenarios.

Fig.~\ref{fig:FNFP_H1} shows the average number of False Negatives (FN) and False Positives (FP) in detecting line failures. As can be seen, the REACT Algorithm can detect line failures with very low average number of FNs and FPs. Moreover, as it is shown in Fig.~\ref{fig:Exact_Rec_H1}, the REACT Algorithm exactly detects single, double, and triple line failures in 94\%, 87\%, and 82\% of the cases, respectively.


Fig.~\ref{fig:Run_Time_H1} shows the average running time of the REACT Algorithm in detecting all attacked scenarios in this case. Our system has an Intel Core i7-2600 \@3.40GHz CPU and 16GB
RAM. One can see that the running time of the REACT Algorithm is very low. The average confidence of the solutions are also shown in Fig.~\ref{fig:Conf_H1}. As can be seen, despite few false negatives and positives in detecting line failures, the solutions obtained by the REACT Algorithm have very high confidence which means that the REACT Algorithm barely missed finding the correct solution.

Finally, Fig.~\ref{fig:Phase_H1} shows the average percentage error in the recovered phase angles. It can be seen that the phase angles inside the attacked area can be recovered with less than 3\%, 5\%, and 7\% error after the single, double, and triple line failures, respectively.

\begin{figure}[t]
\vspace*{-0.2cm}
\centering
\begin{subfigure}{0.23\textwidth}
\centering
\includegraphics[scale=0.28]{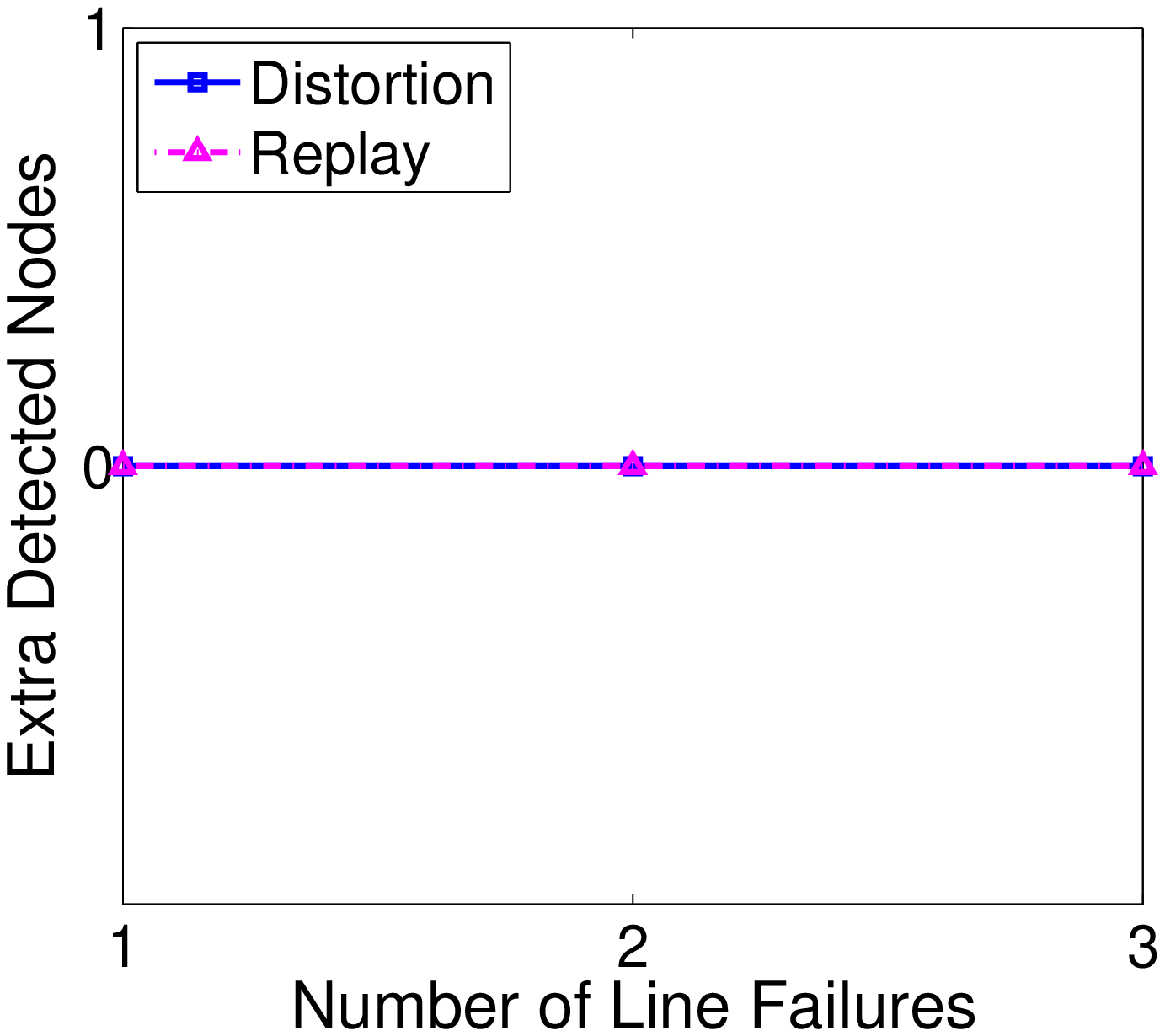}
\vspace*{-0.5cm}
\caption{}
\label{fig:ExtraNodes_H1}
\end{subfigure}
\begin{subfigure}{0.23\textwidth}
\centering
\includegraphics[scale=0.28]{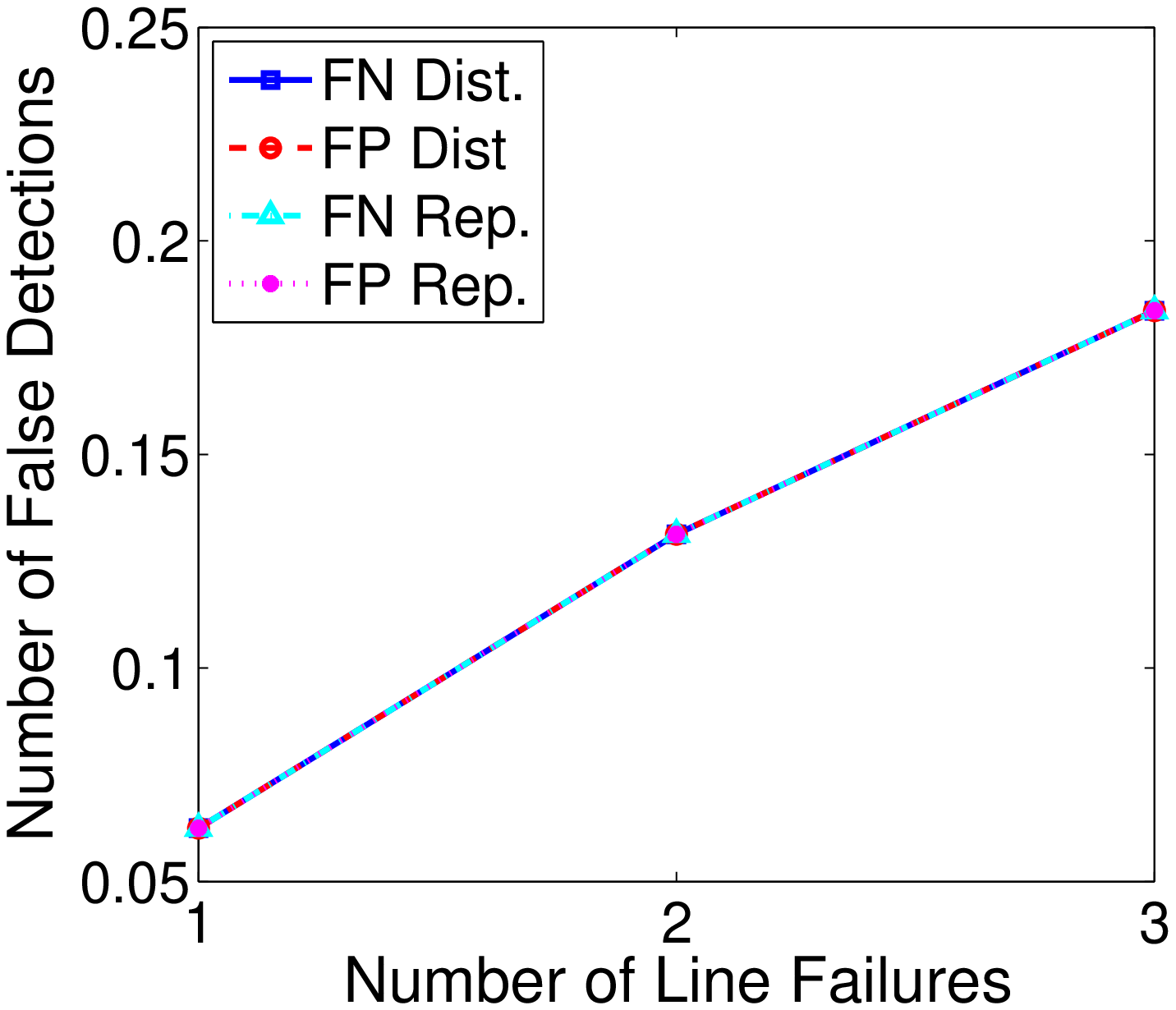}
\vspace*{-0.5cm}
\caption{}
\label{fig:FNFP_H1}
\end{subfigure}
\begin{subfigure}{0.23\textwidth}
\vspace*{-0.07cm}
\centering
\includegraphics[scale=0.28]{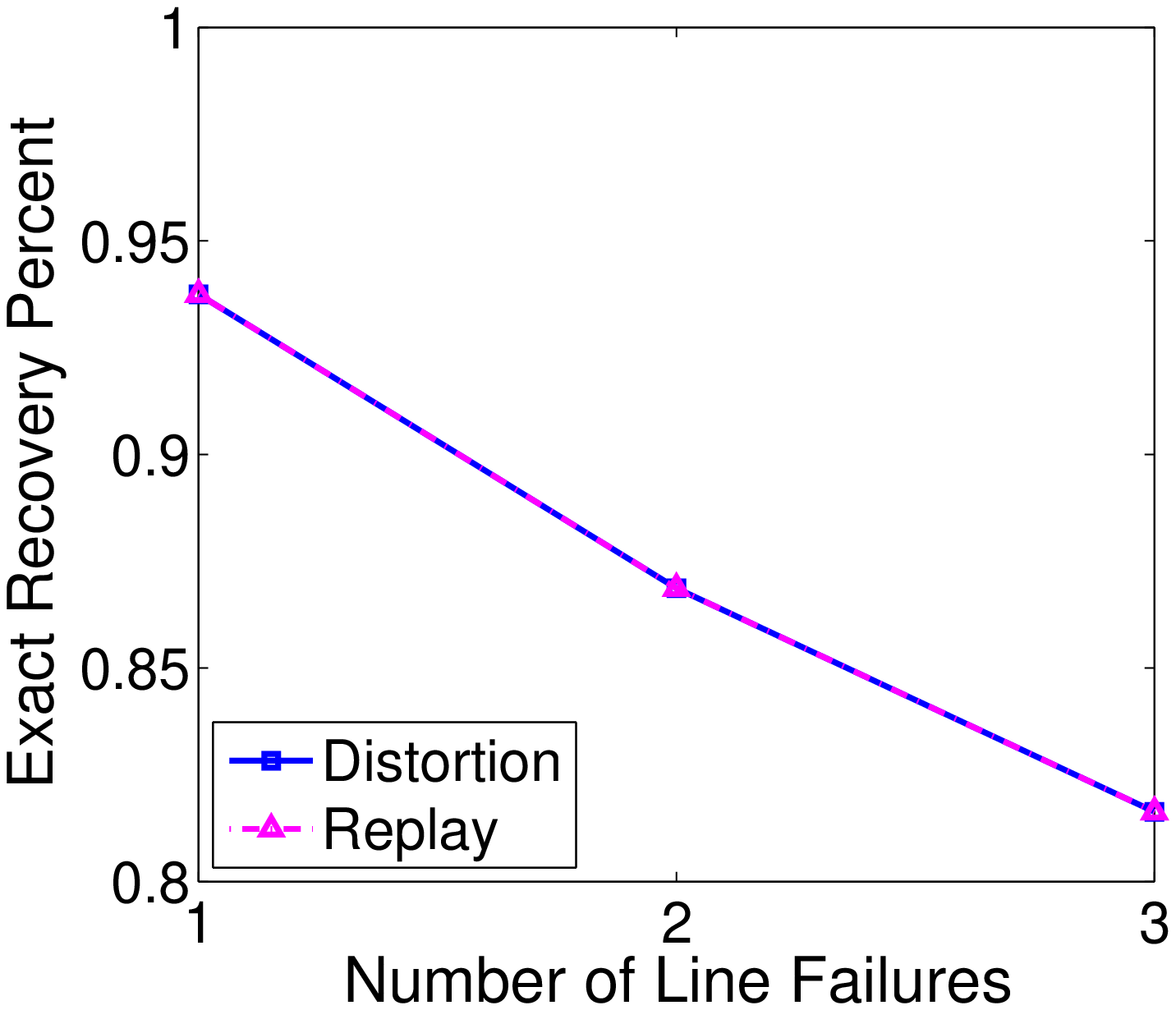}
\vspace*{-0.5cm}
\caption{}
\label{fig:Exact_Rec_H1}
\end{subfigure}
\begin{subfigure}{0.23\textwidth}
\vspace*{-0.07cm}
\centering
\includegraphics[scale=0.28]{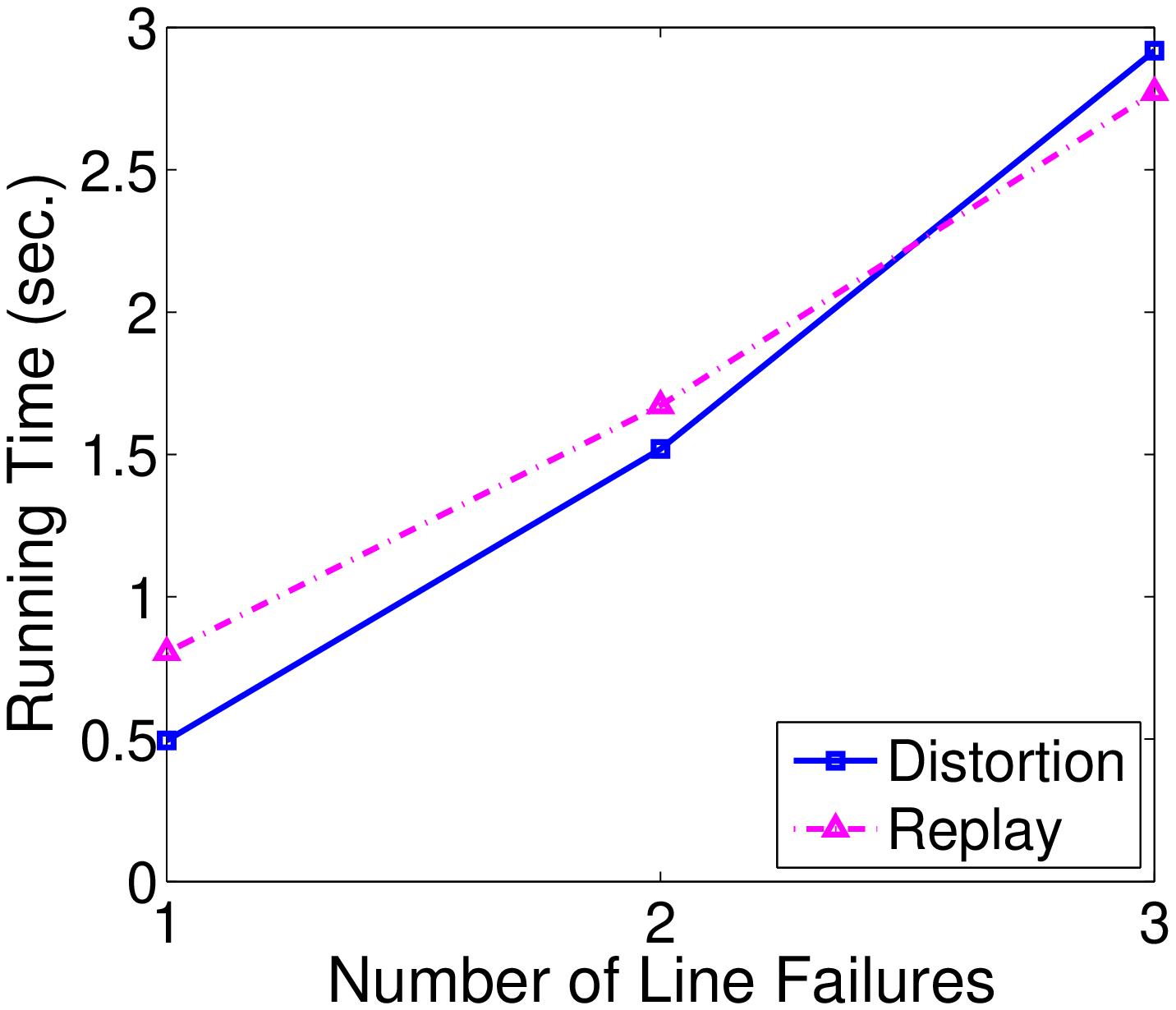}
\vspace*{-0.5cm}
\caption{}
\label{fig:Run_Time_H1}
\end{subfigure}
\begin{subfigure}{0.23\textwidth}
\vspace*{-0.07cm}
\centering
\includegraphics[scale=0.28]{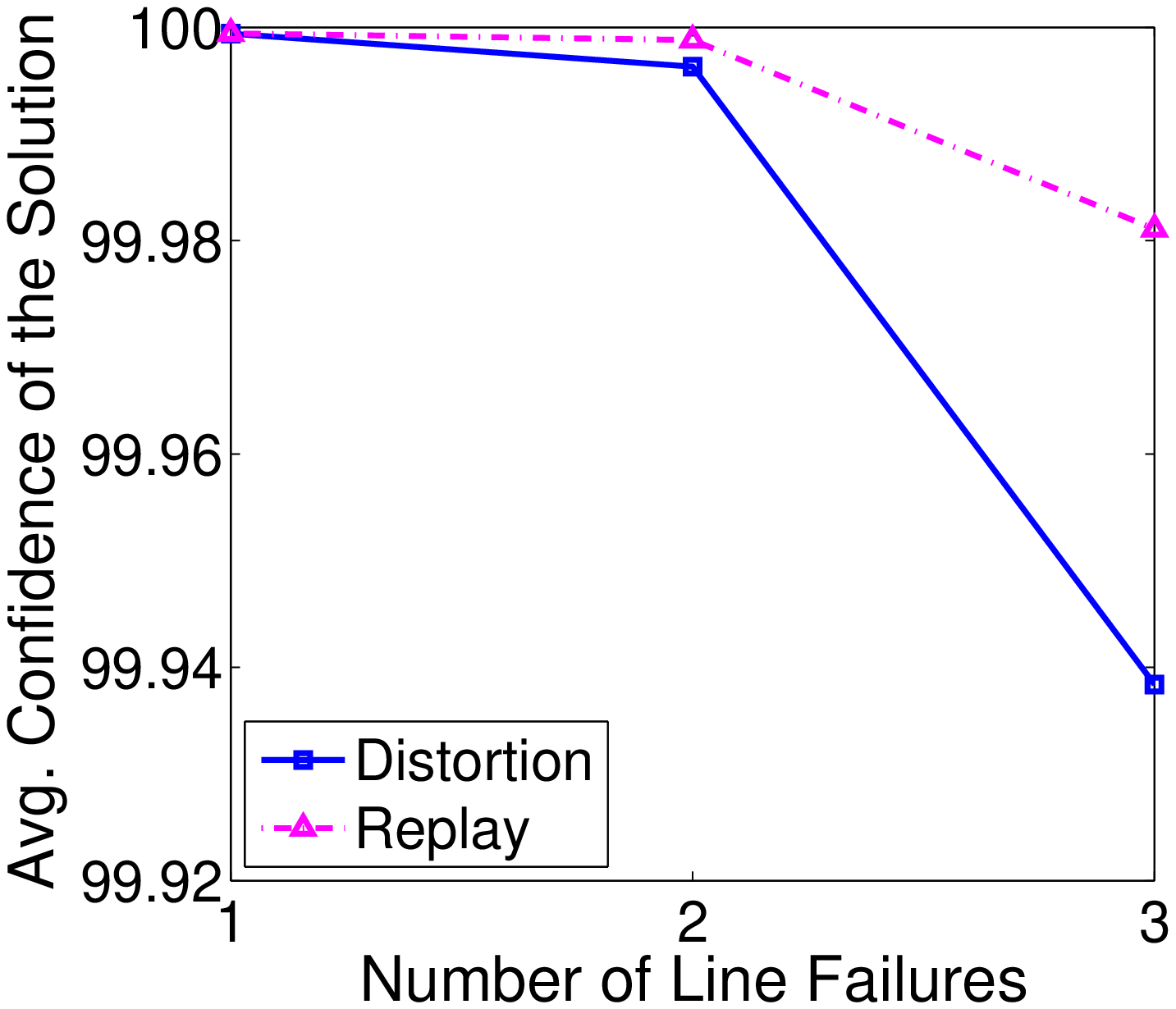}
\vspace*{-0.5cm}
\caption{}
\label{fig:Conf_H1}
\end{subfigure}
\begin{subfigure}{0.23\textwidth}
\vspace*{-0.07cm}
\centering
\includegraphics[scale=0.28]{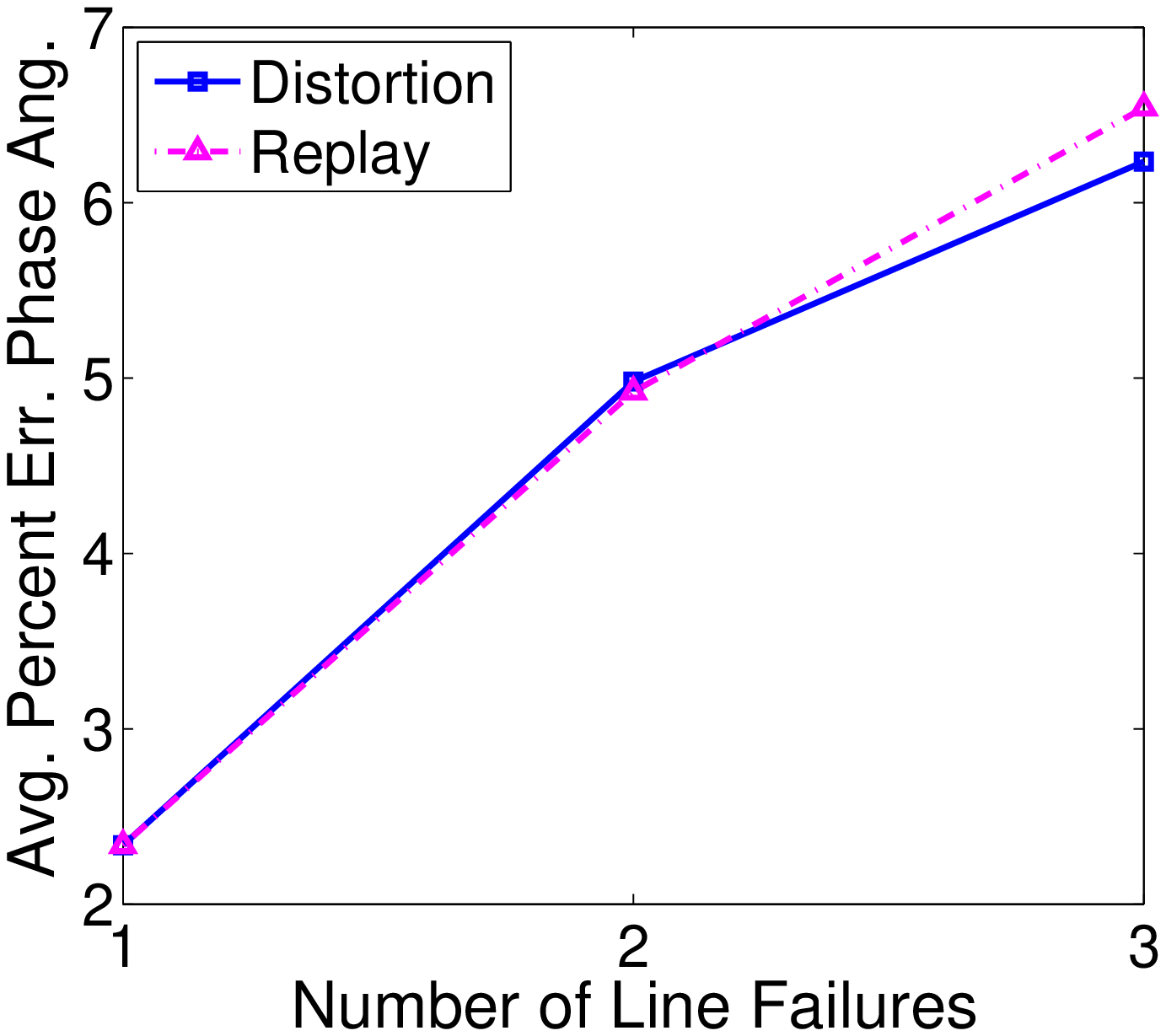}
\vspace*{-0.5cm}
\caption{}
\label{fig:Phase_H1}
\end{subfigure}
\caption{The REACT Algorithm's performance in detecting the attacked area and recovering the information after data distortion and replay attacks on the attacked area $H_1$ accompanied by single, double, and triple line failures. (a) Average number of extra nodes detected as attacked in detecting the attacked area, (b) average number of false positives and negatives in detecting line failures, (c) percentage of the cases with exact line failures detection, (d) running time of the algorithm, (e) average confidence of the solutions, and (f) average error in recovered phase angles.}
\label{fig:eval_H1}
\end{figure}

As we observed in Fig.~\ref{fig:eval_H1}, when the attacked area is relatively small, the REACT Algorithm performs very similarly after the two types of data attack. However, as it can be clearly seen in Fig.~\ref{fig:eval_H2}, it is not the case as the attacked area becomes larger. Before we analyze the results provided in Fig.~\ref{fig:eval_H2}, in order to better show the difficulty of detecting the attacked area after a data replay attack, we depicted in Fig.~\ref{fig:compare_bad_case} one of the analyzed attacked scenarios in Fig.~\ref{fig:eval_H2}. As can be seen in Fig.~\ref{fig:dist}, the REACT Algorithm can exactly detect the attacked area after a data distortion attack on $H_2$ which is accompanied by a triple line failure. However, it may have difficulties detecting the attacked area after a data replay attack on the same area with the same set of line failures. Recall from Subsection~\ref{subsec:data_replay} that the main reason for this is the difficulty of distinguishing between the nodes in $\intt(H)$ and $\intt(\bar{H})$.

\begin{figure}[t]
\vspace*{-0.2cm}
\centering
\begin{subfigure}{0.23\textwidth}
\centering
\includegraphics[scale=0.28]{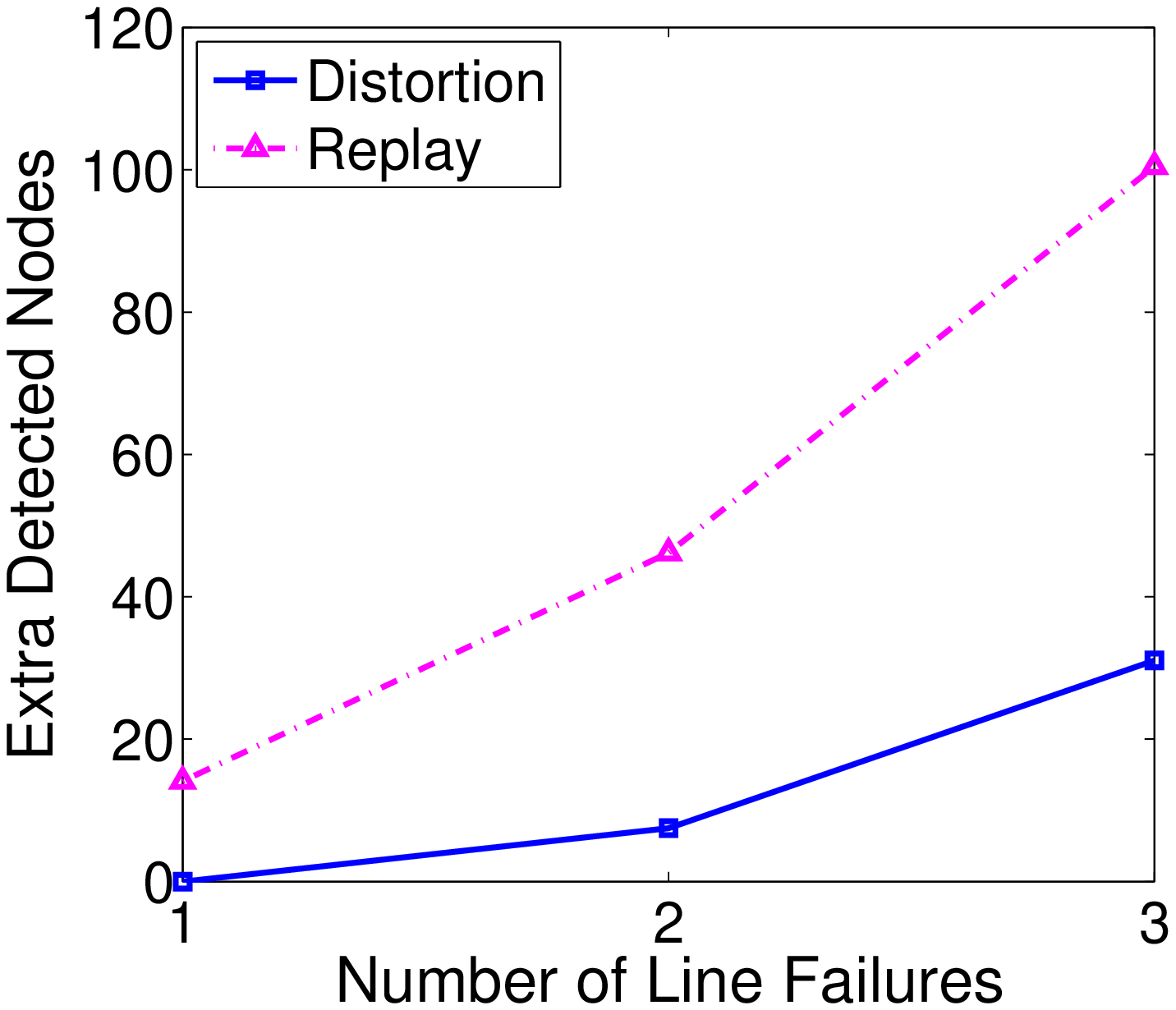}
\vspace*{-0.5cm}
\caption{}
\label{fig:ExtraNodes_H2}
\end{subfigure}
\begin{subfigure}{0.23\textwidth}
\centering
\includegraphics[scale=0.28]{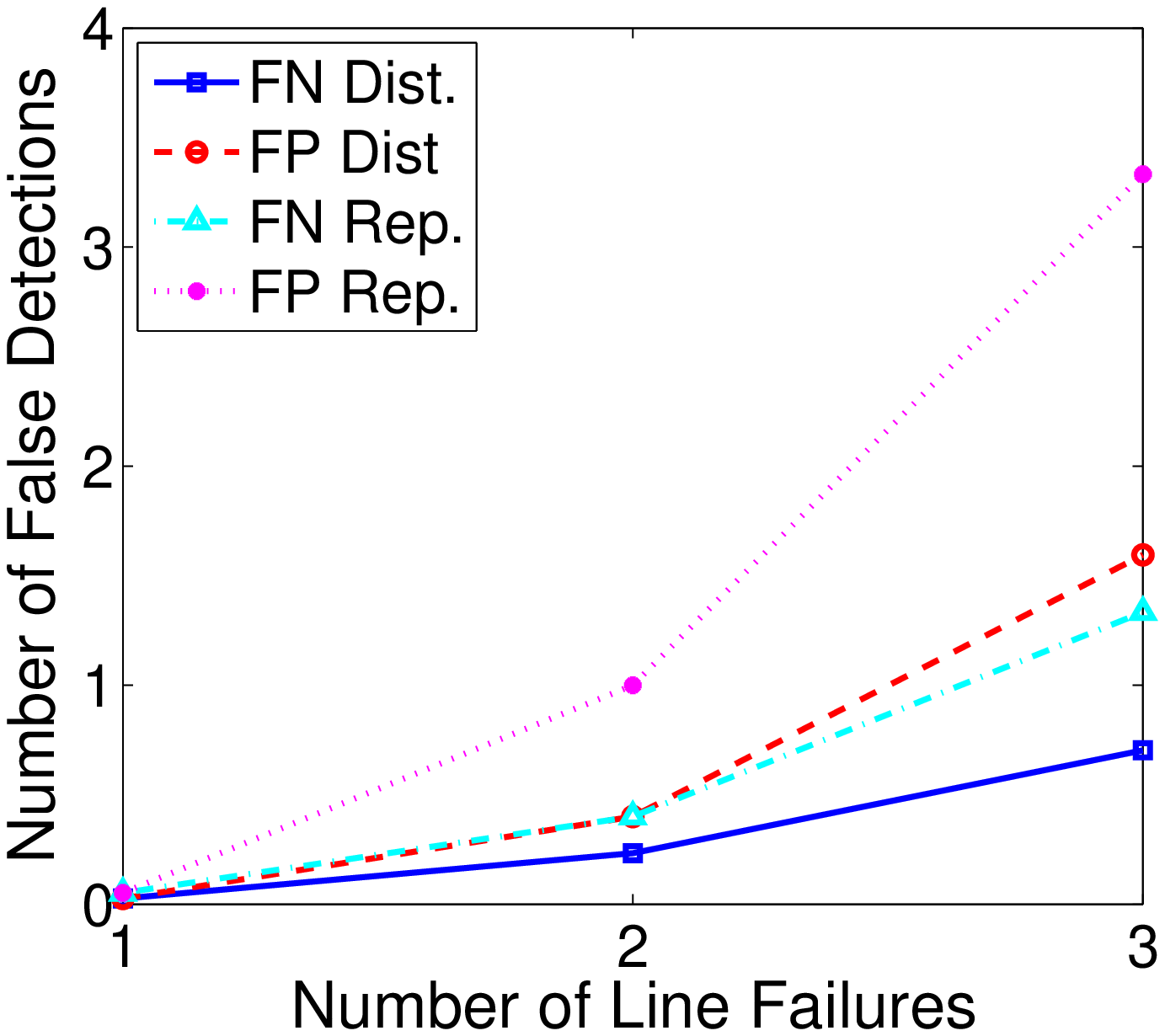}
\vspace*{-0.5cm}
\caption{}
\label{fig:FNFP_H2}
\end{subfigure}
\begin{subfigure}{0.23\textwidth}
\vspace*{-0.07cm}
\centering
\includegraphics[scale=0.28]{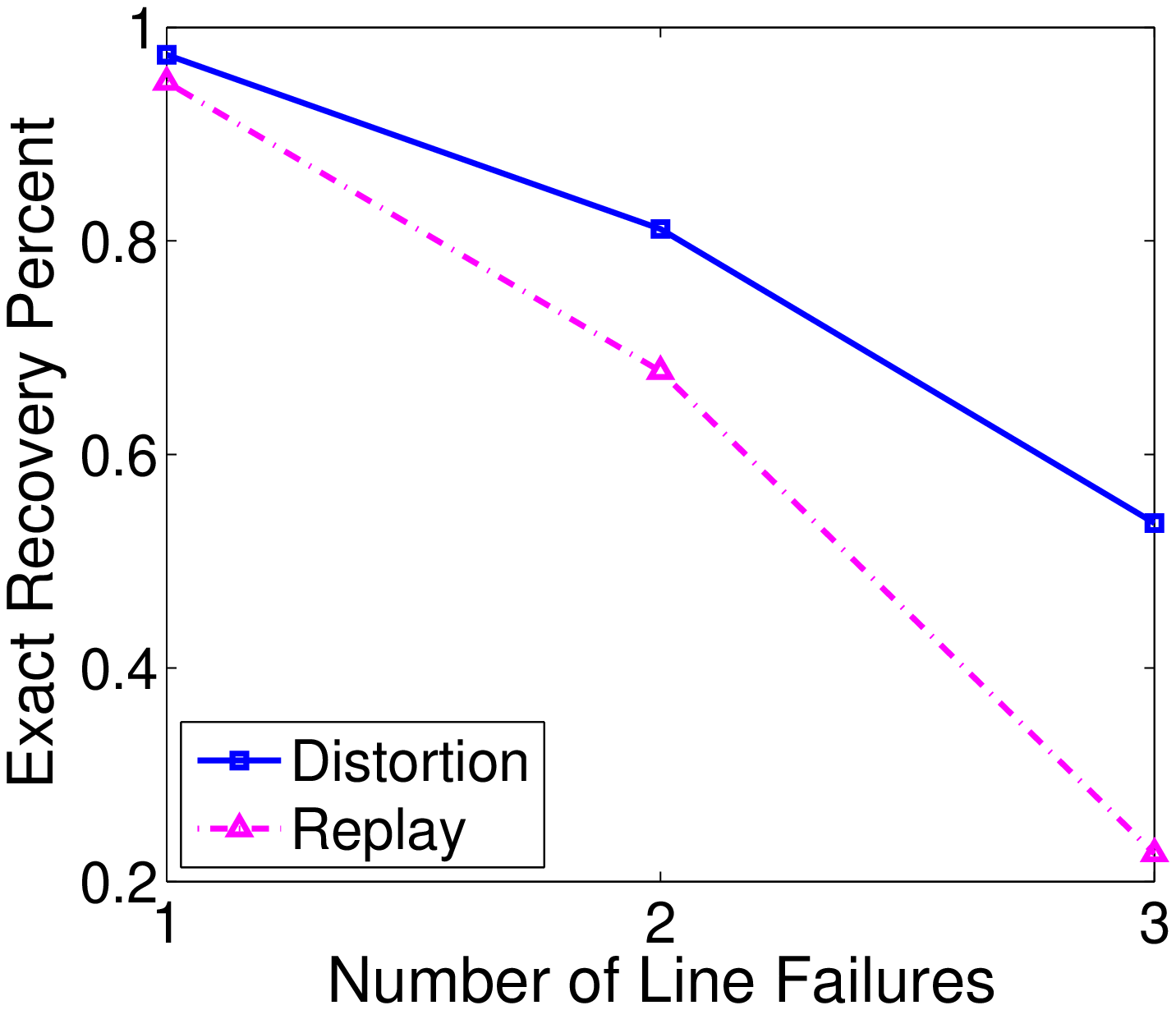}
\vspace*{-0.5cm}
\caption{}
\label{fig:Exact_Rec_H2}
\end{subfigure}
\begin{subfigure}{0.23\textwidth}
\vspace*{-0.07cm}
\centering
\includegraphics[scale=0.28]{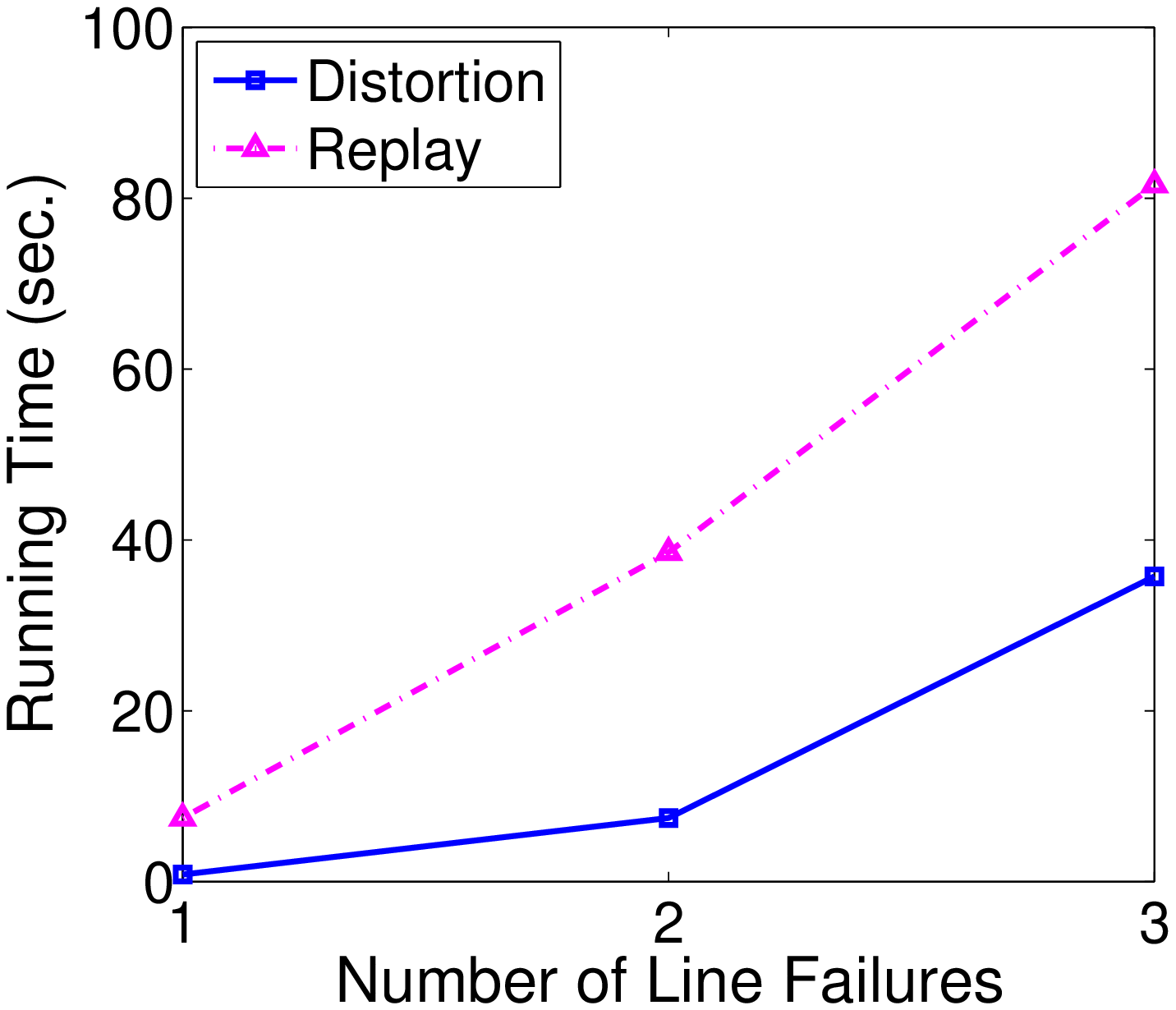}
\vspace*{-0.5cm}
\caption{}
\label{fig:Run_Time_H2}
\end{subfigure}
\begin{subfigure}{0.23\textwidth}
\vspace*{-0.07cm}
\centering
\includegraphics[scale=0.28]{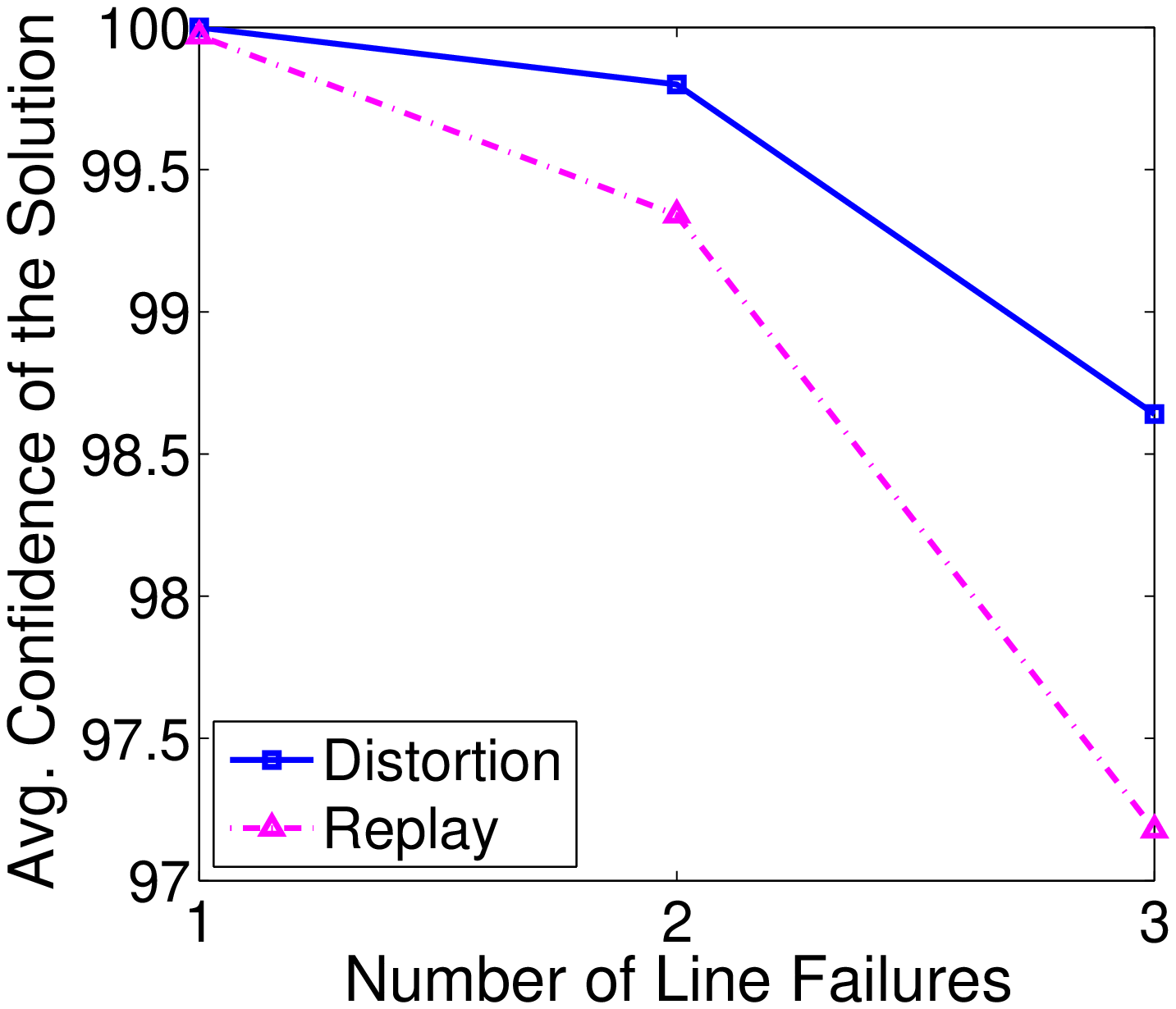}
\vspace*{-0.5cm}
\caption{}
\label{fig:Conf_H2}
\end{subfigure}
\begin{subfigure}{0.23\textwidth}
\vspace*{-0.07cm}
\centering
\includegraphics[scale=0.28]{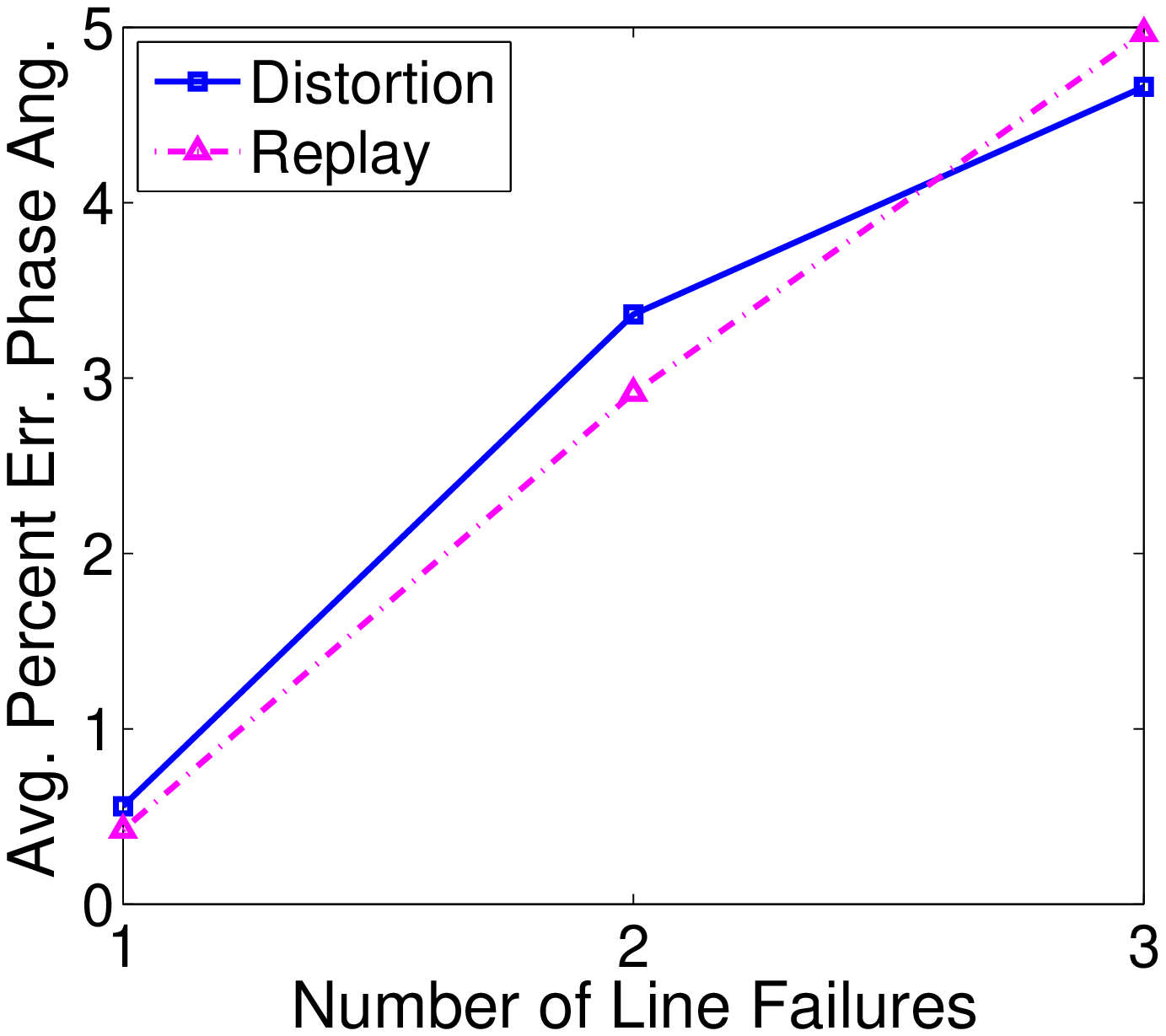}
\vspace*{-0.5cm}
\caption{}
\label{fig:Phase_H2}
\end{subfigure}
\caption{The REACT Algorithm's performance in detecting the attacked area and recovering the information after data distortion and replay attacks on the attacked area $H_2$ accompanied by single, double, and triple line failures. (a) Average number of extra nodes detected as attacked in detecting the attacked area, (b) average number of false positives and negatives in detecting line failures, (c) percentage of the cases with exact line failures detection, (d) running time of the algorithm, (e) average confidence of the solutions, and (f) average  error in recovered phase angles.}
\label{fig:eval_H2}
\end{figure}

Fig.~\ref{fig:ExtraNodes_H2} shows the extra nodes that are incorrectly detected by the REACT Algorithm as part of the attacked area. As can be seen, in the case of the data distortion attack, the number of line failures do not significantly affect the performance of the REACT Algorithm. However, in the case of the replay attack, as the number of line failures within the attacked area increases, the REACT Algorithm provides less accurate approximation of the attacked area.

Despite its difficulty in detecting the attacked area after a data replay attack, Figs.~\ref{fig:FNFP_H2} and \ref{fig:Exact_Rec_H2} demonstrate that the REACT Algorithm detects the line failures relatively accurately. For example, the REACT Algorithm accurately detects the single and double line failures in 95\% and 65\% of the cases, respectively.

As can be seen in Fig.~\ref{fig:Run_Time_H2}, the running time of the REACT Algorithm increases as the size of the attacked area increases. However, it still detects line failures much faster than existing brute force methods~\cite{tate2008line,tate2009double,zhu2012sparse,zhao2012pmu,zhu2014phasor} (to the best of our knowledge there are no methods for detecting the attacked area).

Similar to the previous attack scenario, one can see in Fig.~\ref{fig:Conf_H2} that the confidence of the solutions obtained by the REACT Algorithm are very high. It means that in these attack scenarios, many good solutions exist near the optimal solution.  This demonstrates another difficulty of dealing with recovery of information after a cyber attack on the power grid.

Finally, Fig.~\ref{fig:Phase_H2} indicates that the REACT Algorithm performs very well in recovering the phase angles in this case as well. As can be seen, for both the data distortion and the data replay attacks accompanied by single, double, and triple line failures, the REACT Algorithm recovers the phase angles with less than 5\% error.

Overall, the simulation results in this section demonstrate that the REACT Algorithm performs very well in detecting the attacked area and the line failures when the attacked area is relatively small. As the attacked area becomes larger, the Algorithm still performs very well in detecting the attacked area after a distortion data attack. However, it may face difficulties providing an accurate approximation of the attacked area after a replay attack. Despite this, in both data attack scenarios, it detects line failures relatively well. One of the important observations in this section is that the LIFD module outperforms the methods provided in~\cite{SYZ2015} for detecting line failures with an slight increase in the running time, since it needs to find a solution to (\ref{eq:weighted_simul_detect}) several times instead of once. The results in this section clearly demonstrate that in the attacked areas $H_1$ and $H_2$ that do not have the conditions provided~\cite{SYZ2015}, the LIFD Module can still detect the line failures relatively accurately with less than 20 iterations. In most of theses cases, the LIFD Module detects the line failures within much fewer number of iterations. 
\section{Conclusion}\label{sec:conclusion}
In this paper, we considered a model for cyber attacks on power grids focusing on both data distortion and data reply attacks. We proved that the problem of detecting the line failures after such an attack is NP-hard in general and even when the attacked area is known. However, using the algebraic properties of the DC power flows, we developed the polynomial time REACT Algorithm for approximating the attacked area and detecting the line failures after a cyber attack on the grid. We numerically showed that the REACT Algorithm obtains accurate results when there are few number of line failures and the attacked area is small. We showed that as the attacked area becomes larger and the number of line failures increases, the REACT Algorithm faces some difficulties but still can approximate the attacked area and detect line failures with few false negatives and positives.

We analytically and numerically showed that the data replay attacks are harder to deal with than the data distortion attacks. It is possible for an adversary to devise more sophisticated attacks to further obscure the system's state. Studying more sophisticated attacks and improving the methods to protect the grid against such attacks is part of our future work. Moreover, since the DC power flows only provide an approximation for the more accurate AC power flows, we plan to extend our methods to function under the AC power flow model as well.



\normalsize
\section{Appendix: Omitted Proofs}\label{sec:proofs}
\begin{proof}[Proof of Lemma~\ref{lem:expdistbound}]
Define $s_k : = \sum_{i=1}^k w_i$. It is known that
\begin{equation*}
f_{s_k}(x) = \frac{\lambda e^{-\lambda x} (\lambda x)^{k-1}}{(k-1)!}.
\end{equation*}
Now since $w_i$s are i.i.d.\ random variables, $\sum_{i=k+1}^m w_i\sim s_{m-k}$. Therefore, all we need to compute is $Pr(s_k<s_{m-k})$.
\begin{align}
&Pr(\sum_{i=1}^k w_i<\sum_{i=k+1}^m w_i)= \int_0^\infty\! Pr(s_{m-k}-s_k=a)~da\nonumber\\
&~= \int_0^\infty\!\!\!\int_0^\infty\! Pr(s_k=y) Pr(s_{m-k}=y+a)~ dy~da\nonumber\\
&~= \int_0^\infty\!\!\!\int_0^\infty\! \frac{\lambda e^{-\lambda y} (\lambda y)^{k-1}}{(k-1)!} \frac{\lambda e^{-\lambda(y+a)}(\lambda(y+a))^{m-k-1}}{(m-k-1)!}~ dy~da\nonumber\\
&~= \int_0^\infty\!\! \frac{\lambda^m e^{-2\lambda y}y^{k-1}}{(k-1)!(m-k-1)!}\Big(\!\int_0^\infty\!\!\! e^{-\lambda a} (y+a)^{(m-k-1)}da\Big)dy.\label{eq:proofexp1}
\end{align}
On the other hand, by defining $z:=\lambda (y+a)$, we have:
\begin{align*}
\int_0^\infty\!\!\! e^{-\lambda a} (y+a)^{(m-k-1)}da=\frac{e^\lambda y}{\lambda ^{m-k}}\int_{\lambda y}^{\infty} e^{-z} z^{m-k-1}~dz.\\
\end{align*}
Define $T(n+1):=\int_{\lambda y}^\infty e^{-z} z^n~dz$. Using partial integration:
\begin{align*}
T(n+1) &= \left[-e^{-z} z^n\right]_{\lambda y}^\infty + \int_{\lambda y}^\infty n z^{n-1} e^{-z}~dz\\
&= e^{-\lambda y} (\lambda y)^n +  n T(n)= n!e^{-\lambda y}\sum_{i=0}^{n} \frac{(\lambda y)^{i}}{i!}.
\end{align*}
Using equation above in (\ref{eq:proofexp1}) results in:
\begin{align*}
Pr(\sum_{i=1}^k w_i\!<\!\!\!\!\sum_{i=k+1}^m w_i)&= \int_0^\infty\!\! \frac{\lambda^n e^{-2\lambda y}y^{k-1}}{(k-1)!(m-k-1)!} \frac{e^{\lambda y}}{\lambda^{m-k}}T(m-k)\nonumber\\
&= \frac{\lambda^k}{(k-1)!}\int_0^\infty e^{-2\lambda y} y^{k-1}\Big(\sum_{i=0}^{m-k-1}\frac{(\lambda y)^i}{i!}\Big)dy\nonumber\\
&= \frac{\lambda^k}{(k-1)!}\sum_{i=0}^{m-k-1}\Big(\int_0^\infty e^{-2\lambda y} y^{k-1}\frac{(\lambda y)^i}{i!}~dy\Big)\nonumber.\\
\end{align*}
By defining $x:=2\lambda y$ and using Gamma function:
\begin{align}
Pr(\sum_{i=1}^k w_i\!<\!\!\!\!\sum_{i=k+1}^m w_i)&= \frac{\lambda^k}{(k-1)!}\sum_{i=0}^{m-k-1}\Big(\frac{\lambda^{-k}}{i!2^{i+k}}\int_0^\infty e^{-x} x^{k+i-1}~dx\Big)\nonumber\\
&= \frac{\lambda^k}{(k-1)!}\sum_{i=0}^{m-k-1}\Big(\frac{\lambda^{-k}}{i!2^{i+k}}(k+i-1)!\Big)\nonumber\\
&=\sum_{i=0}^{m-k-1}2^{-i-k} \binom{k+i-1}{i}\nonumber\\
&=2^{-(m-1)}\sum_{i=0}^{m-k-1}2^{(m-1)-(i+k)} \binom{k+i-1}{k-1}.\label{eq:proofexp2}
\end{align}
Now notice that $\sum_{i=0}^{m-k-1}2^{(m-1)-(i+k)} \binom{k+i-1}{i}$ is equal to the total number of subsets of $\{1,\dots,m-1\}$ with at least $k$ elements. The reason is that this summation is equal to the total number of subsets that contain $k+i$ and exactly $k-1$ elements from $\{1,2,\dots,k+i-1\}$. It is easy to verify that by summing this up on $i$, we count all the subsets of $\{1,\dots,m-1\}$ with at least $k$ elements. On the other hand, we can count the total number of subsets of $\{1,\dots,m-1\}$ with at least $k$ elements using the complement rule. The total number of subsets with at least $k$ elements is equal to the total number of subsets minus number of subsets of size 0,1,\dots,$k-1$. Hence,
\begin{equation*}
\sum_{i=0}^{m-k-1}2^{(m-1)-(i+k)} \binom{k+i-1}{k-1} = 2^{m-1}-\sum_{i=0}^{k-1} \binom{m-1}{j}.
\end{equation*}
 Now using the equation above in (\ref{eq:proofexp2}) and using the equality $2^{m-1} = \sum_{i=0}^{m-1} \binom{m-1}{j}$, proves the lemma.
\end{proof}
\begin{proof}[Proof of Corollary~\ref{cor:expdistbound}]
It is easy to see that if $k\leq(m-1)/2$, then $\sum_{j=k}^{m-1}\binom{m-1}{j}\geq 2^{m-2}$. Therefore from Lemma~\ref{lem:expdistbound}, $Pr(\sum_{i=1}^k w_i<\sum_{i=k+1}^m w_i)\geq 1/2$ and there is nothing left to prove. So assume $k=m/2+\Theta(\sqrt{m})$.
It is proved in~\cite[Lemma 10.8]{macwilliams1977theory} that for any $1/2<\alpha<1$,
\begin{equation*}
\frac{2^{n\textbf{H}(\alpha)}}{\sqrt{8n\alpha(1-\alpha)}}\leq \sum_{j=\alpha n}^{n}\binom{n}{k},
\end{equation*}
in which $\textbf{H}(\alpha)=-\alpha \log_2(\alpha)-(1-\alpha)\log_2(1-\alpha)$ is the entropy function. Now to prove Corollary~\ref{cor:expdistbound}, select $n=m-1$, and $\alpha = 1/2+\epsilon$ for $\epsilon = \Theta(1/\sqrt{n})$. First notice that one can show that the Taylor expansion of the entropy function around $1/2$ can be computed as:
\begin{equation*}
\textbf{H}(\alpha) = 1-\frac{1}{2\ln 2}\sum_{i=1}^{\infty} \frac{(1-2\alpha)^{2i}}{i(2i-1)}.
\end{equation*}
Using approximation above, it is easy to see that $\textbf{H}(\alpha)\approx 1-\Theta(\epsilon^2)=1-\Theta(1/n)$. Hence, $2^{n\textbf{H}(\alpha)}=2^{n-\Theta(1)}$. On the other hand,
\begin{align*}
\sqrt{8n\alpha(1-\alpha)} &= \sqrt{8n(1/2+\epsilon)(1/2-\epsilon)} = \sqrt{8n(1/4-\epsilon^2)}\\
& = \sqrt{2n-\Theta(1)}\approx \Theta(\sqrt{n}).
\end{align*}
Hence, by replacing $n$ by $m-1$ and using Lemma~\ref{lem:expdistbound}, one can verify:
\begin{equation*}
Pr(\sum_{i=1}^k w_i<\sum_{i=k+1}^m w_i)= \Omega(\frac{1}{\sqrt{m}}).
\end{equation*}
\end{proof} 

\section*{Acknowledgement}
This work was supported in part by DARPA RADICS under
contract \#FA-8750-16-C-0054, NSF grants CCF-1423100 and CCF-1703925, funding from the U.S. DOE
OE as part of the DOE Grid Modernization Initiative, and DTRA grant HDTRA1-13-1-0021.
\bibliographystyle{abbrv}
\bibliography{bib_REACT}
\end{document}